\newcolumntype{d}[1]{D{.}{.}{#1}}
\newcommand{\mcone}[1]{\multicolumn{1}{c}{#1}}
\setlist[enumerate]{itemsep=2pt,topsep=3pt}
\setlist[itemize]{itemsep=2pt,topsep=3pt}
\setlist[enumerate,1]{label={\upshape (\roman*)}}
\newcommand*{\fancyrefprlabelprefix}{pr}
\newcommand{\Frefprname}{Proposition}
\DeclareMathOperator{\Cov}{Cov}
\DeclareMathOperator{\EE}{\mathbbm E} % Expectations
\DeclareMathOperator{\diff}{\mathrm{d}} % d for integrals
\newcommand{\setntn}[2]{ \{ #1 : #2 \} }
\newcommand{\iidsim}{\stackrel {\textrm{ {\sc iid }}} {\sim} }
\newcommand{\1}{\mathbbm 1}
\renewcommand{\epsilon}{\varepsilon}
\renewcommand{\phi}{\varphi}
\newcommand{\cC}{\mathscr C}
\newcommand{\bB}{\mathscr B}
\newcommand{\XX}{\mathsf X}
\newcommand{\ZZ}{\mathsf Z}
\renewcommand{\SS}{\mathsf S}
\newcommand{\RR}{\mathbbm R}
\newcommand{\NN}{\mathbbm N}
\providecommand{\inner}[1]{\left\langle{#1}\right\rangle}
\theoremstyle{plain}
\newtheorem{theorem}{Theorem}[section]
\newtheorem{corollary}{Corollary}[section]
\newtheorem{lemma}{Lemma}[section]
\newtheorem{proposition}{Proposition}[section]
\theoremstyle{definition}
\newtheorem{example}{Example}[section]
\newtheorem{remark}{Remark}[section]
\newtheorem{assumption}{Assumption}[section]
\DeclareMathOperator{\diag}{diag}
\renewcommand{\underline}[1]{\text{\b{$#1$}}} % underbar similar to \bar{}
\newcommand{\me}{\mathrm{e}}
\newcommand{\vertiii}[1]{{\left\vert\kern-0.25ex\left\vert\kern-0.25ex\left\vert #1 
    \right\vert\kern-0.25ex\right\vert\kern-0.25ex\right\vert}}
\newcommand*{\rom}[1]{\expandafter\@slowromancap\romannumeral #1@}
\g@addto@macro{\endabstract}{\@setabstract}
\def\@xfootnote[#1]{%
	\protected@xdef\@thefnmark{#1}%
	\@footnotemark\@footnotetext}
\begin{document}
%\maketitle
\begin{center}
	\LARGE
	Interest Rate Dynamics and Commodity Prices\footnote[$\dagger$]{We 
		thank the editor Guillermo Ordonez, two anonymous referees, and an 
		anonymous associate editor, as well as Erica Perego, John Rust, Liyan 
		Yang, Changhua Yu, Shenghao Zhu, and seminar audiences at CEPII, 
		CUHK-SZ, Peking University, and XJTU for very helpful comments and 
		suggestions. Qingyin Ma gratefully acknowledges the financial support 
		from Natural Science Foundation of China (No.~72003138) and the Project 
		of Construction and Support for High-Level Innovative Teams of Beijing 
		Municipal Institutions (No.~BPHR20220119). \\
		E-mail addresses: \href{mailto:christophe.gouel@inrae.fr}{christophe.gouel@inrae.fr}, \href{mailto:qingyin.ma@cueb.edu.cn}{qingyin.ma@cueb.edu.cn}, and \href{mailto:john.stachurski@anu.edu.au}{john.stachurski@anu.edu.au}.} 
	\par \bigskip \medskip 
	
	\normalsize
	%\authorfootnotes
	Christophe Gouel\textsuperscript{$\diamond$},
	Qingyin Ma\textsuperscript{$\ast$},
	and John Stachurski\textsuperscript{$\star$} \par \bigskip
	
	{\footnotesize
	\textsuperscript{$\diamond$}Université Paris-Saclay, INRAE, AgroParisTech, Paris-Saclay Applied Economics, \par 
	and CEPII, 20 avenue de Ségur, Paris, France \par \smallskip
	\textsuperscript{$\ast$}ISEM, Capital University of Economics and Business\par\smallskip
	\textsuperscript{$\star$}Research School of Economics, Australian National University 
	\par \bigskip}	
	\today 
\end{center}

\vspace{-1em}

\begin{abstract}
  \footnotesize
  \setlength\parindent{0pt}%
  \setlength\parskip{1ex  plus  0.6ex  minus  0.4ex}%
  \noindent
  In economic studies and popular media, interest rates are routinely cited
as a major factor behind commodity price fluctuations. At the same time,
the transmission channels are far from transparent, leading to long-running
debates on the sign and magnitude of interest rate effects. Purely empirical
studies struggle to address these issues because of the complex interactions
between interest rates, prices, supply changes, and aggregate demand. To move
this debate to a solid footing, we extend the competitive storage model to
include stochastically evolving interest rates. We establish general conditions
for existence and uniqueness of solutions and provide a systematic
theoretical and quantitative analysis of the interactions between interest rates
and prices. 

%%% Local Variables:
%%% TeX-master: "cp-theory"
%%% End:

  \textit{Keywords:} commodity prices, time-varying interest rate, competitive storage.
  
  \textit{JEL Classification:} C62, C63, E43, E52, G12, Q02.
  % C62: Existence and Stability Conditions of Equilibrium
  % C63: Computational Techniques; Simulation Modeling
  % E43: Interest Rates: Determination, Term Structure, and Effects
  % E52: Monetary Policy 
  % G12: Asset Pricing • Trading Volume • Bond Interest Rates
  % Q02: Commodity Markets 
\end{abstract}

%\newpage

\section{Introduction}\label{sec:introduction}

Commodity prices are major determinants of exchange rates, government revenue,
the balance of payments, output fluctuations, and inflation \citep[see,
e.g.,][]{byrne2013primary, gospodinov2013commodity, eberhardt2021commodity,
  peersman2022international}.  While some commodity price movements are driven
by idiosyncratic shocks, \citet{alquist2020commodity} find that common factors
explain up to 80\% of the variance of commodity prices \citep[see
also][]{byrne2013primary}. Aggregate factors are particularly important when
considering the impact of commodities on inflation and exchange rates because
such factors induce price comovement in all or many commodities.

Historically, the aggregate factor that has generated the most attention is interest
rates. For example, \citet{frankel2008explanation, frankel2008monetary, frankel2018} has long argued that interest rates are a major driver
of comovements in commodity prices, with rising interest rates decreasing
commodity prices and vice versa. The main argument
relates to cost of carry: higher interest rates reduce demand for
inventories, which exerts downward pressure on commodity prices. At the same
time, it is easy to imagine scenarios where interest rates and commodity prices
are \emph{positively} correlated---for example, when high aggregate demand
boosts both commodity prices and the cost of borrowing (through credit markets
and, potentially, the responses of monetary authorities).

Indeed, empirical studies of the sign and magnitude
of interest rate effects on commodity prices face profound challenges because of the
endogeneity and equilibrium nature of the mechanisms in question.
For example, even if we fully
control for changes in output and demand, rising commodity prices might
themselves trigger a tightening of monetary policy, without any change in output
\citep[see, e.g.,][]{cody1991role}. Conversely, pure monetary shocks affect
commodity markets through various channels (e.g., speculation, aggregate demand,
and supply response) that are hard to disentangle empirically.\footnote{For
    example, a decline in the US interest rate can stimulate global demand
    \citep[see, e.g.,][]{Rame16} and firms' incentive to hold inventories
    \citep[see, e.g.,][]{frankel1986expectations, frankel2008,
frankel2014effects}, increasing commodity prices. An increase in
interest rates works in the opposite direction.} 

These challenges demand a structural model built on firm theoretical foundations
that can isolate the direct effect of interest rates on commodity prices through
each of the channels listed above. The obvious candidate to provide the
necessary structure is the competitive storage model developed by
\citet{samuelson1971stochastic}, \citet{newbery1982optimal},
\citet{wright1982economic}, \citet{scheinkman1983simple}, \citet{deaton1992on,
deaton1996competitive}, \citet{chambers1996theory} and \citet{Cafi15a}, among others. In this
model, commodities are assets with intrinsic value, separate from
future cash flows. The standard version of the model features time-varying
production, storage by forward-looking investors, arbitrage constraints, and
non-negative carryover. Within the constraints of the model, there is a clear
relationship between interest rates, storage, and commodity prices.
\citet{Fama87} show that the relationship between the basis, the spread between
the futures and spot prices, and interest rates is consistent with the structure
of this model.

The main obstacle to applying the standard competitive storage model to the
problem at hand is that the discount rate is constant. The
source of this shortcoming is technical: a constant positive interest rate is
central to the traditional proof of the existence and uniqueness of equilibrium
prices and the study of their properties \citep[see, e.g.,][]{deaton1992on,
deaton1996competitive}. In particular, positive constant rates are used to
obtain contraction mappings over a space of candidate price functions, with the
discount factor being the modulus of contraction.  

\begin{figure}[htb!]
	\centering
	\scalebox{0.7}{\includegraphics{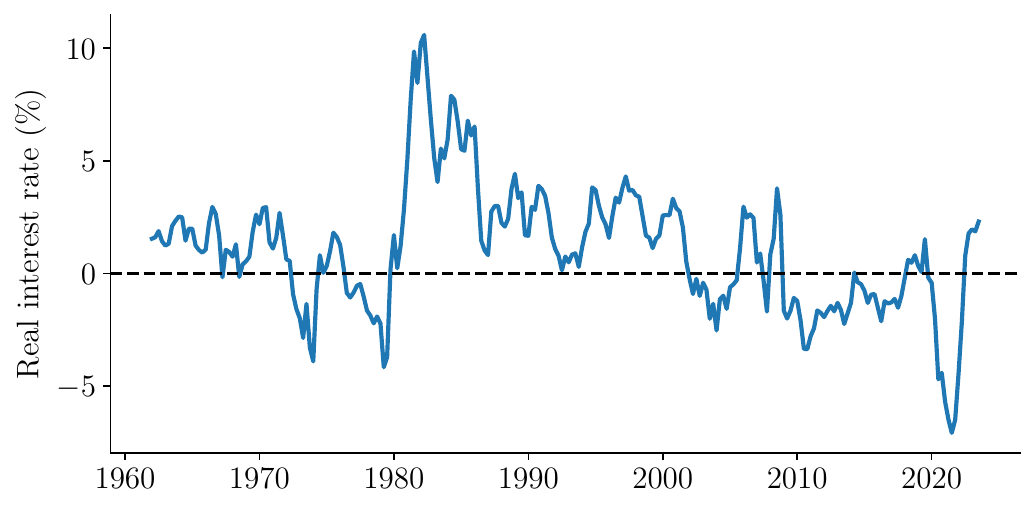}}
	\caption{\label{fig:real_int}The real interest rate over the long run
          (one-year US treasury yield deflated by a measure of expected
          inflation obtained from an autoregressive model estimated on a 30-year
          rolling window). Source: FRED.}
\end{figure}

At the same time, relaxing the assumption of constant discounting is necessary
for analysis of interactions between interest rates and commodity prices.
Without this modification, it is not possible to study how the nature and timing
of shocks to supply, demand, and interest rates affect the sign and
magnitude of changes in commodity prices. Moreover, allowing for state-dependent
discounting brings the model closer to the data, since real interest rates 
exhibit large movements over time, as shown in
\fref{fig:real_int}.

One difficulty with relaxing the assumption of constant interest rates in the
competitive storage model is that negative real interest rates cannot be
ignored, as shown in \fref{fig:real_int}. If interest rates can be
sufficiently negative for sufficiently long periods, then the model will have no
finite equilibrium, due to unbounded demand for inventories. Thus, developing a
model that can handle realistic calibrations requires accommodating negative
yields on risk-free bonds in some states of the world, while providing
conditions on these states and the size of the yields such that the model
retains a well-defined and unique solution. 

In this paper, we extend the competitive storage model to include
state-dependent discounting and establish conditions under which a unique
equilibrium price process exists. These conditions allow for both positive and
negative discount rates, while also providing a link between the asymptotic
return on risk-free assets with long maturity, the depreciation rate of the
commodity in question, and the existence and uniqueness of solutions. Under
these conditions, we show that the equilibrium solution can be computed via a
globally convergent algorithm and characterize the continuity
and monotonicity properties of the equilibrium objects. We also develop an
endogenous grid algorithm for computing equilibrium objects efficiently.

With these results in hand, we examine the effect of interest rates on commodity
prices from a theoretical and quantitative perspective. We show that, in some
settings, interest rates and commodity prices can be positively correlated, such
as when shocks shift up both interest rates and aggregate demand. Nevertheless,
we are able to identify relatively sharp conditions under which a negative
correlation is realized and analyze the impact of interest rates on commodity
prices in depth. These conditions require that the exogenous state follows a
monotone Markov process that is independent across dimensions and has a
non-negative effect on interest rate and commodity output. The independence
restriction cannot be dropped: if different exogenous states are
contemporaneously correlated, then the relationship between interest rates and
commodity prices can be reversed.

On the quantitative side, we study the impulse response functions (IRFs) of
commodity price, inventory, and price volatility in response to an interest rate
shock for suitable structural parameters. We use these IRFs to explore the
speculative and global demand channels. The former examines the role of
speculators in the physical market, whose incentives to hold inventories are
affected by fluctuations in interest rates. The latter studies the impact of
exogenous interest rate shocks on commodity demand through their effects on
economic activity. While both channels have been frequently suggested in the
literature, their analysis remains significantly under-explored to date. To
handle the nonlinearities of the competitive storage model, we follow the
methodology of \citet{Koop96} and treat IRFs as state-and-history-dependent
random variables.

The simulated IRFs show that prices fall immediately after a positive interest
rate shock and slowly converge to their long-run value, with a more pronounced
decline and a slower convergence when the demand channel is active. The behavior
of inventory dynamics is nuanced. While inventories typically decrease
post-shock due to higher cost of carry, they may later rise when the demand
channel is active. This behavior is caused by the lower spot prices resulting
from reduced demand, which creates profit opportunities through
storage. Overall, inventories tend to return to their long-run average more
slowly than prices. Furthermore, price volatility exhibits sensitivity to
inventory dynamics: a larger response in inventory tends to generate an
inversely larger response in price volatility. Finally, the magnitude and
overall pattern of the IRFs depend substantially on the market supply and
interest rate regimes.\footnote{These findings suggest that postulating a
  uniform effect of monetary shocks across different market supply and interest
  rate scenarios could bias empirical analysis.}

Our work has some implications concerning commodity futures. For example,
following \citeauthor{Gard76}'s (\citeyear{Gard76}) methodology, it has become
common to use futures prices as a proxy for expected prices \citep[see,
e.g.,][]{Goue22a}. However, our findings indicate that this substitution is
invalid within the framework outlined in this study. As elucidated by
\citet{Cox81}, forward and futures prices diverge in the presence of stochastic
interest rates if interest rates and commodity prices are correlated, which is
the case here---particularly under the global demand channel.

Regarding existing literature on interest rates and commodity prices, Jeffrey
Frankel has made numerous empirical and theoretical contributions to this topic,
focusing on how commodity prices overshoot their long-run target after a shock
due to their inherent price flexibility \citep{Fran85, frankel1986expectations,
  frankel2008, frankel2014effects}. This literature tends to find a negative
effect of interest rate increases on commodity prices in the short
\citep{Rosa14, scrimgeour2015commodity} and medium run \citep{Anzu13,
  harvey2017long}.\footnote{An exception is \citet{Kili22}, who find no effect
  of real interest rate movements on oil prices.} The macroeconomic studies of
\citet{christiano1999monetary} and \citet{bernanke2005measuring} have also found
a negative relationship between interest rates and commodity prices. Moreover,
interest rates affect not only the level of commodity prices, but also their
cross-correlation and volatility \citep{gruber2018interest}. Compared to
these studies, the methodology developed here allows for a more systematic
analysis of transmission mechanisms.

This work also intersects with other studies that examine the theoretical
relationship between interest rate fluctuations and commodity prices, including
\citet{arseneau2013commodity}, \citet{Basa16}, \citet{Tume16}, and
\citet{Boda21}. However, \citet{Basa16} omit the nonnegativity constraint on
storage, while \citet{Tume16} and \citet{Boda21} overlook the nonlinearity of
storage and impact of large shocks by approximating their model around a steady
state with positive stocks. Although \citet{arseneau2013commodity} admit
nonlinearity, they adopt a limited stochastic structure where commodity
production shocks are the only source of uncertainty. In contrast to these
studies, we establish a comprehensive theory that avoids these simplifications
and, in addition, allows us to jointly handle realistic depreciation rates and
interest rate processes, which are crucial for accurately representing commodity
price dynamics.\footnote{Our framework admits occasionally negative interest
  rates, in line with real-world outcomes. While some earlier models also allow
  negative interest rates, the relevant parameterizations are not empirically
  plausible, since, in these models, negative interest rates must be offset by
  excessively large depreciation rates in order to obtain an equilibrium.}
Moreover, we furnish a general theory on the existence and uniqueness of
equilibrium solutions in this framework.\footnote{\label{fn:finance}While stochastic interest rates are
  relatively novel in the storage model framework, work in the finance
  literature has shown that commodity pricing models benefit from incorporating
  stochastic convenience yields or stochastic interest rates \citep[see,
  e.g.,][]{Gibs90, Schw97, Casa05a}. This literature focuses on questions
  orthogonal to our interests, such as term structure. One bridge between the
  approaches is the study by \citet{Rout00}, which suggests that within a
  storage framework akin to the one we are exploring, convenience yields
  naturally arise from the interaction of supply, demand, and storage dynamics.
  We also note recent studies that examine commodity financialization and
  speculation through financial derivatives, including works by \citet{Basa16},
  \citet{baker2021financialization}, and
  \citet{goldstein2022commodity}. However, we do not pursue these topics in our
  paper.} We are able to account for the nonlinearity of storage and a rich
stochastic structure because, contrary to these
previous studies, our model is not a general equilibrium model and interest
rates are exogenous.

On the empirical side, an extensive literature has analyzed the empirical
validity of the storage model \citep[e.g.,][]{deaton1996competitive,
  cafiero2011the, Cafi15a, Goue22a}. This literature focuses on idiosyncratic
shocks and neglects shocks to storage costs. In contrast, we study the role of
aggregate shocks on storage costs, providing a theoretical analysis of
conditions under which interest rates negatively affect commodity prices
and a quantitative analysis on the impact of interest rate shocks through
speculative and global demand channels.

From a technical perspective, our work overlaps with recent work on household
and consumption problems with state-dependent discounting. For example,
\citet{ma2020income} use Euler equation methods to obtain existence and
uniqueness of solutions to an optimal savings problem in a setting where the
subjective discount rate is state-dependent. \citet{stachurski2021dynamic} and
\citet{toda2021perov} study similar problems.  Like those papers, we tie
stochastic discounting to long-run ``eventual'' contraction methods. Unlike
those papers, we apply eventual contraction methods to commodity pricing
problems.

% Roadmap
The rest of the paper is organized as follows. \Fref{sec:or} formulates a
rational expectations competitive storage model with time-varying discounting
and discusses the existence, uniqueness, and computability of the equilibrium
solutions. Sections~\ref{sec:itheory} and~\ref{sec:quant} examine the role of
interest rates on commodity prices from a theoretical and quantitative
perspective, respectively. \Fref{sec:concl} concludes. Proofs, descriptions of
algorithms, and counterexamples can be found in the appendices.

\section{Equilibrium Prices}\label{sec:or}

This section formulates the competitive storage model with time-varying
discounting and discusses conditions under which existence and uniqueness of the
equilibrium pricing rule hold. 

\subsection{The Model}\label{sec:model}

Let $I_t \geq 0$ be the inventory of a given commodity at time $t$, and let
$\delta \geq 0$ be the instantaneous rate of stock deterioration. The cost of
storing $I_t$ units of goods from time $t$ to time $t+1$, paid at time $t$, is
$k I_t$, where $k\geq 0$. Let $Y_t$ be the output of the commodity. Let $X_t$ be
the total available supply at time $t$, which takes values in $\XX \coloneq
[b,\infty)$, where $b \in \RR$, and is defined by
\begin{equation}
  \label{eq:Xdef}
  X_t \coloneq \me^{-\delta} I_{t-1} + Y_t.
\end{equation}
Let $p \colon \XX \to \RR$ be the inverse demand function. We assume that $p$ is
continuous, strictly decreasing, and bounded above.\footnote{We impose an upper
bound to simplify exposition. In Appendix~\ref{sec:proof-sect-refs}
we show that unbounded demand functions can also be treated, and
the theory below still holds.}  Let $P_t$ be the market price at time $t$.
Without inventory, $P_t = p(Y_t)$. In general, market equilibrium requires
that total supply equals total demand (sum of the consumption and the
speculation demand), equivalently,
\begin{equation}\label{eq:equilib}
  X_t = p^{-1}(P_t) + I_t.
\end{equation}
An immediate implication of~\eqref{eq:equilib} is that $P_t \leq p(b)$ and 
\begin{equation}\label{eq:equilibimp}
  P_t \geq p(X_t),
  \;\;
  \text{with equality holding when $I_t = 0$}.
\end{equation}
Let $M_{t+1}$ be the real one-period stochastic discount factor applied by
investors at time $t$. The price process $\{P_t\}$ is restricted by
\begin{equation}\label{eq:arbcon}
  P_t \geq \me^{-\delta} \EE_t M_{t+1} P_{t+1} - k, 
  \;\;
  \text{with equality holding if $I_t > 0$  and $P_t < p(b)$}.
\end{equation}
In other words, per-unit expected discounted returns from storing the commodity
over one period cannot exceed the per-unit cost of taking that position.

Combining~\eqref{eq:equilibimp} and~\eqref{eq:arbcon} yields\footnote{The
minimization over $p(b)$ in~\eqref{eq:foc} is required due to the generic
stochastic discounting setup.  As can be seen below, our theory allows for large
and highly persistent discounting process (e.g., arbitrarily long sequences of
negative low interest rates under risk neutrality), in which case
$\me^{-\delta} \EE_t M_{t+1} > 1$ with positive probability, thus the
marginal reward of speculation, $\me^{-\delta} \EE_t M_{t+1} P_{t+1} - k$, can 
be larger than $p(b)$. The extra minimization operation is then required to
meet the equilibrium condition $P_t \leq p(b)$.

If $\me^{-\delta} \EE_t M_{t+1} P_{t+1} - k > p(b)$, the equilibrium condition 
implies that $P_t \equiv p(b) < \me^{-\delta} \EE_{t}M_{t+1}P_{t+1} - k$. In 
essence, price reaches its upper bound, and the non-arbitrage condition is 
violated. As a consequence, investors are incentivized to maintain inventories 
to exploit arbitrage opportunities. If this pattern persists, investments in 
inventory could grow arbitrarily large.\label{fn:nec_min}}
\begin{equation}
  \label{eq:foc}
  P_t = \min \left\{
    \max \left\{ 
      \me^{-\delta} \EE_t M_{t+1} P_{t+1} - k, \, p (X_t)
    \right\}, \, p(b)
  \right\}.
\end{equation}
Both $\{M_t\}$ and $\{Y_t\}$ are exogenous, obeying
\begin{equation}\label{eq:qy_func}
  M_t = m(Z_t, \epsilon_t) \quad \text{and} \quad  Y_t = y(Z_t, \eta_t),
\end{equation}
where 
$m$ and $y$ are Borel measurable functions satisfying $m \geq 0$ and $y \geq b$,
$\{Z_t\}$ is a time-homogeneous irreducible Markov chain (possibly
    multi-dimensional) taking values in a finite set $\ZZ$, and
the innovations $\{\epsilon_t\}$ and $\{\eta_t\}$ are {\sc iid} and
    mutually independent.
%\end{enumerate}

\begin{example}\label{ex:MYsetup}
  The setup in~\eqref{eq:qy_func} is very general and allows us to model both
  correlated and uncorrelated $\{M_t,Y_t\}$ processes. In particular, it does
  not impose that $\{M_t\}$ and $\{Y_t\}$ are driven by a \textit{common} Markov
  process, nor does it restrict that they are mutually dependent. Consider for
  example $Z_t = (Z_{1t},Z_{2t})$, where $\{Z_{1t}\}$ and $\{Z_{2t}\}$ are
  mutually independent, possibly multi-dimensional Markov processes,
  and $M_t = m(Z_{1t}, \epsilon_t)$ and $Y_t = y(Z_{2t}, \eta_t)$.
  In this case, $\{M_t\}$ and $\{Y_t\}$ are mutually independent, although they
  are autocorrelated. If in addition $\{Z_{1t}\}$ (resp., $\{Z_{2t}\}$) is {\sc
  iid} or does not exist, then $\{M_t\}$ (resp., $\{Y_t\}$) is {\sc iid}.
  Obviously, these are all special cases of~\eqref{eq:qy_func}. More examples
  are given in Section~\ref{sec:itheory} below.
\end{example}

Below, the next-period value of a random variable $X$ is denoted by $\hat X$. In
addition, we define $\EE_{z} \coloneq \EE \,(\, \cdot \mid Z=z)$ and assume throughout
that
\begin{equation}\label{eq:ezp}
  \me^{-\delta} \EE_z \hat M p(\hat Y) - k > 0
  \;\text{ for all }\; z \in \ZZ.
\end{equation}
In other words, the present market value of future output
covers the cost of storage.

\subsection{Discounting}\label{ss:discount}

To discuss conditions under which price equilibria exist, we need to jointly
restrict discounting and depreciation. To this end, we introduce the
quantity\footnote{Here and below, expectation without a subscript refers to the
stationary process, where $Z_0$ follows the (necessarily unique) stationary
distribution.}
\begin{equation}\label{eq:yy}
  \kappa(M) \coloneq  \lim_{n \to \infty} \frac{- \ln q_n}{n} 
  \quad \text{ where } 
  \quad q_n \coloneq \EE \prod_{t=1}^{n} M_t .
\end{equation}
To interpret $\kappa(M)$, note that, in this economy, $q_n(z) \coloneq \EE_z
\prod_{t=1}^{n} M_t$ is the state $z$ price of a strip bond with maturity $n$.
Since $\{Z_t\}$ is irreducible, initial conditions do not determine long-run
outcomes, so $q_n(z)$ is approximately constant at $q_n$ defined
in~\eqref{eq:yy} when $n$ is large. As a result, we can interpret $\kappa(M)$ as
the asymptotic yield on risk-free zero-coupon bonds as maturity increases
without limit.

In Lemma~\ref{lm:yy} of Appendix~\ref{sec:proof-sect-refs}, we provide a
numerical method for calculating $\kappa(M)$ by connecting it to the spectral
radius of a discount operator.

\begin{assumption}\label{a:opt}
  $\kappa(M) + \delta > 0$.
\end{assumption}

Assumption~\ref{a:opt} is analogous to the classical condition $r+\delta > 0$
found in constant interest rate environment of \citet{deaton1996competitive} and
many other studies.\footnote{In the model with constant risk-free rate $r$, the
discount rate $M_t$ is $1/(1+r)$ at each $t$, so, by the definition
in~\eqref{eq:yy}, we have $\kappa(M) = \lim_{n \to \infty} {n \ln (1+r)}/{n} =
\ln (1+r) \approx r$.}  In the more general setting we consider,
Assumption~\ref{a:opt} ensures sufficient discounting, adjusted by the
depreciation rate, to generate finite prices in the forward-looking
recursion~\eqref{eq:foc}, while still allowing for arbitrarily long sequences of
negative yields in realized time series.

\subsection{Equilibrium}\label{sec:equilibrium}

We take $(X_t, Z_t)$ as the state vector, taking values in $\SS \coloneq \XX \times
\ZZ$. We assume free disposal as in \citet{Cafi15a} to ensure that the equilibrium
prices are non-negative. Conjecturing that a stationary rational
expectations equilibrium exists and satisfies~\eqref{eq:foc}, an
\textit{equilibrium pricing rule} is defined as a function $f^* \colon \SS \to
\RR_+$ satisfying
\begin{equation*}
  f^*(X_t, Z_t) = \min \left\{
    \max \left\{
      \me^{-\delta} \EE_t M_{t+1} f^*(X_{t+1}, Z_{t+1}) - k, 
      \, p(X_t)
    \right\}, \, p(b)
  \right\}
\end{equation*}
with probability one for all $t$, where $X_{t+1}$ is defined by~\eqref{eq:Xdef}
and, recognizing free disposal, storage therein is determined by $I_t =
i^*(X_t,Z_t)$, where $i^*: \SS \to \RR_+$ is the \textit{equilibrium storage
rule}\footnote{Throughout, we adopt the usual convention that $\inf
\emptyset = \infty$.}
\begin{equation}\label{eq:opt_stor}
  i^*(x,z) \coloneq
  \begin{cases}
    x - p^{-1} [f^*(x,z)], & \text{if }\, x < x^*(z)\\
    x^*(z) - p^{-1}(0), & \text{if }\, x \geq x^*(z)
  \end{cases}
  %	I_t = \begin{cases}
  %   X_t - p^{-1} [f(X_t, Z_t)], & \text{if }\, X_t < x^*(Z_t)\\
  %   x^*(Z_t) - p^{-1}(0), & \text{if }\, X_t \geq x^*(Z_t)
  %	\end{cases}
\end{equation}
with
\begin{equation*}
  x^*(z) \coloneq \inf \left\{ x \in \XX: f^*(x,z) = 0 \right\}. 
\end{equation*}

Let $\cC$ be the space of bounded, continuous, and non-negative functions $f$ on
$\SS$ such that $f(x,z)$ is decreasing in $x$, and $f(x,z) \geq p(x)$ for all
$(x,z)$ in $\SS$. Given an equilibrium pricing rule $f^*$, let
\begin{equation*}
  \bar p(z) \coloneq \min \left\{ 
    \me^{-\delta} \EE_z \hat M f^*(\hat Y, \hat Z) - k, \, p(b) 
  \right\}.
\end{equation*}
The next theorem provides conditions under which the equilibrium pricing rule
exists, is uniquely defined, and gives a sharp characterization of its
analytical properties. 

\begin{theorem}[\textbf{Existence and Uniqueness of Equilibrium Price}]\label{t:opt}
  If Assumption~\ref{a:opt} holds, then a unique equilibrium
  pricing rule $f^*$ exists in $\cC$. Furthermore,
  \begin{enumerate}
  \item $f^*(x,z) = p(x)$ if and only if 
    $x \leq p^{-1} [\bar p(z)]$,
    
  \item $f^*(x,z) > \max \{ p(x), 0\}$ if and only if 
    $p^{-1}[\bar p(z)] < x < x^*(z)$,
    
  \item $f^*(x,z) = 0$ if and only if $x \geq x^*(z)$, and
    
  \item $f^*(x,z)$ is strictly decreasing in $x$ when 
    it is strictly positive and $\me^{-\delta} \EE_z \hat M < 1$.
  \end{enumerate}
  % If Assumption~\ref{a:opt} holds, then $T$ is well defined on the function space $\cC$ and the following statements hold: 
  % \begin{enumerate}
  % \item $T$ has a unique fixed point $f^*$ in $\cC$.
  % \item The fixed point $f^*$ is the unique equilibrium pricing rule in $\cC$.
  % \item For each $f$ in $\cC$, we have $T^k f \to f^*$ as $k \to \infty$.
  % \end{enumerate}
\end{theorem}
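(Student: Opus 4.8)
The plan is to realize $f^*$ as the fixed point of an operator on $\cC$ and to exploit an \emph{eventual contraction} argument, since a single application of the Bellman-type operator need not contract when $\me^{-\delta}\EE_z\hat M$ can exceed one for some $z$. Define $T\colon \cC\to\cC$ by
\begin{equation*}
  (Tf)(x,z) \coloneq \min\left\{\max\left\{\me^{-\delta}\EE_z \hat M f(\hat X,\hat Z) - k,\; p(x)\right\},\; p(b)\right\},
\end{equation*}
where $\hat X = \me^{-\delta} i(x,z) + \hat Y$ with $i(x,z)$ the storage implied by $f$ as in~\eqref{eq:opt_stor}. First I would check that $T$ maps $\cC$ into itself: boundedness follows from $p(b)$ as the cap and $0$ as a floor (using free disposal and~\eqref{eq:ezp}); continuity and monotonicity in $x$ follow because $i(x,z)$ is continuous and nondecreasing in $x$ and $p$ is continuous and decreasing; the constraint $Tf \geq p(x)$ is built into the $\max$. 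Then I would show $T$ is monotone ($f\le g\Rightarrow Tf\le Tg$) and satisfies a discounting-type subadditivity, $T(f+c) \le Tf + \me^{-\delta}\EE_z\hat M\, c$ pointwise, so that iterating $n$ times contracts the sup-difference by roughly $\EE_z \prod_{t=1}^n M_t$; by the definition of $\kappa(M)$ in~\eqref{eq:yy} and Assumption~\ref{a:opt}, $\me^{-n\delta} q_n(z)\to 0$ geometrically, so some iterate $T^N$ is a contraction on the (complete) metric space $(\cC,\|\cdot\|_\infty)$. Banach's fixed point theorem (applied to $T^N$, then noting the fixed point of $T^N$ is fixed by $T$) yields a unique $f^*\in\cC$, and by construction $f^*$ solves~\eqref{eq:foc} with storage given by~\eqref{eq:opt_stor}.

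The hard part will be turning the abstract eventual-contraction bound into one that genuinely controls $\|T^n f - T^n g\|_\infty$ uniformly in the state, because the "discount factor" $\me^{-\delta}\EE_z\hat M$ is state-dependent and its $n$-fold composition is the kernel $z\mapsto \me^{-n\delta}\EE_z\prod_{t=1}^n M_t$, which is not literally a product of scalars. Here I would invoke Lemma~\ref{lm:yy}: the relevant operator is the discount operator $L$ with $(Lh)(z) = \me^{-\delta}\EE_z \hat M\, h(\hat Z)$ on the finite state space $\ZZ$, whose spectral radius equals $\me^{-\kappa(M)-\delta}$ (or rather $r(L)=\me^{-\delta}\me^{-\kappa(M)}$), which is strictly less than one under Assumption~\ref{a:opt}. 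Gelfand's formula then gives $\|L^n\|^{1/n}\to r(L)<1$, so $\|L^n\mathbf{1}\|_\infty \le C\rho^n$ for some $\rho<1$; since $|T^nf - T^ng|(x,z) \le (L^n \|f-g\|_\infty \mathbf{1})(z)$ by the monotonicity-plus-subadditivity properties above, $T^N$ is a sup-norm contraction for $N$ large. This is the step where the genuinely new content of the paper resides, replacing the classical single-step contraction of \citet{deaton1996competitive}.

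For the four listed properties, I would argue directly from the fixed-point equation $f^* = Tf^*$. For (i), note $f^*(x,z)=p(x)$ exactly when the inner $\max$ is attained by $p(x)$, i.e.\ when $\me^{-\delta}\EE_z\hat M f^*(\hat X,\hat Z) - k \le p(x)$; when $i^*(x,z)=0$ this inner expectation equals $\me^{-\delta}\EE_z\hat M f^*(\hat Y,\hat Z) - k$, whose min with $p(b)$ is $\bar p(z)$, and since $p$ is a strictly decreasing bijection onto its range the condition becomes $x \le p^{-1}[\bar p(z)]$; conversely on $x\le p^{-1}[\bar p(z)]$ one checks $i^*\equiv 0$ is consistent, closing the iff. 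For (iii), $x^*(z)$ is defined as the infimum of $x$ with $f^*(x,z)=0$, and because $f^*(\cdot,z)$ is decreasing and continuous with $f^*\ge 0$, the zero set is exactly $[x^*(z),\infty)$. Property (ii) is then the complement of (i) and (iii) within the region where $f^*>0$, using that on $p^{-1}[\bar p(z)]<x<x^*(z)$ storage is strictly positive so the arbitrage equation~\eqref{eq:arbcon} holds with equality and delivers $f^*(x,z) = \me^{-\delta}\EE_z\hat M f^*(\hat X,\hat Z) - k > p(x)$ (strict, since we are past the consumption-only region) and $>0$. For (iv), on the interior region $f^*(x,z) = \me^{-\delta}\EE_z\hat M f^*(\hat X,\hat Z)-k$ and $\hat X = \me^{-\delta}i^*(x,z)+\hat Y$ with $i^*(x,z) = x - p^{-1}[f^*(x,z)]$; an increase in $x$ weakly increases $i^*$ (hence weakly decreases $f^*(\hat X,\cdot)$), and if $\me^{-\delta}\EE_z\hat M<1$ a marginal-analysis / implicit-function argument on $f^*(x,z) = \me^{-\delta}\EE_z\hat M f^*(\me^{-\delta}(x-p^{-1}[f^*(x,z)])+\hat Y,\hat Z)-k$ shows the derivative (or difference quotient) is strictly negative: heuristically, raising $x$ by $\Delta$ splits into $\Delta = \Delta c + \Delta s$ (extra consumption versus extra storage), and market clearing plus the contraction factor $<1$ force $\Delta c>0$, which via strict monotonicity of $p$ forces $f^*$ strictly down. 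I would make this rigorous by a monotone-iteration argument: $T$ preserves the subclass of functions strictly decreasing on their positivity set when $\me^{-\delta}\EE_z\hat M<1$, and this subclass is closed under the sup-norm limits that produce $f^*$.
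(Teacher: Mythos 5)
Your overall architecture---define an equilibrium price operator on $\cC$, prove order preservation and a Blackwell-type discounting bound whose $n$-fold iterate is controlled by $\me^{-n\delta}\EE_z\prod_{t=1}^n M_t$, and use the spectral-radius characterization of $\kappa(M)$ to obtain an eventual contraction---is exactly the paper's route (Lemmas~\ref{lm:order_pres}--\ref{lm:discounting} and Theorem~\ref{t:opt_gen}). However, there are two genuine gaps. First, your operator is not correctly specified. You set $\hat X = \me^{-\delta} i(x,z) + \hat Y$ with $i(x,z)$ ``the storage implied by $f$,'' i.e.\ $i(x,z) = x - p^{-1}[f(x,z)]$. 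With that explicit definition $T$ is \emph{not} order preserving: if $f \le g$ with $f < g$ only near $(x,z)$ and $f=g$ at the relevant next-period states, then $i_f(x,z) < i_g(x,z)$, so $f$ is evaluated at a strictly smaller next-period availability than $g$, and since both are decreasing in their first argument this can yield $Tf(x,z) > Tg(x,z)$. The paper instead defines $Tf(x,z)$ \emph{implicitly} as the $\xi$ solving $\xi = \psi(\xi,x,z)$ in~\eqref{eq:sol_euler}, where the storage entering the expectation is $I(\xi,x,z) = x - p^{-1}(\xi)$, i.e.\ determined by the \emph{updated} price. This requires a separate well-definedness step (Proposition~\ref{pr:wel_def}: $\xi \mapsto \psi(\xi,x,z)$ is continuous and decreasing while the left side is increasing in $\xi$, so the intermediate value theorem gives a unique solution), which your proposal omits, and it is precisely this implicit structure that makes order preservation and the bound $T(f+\gamma) \le Tf + \gamma\,\me^{-\delta}\EE_z\hat M$ go through (Lemmas~\ref{lm:order_pres} and~\ref{lm:disc}).

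Second, your proposed rigorization of part~(iv) fails: the class of functions that are strictly decreasing on their positivity region is \emph{not} closed under sup-norm limits (strictly decreasing functions can converge uniformly to a constant), so the ``closed subclass preserved by $T$'' device cannot deliver a strict property of the fixed point. The paper proves~(iv) directly on $f^*$ by contradiction: if $f^* \equiv B > 0$ on a first interval inside $\left(p^{-1}[\bar p(z)], x^*(z)\right)$, the fixed-point equation forces $B \le \me^{-\delta}\EE_z\hat M\, B - k < B$ when $\me^{-\delta}\EE_z\hat M < 1$, a contradiction. Your treatment of (i)--(iii) mirrors the paper's Lemma~\ref{lm:threshold} and is fine in outline.
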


In Appendix~\ref{sec:proof-sect-refs}, we show that the equilibrium pricing rule
is the unique fixed point of an operator defined by the equilibrium conditions
(named as the equilibrium price operator) and can be solved for via successive
approximation. In particular, the equilibrium price operator is an
\textit{eventual contraction mapping} on a suitably constructed candidate space
(which reduces to $\cC$ when the demand function is bounded). This guarantees
existence, uniqueness, and computability of the equilibrium
solutions.\footnote{Since we allow for arbitrarily long sequences of negative
  yields, it is challenging to construct operators that contract in one step,
  since the one-period yield, which is typically required to construct the
  modulus of contraction, can be overly small, violating
  \citeauthor{blackwell1965discounted}'s (\citeyear{blackwell1965discounted})
  sufficient conditions for contraction. To solve this problem,
  Assumption~\ref{a:opt} bounds the asymptotic yield instead of the one-period
  yield, allowing us to construct an $n$-step contraction.\label{fn:nstep}} In
Appendix~\ref{s:alg}, we provide an endogenous grid algorithm that solves
for the equilibrium objects efficiently. In Appendix~\ref{s:necess}, we
show that, in some rather standard settings, Assumption~\ref{a:opt} is necessary
as well as sufficient: no equilibrium price sequence exists if
Assumption~\ref{a:opt} fails.\footnote{\label{fn:5}See Proposition~\ref{pr:modfesol} in
  Appendix~\ref{s:necess}. Note that Theorem~\ref{t:opt} establishes
  existence and uniqueness of stationary equilibria. However, this does not
  preclude the possibility of nonstationary equilibria, including nonstationary
  price sequences and bubbles. For some recent literature on bubbles in asset
  markets, see \citet{barlevy2012rethinking}, \citet{guerron2023bubbles},
  \citet{plantin2023asset}, and \citet{hirano2024bubble}.\label{fn:bubble}}

\begin{figure}%[htb!]
	\centering
	\includegraphics[width=.75\linewidth]{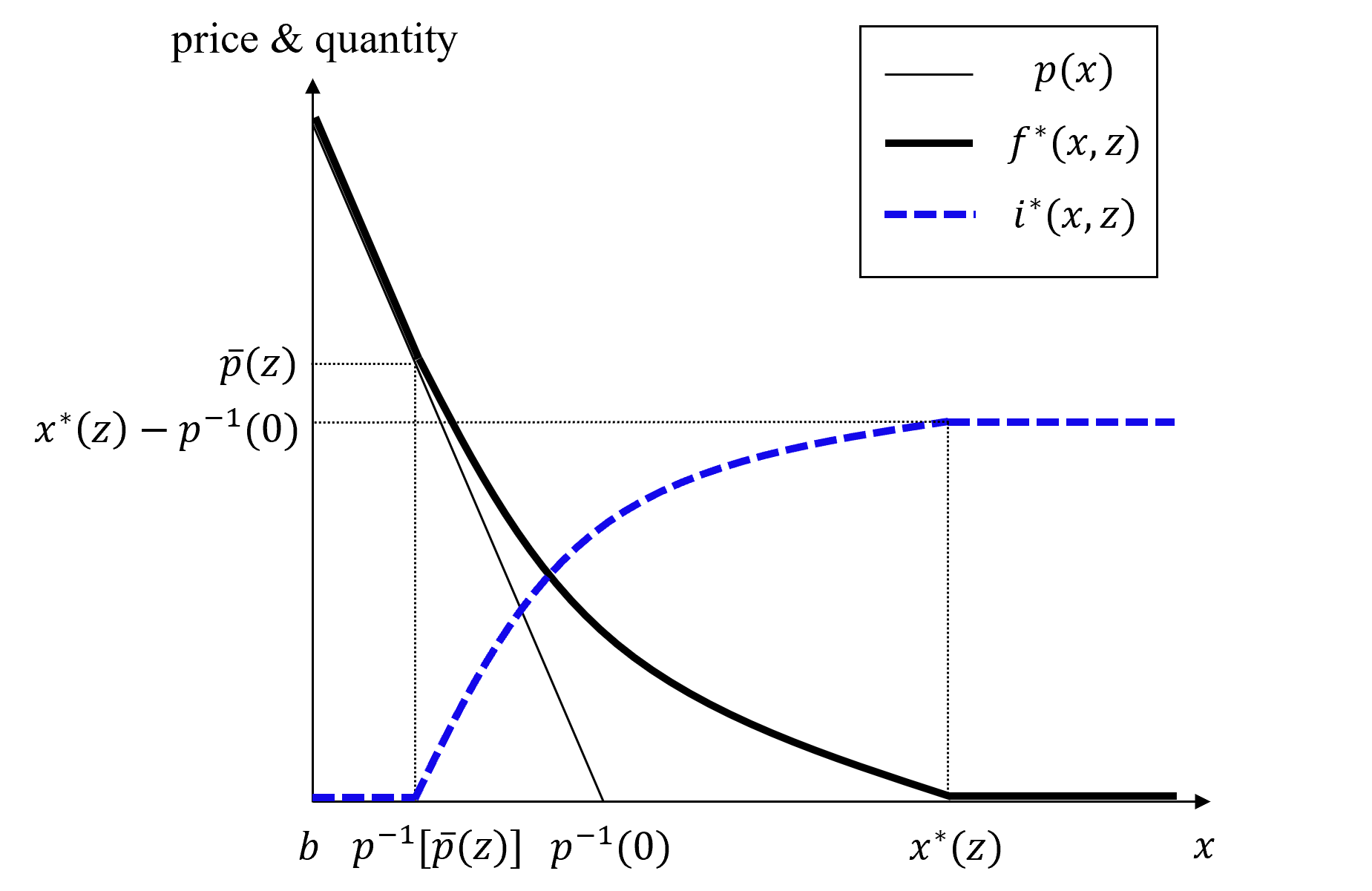}
	\caption{\label{fig:equi_rule}Illustration of the equilibrium price
		function $f^*$ and the equilibrium storage rule $i^*$. $p$ represents
		the inverse demand function, $\bar{p}$ denotes the price threshold at
		which speculators begin holding inventories, and $x^*$ is the
		threshold for free disposal.}
\end{figure}

The next result states the properties of the equilibrium storage rule.

\begin{proposition}[\textbf{Existence and Uniqueness of Equilibrium Storage}]\label{pr:storage}
  If Assumption~\ref{a:opt} holds, then the equilibrium storage rule $i^*(x,z)$
  is increasing in $x$ and continuous. Furthermore,
  \begin{enumerate}
  \item $i^*(x,z) =0$ if and only if $x \leq p^{-1}[\bar p(z)]$,
    
  \item $0< i^*(x,z) <x^*(z) - p^{-1}(0)$ if and only if 
    $p^{-1}[\bar p(z)] < x < x^*(z)$,
    
  \item $i^*(x,z) = x^*(z) - p^{-1}(0)$ if and only if $x \geq x^*(z)$, and
    
  \item $i^*(x,z)$ is strictly increasing in $x$ if 
    $p^{-1}[\bar p(z)] < x < x^*(z)$ and 
    $\me^{-\delta} \EE_z \hat M < 1$. 
  \end{enumerate}
\end{proposition}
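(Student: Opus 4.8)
The plan is to obtain the proposition as a corollary of Theorem~\ref{t:opt}, reading off most of the claims directly from the defining formula~\eqref{eq:opt_stor}; the only step that requires genuine work is the monotonicity of $i^*$.

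First I would dispose of the easy parts. For continuity, note that on $\{x < x^*(z)\}$ the map $x \mapsto x - p^{-1}[f^*(x,z)]$ is continuous, since $f^* \in \cC$ is continuous and $p^{-1}$ is continuous ($p$ being continuous and strictly decreasing); on $\{x \geq x^*(z)\}$ it is constant; and the two branches agree at $x^*(z)$ because $f^*(x,z) \to f^*(x^*(z),z) = 0$ as $x \uparrow x^*(z)$ by Theorem~\ref{t:opt}(iii) (when $x^*(z) = \infty$ only the first branch occurs). Next I would record two facts: $\bar p(z) > 0$, since $f^* \geq p$ gives $\bar p(z) \geq \me^{-\delta}\EE_z \hat M p(\hat Y) - k > 0$ by~\eqref{eq:ezp}; and $x^*(z) > p^{-1}(0)$, since the equilibrium equation~\eqref{eq:foc} at $x^*(z)$, where $f^* = 0$, forces $p(x^*(z)) \leq 0$, i.e.\ $x^*(z) \geq p^{-1}(0)$, and if equality held then $i^*(x^*(z),z) = 0$, so next-period supply is $\hat Y$ and~\eqref{eq:foc} would yield $\me^{-\delta}\EE_z \hat M f^*(\hat Y,\hat Z) - k \leq 0$, contradicting~\eqref{eq:ezp} since $f^* \geq p$. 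Consequently the upper storage bound $x^*(z) - p^{-1}(0)$ is strictly positive and $p^{-1}[\bar p(z)] < p^{-1}(0) \leq x^*(z)$. Now $i^*(x,z) = 0$ is equivalent, for $x < x^*(z)$, to $f^*(x,z) = p(x)$, which by Theorem~\ref{t:opt}(i) holds exactly when $x \leq p^{-1}[\bar p(z)]$ (the range $x \geq x^*(z)$ is excluded since there $i^* = x^*(z) - p^{-1}(0) > 0$); this gives~(i). On the open interval $R$ where $p^{-1}[\bar p(z)] < x < x^*(z)$, Theorem~\ref{t:opt}(ii) gives $f^*(x,z) > \max\{p(x),0\}$, so $f^*(x,z) > p(x)$ forces $p^{-1}[f^*(x,z)] < x$, i.e.\ $i^*(x,z) > 0$; the complementary bound in~(ii) and the converse in~(iii) will follow once monotonicity is established.

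The crux is strict monotonicity of $i^*(\cdot,z)$ on $R$ (off $R$ the rule is constant). The defining formula is unhelpful here: $i^*(x,z) = x - p^{-1}[f^*(x,z)]$ is a difference of two increasing quantities. Instead I would use the two identities valid on $R$: market clearing $f^*(x,z) = p(x - i^*(x,z))$, and the arbitrage equality $f^*(x,z) = \me^{-\delta}\EE_z \hat M f^*(\me^{-\delta} i^*(x,z) + \hat Y, \hat Z) - k$. The latter holds because on $R$ we have $f^*(x,z) > \max\{p(x),0\}$, which by~\eqref{eq:foc} forces $f^*(x,z) = \min\{A, p(b)\}$, where $A$ denotes the right-hand side above; when $A \leq p(b)$ this is exactly the arbitrage equality, and in the remaining sub-case $f^*(x,z) = p(b)$, whence $i^*(x,z) = x - b$ is trivially strictly increasing. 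Given these, take $x_1 < x_2$ in $R$ and suppose for contradiction that $i^*(x_1,z) \geq i^*(x_2,z)$. Then $x_2 - i^*(x_2,z) > x_1 - i^*(x_1,z)$, so market clearing and strict monotonicity of $p$ give $f^*(x_1,z) > f^*(x_2,z)$; but $\me^{-\delta} i^*(x_1,z) + \hat Y \geq \me^{-\delta} i^*(x_2,z) + \hat Y$ pointwise, and $f^*(\cdot,z)$ is decreasing in its first argument with $\hat M \geq 0$, so the right-hand side of the arbitrage equality is no larger at $x_1$ than at $x_2$, i.e.\ $f^*(x_1,z) \leq f^*(x_2,z)$ --- a contradiction. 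Hence $i^*(\cdot,z)$ is strictly increasing on $R$. Patching this with the constant pieces, using continuity, yields monotonicity on $\XX$, the bound $i^*(x,z) < x^*(z) - p^{-1}(0)$ for $x < x^*(z)$, and therefore (ii) and the converse in (iii). Claim (iv) is precisely the strict monotonicity just shown; under the extra hypothesis $\me^{-\delta}\EE_z \hat M < 1$ one has $A \leq \me^{-\delta}\EE_z \hat M\, p(b) - k < p(b)$, so the price-ceiling sub-case does not arise and the arbitrage-equality argument applies throughout $R$, matching the setting of Theorem~\ref{t:opt}(iv).

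I expect the monotonicity step to be the main obstacle: it cannot be read off the closed form for $i^*$, and one must instead exploit the structural fixed-point identity --- that on the storage region $f^*$ equals simultaneously the market-clearing price $p(x-i^*)$ and the discounted expected future price --- so that a hypothetical decrease of $i^*$ would push these two representations in opposite directions. Everything else (matching endpoints, the strict inequalities, the case $x^*(z) = \infty$) is routine bookkeeping.
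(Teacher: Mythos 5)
Your proof is correct and follows essentially the same route as the paper: claims (i)--(iii) and continuity are read off from Theorem~\ref{t:opt} and the definition of $i^*$, and monotonicity is obtained by contradiction from the same two ingredients the paper uses, namely the market-clearing identity $i^*(x,z)=x-p^{-1}[f^*(x,z)]$ and the arbitrage/fixed-point equality together with monotonicity of $f^*$. The only (harmless) difference is that you run the contradiction from the weak hypothesis $i^*(x_1,z)\ge i^*(x_2,z)$ and so obtain strict monotonicity on the storage region directly, whereas the paper first proves weak monotonicity and then derives claim~(iv) from the strict monotonicity of $f^*$ under $\me^{-\delta}\EE_z\hat M<1$.
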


\Fref{pr:storage} indicates that speculators hold inventories if and only if the
market value of the total available supply $p(x)$ is below the decision
threshold $\bar p(z)$. Otherwise, selling all commodities at hand is 
optimal, in which case the equilibrium price is $f^*(x,z)=p(x)$. The
equilibrium price and storage properties are illustrated in \fref{fig:equi_rule} under 
a linear demand function. The equilibrium rules are sketched
for a given exogenous state $z$.

\section{Interest Rates and Prices: Theoretical Results}\label{sec:itheory}

Next we inspect the relationship between interest rates and commodity
prices implied by the model. To this end, we assume that speculators
discount future payoffs according to market prices: 
\begin{equation*}
  M_t = \frac{1}{R_t}, \quad \text{where} \quad R_t \coloneq r(Z_t, \epsilon_t).
\end{equation*}
Hence $r$ is a real-valued non-negative Borel measurable function of the state
process and innovation $\epsilon$. The process $\{R_t\}$ is interpreted as 
the gross real interest rate on risk-free
bonds. We therefore preserve the risk-neutrality assumption of the standard
competitive storage model, while allowing the risk-free rate to be
state-dependent. Throughout this section, we impose the assumptions of
Section~\ref{sec:or}.

\subsection{Correlations}\label{sec:correlations}

We first explore general conditions under which interest rates and commodity
prices are negatively correlated.
As a first step, we state a finding concerning
monotonicity of equilibrium objects with respect to the exogenous states.

\begin{proposition}[\textbf{Monotonicity of Equilibrium Objects w.r.t.\ the Exogenous State}]\label{pr:mono_z}%
  If $r(z, \epsilon)$ and $y(z,\eta)$ are nondecreasing in $z$, and $\{Z_t\}$ is
  a monotone Markov process,\footnote{Monotonicity is defined in terms of
  first-order stochastic dominance (see Appendix~\ref{sec:pf_itheory}).}  then the equilibrium pricing rule $f^*(x,z)$, the
 equilibrium inventory $i^*(x,z)$, and the decision threshold $\bar p(z)$ are all
  decreasing in $z$.
\end{proposition}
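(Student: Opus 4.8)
The plan is to run the monotonicity through the equilibrium price operator $T$ of Appendix~\ref{sec:proof-sect-refs}. Under the standing assumptions of Section~\ref{sec:or}, Theorem~\ref{t:opt} gives the unique $f^*\in\cC$, and by that appendix $f^*=\lim_n T^n f_0$ for any $f_0\in\cC$, with the convergence implying pointwise convergence. Since $M_t=1/R_t$ with $R_t=r(Z_t,\epsilon_t)$ nondecreasing in $z$, the discount factor $m(z,\epsilon)$ is decreasing in $z$; this is the only place the form $M=1/R$ enters. Let $\cC_\downarrow\subseteq\cC$ consist of the $f$ that are, in addition, decreasing in $z$ for the order on $\ZZ$ underlying the monotone Markov assumption (Appendix~\ref{sec:pf_itheory}). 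A pointwise limit of functions decreasing in $z$ is again decreasing in $z$, and $f_0$ may be taken constant in $z$ (say $f_0\equiv p(b)$), so it suffices to prove $T(\cC_\downarrow)\subseteq\cC_\downarrow$; this forces $f^*\in\cC_\downarrow$, i.e.\ $f^*(x,z)$ decreasing in $z$.

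For the inclusion, fix $f\in\cC_\downarrow$ and set $g\coloneq Tf\in\cC$; the task is to show $z\mapsto g(x,z)$ is decreasing for each $x$. Wherever $g(x,z)>0$, market clearing fixes the induced storage at $i=x-p^{-1}[g(x,z)]$, and substituting this into the period conditions~\eqref{eq:foc} exhibits $g(x,z)$ as the unique fixed point of $t\mapsto\Gamma_f(x,z,t)$, where
\begin{equation*}
  \Gamma_f(x,z,t)\coloneq
  \min\left\{
    \max\left\{
      \me^{-\delta}\,\EE_z\!\left[\hat M\,f\!\left(\me^{-\delta}\bigl(x-p^{-1}(t)\bigr)+\hat Y,\ \hat Z\right)\right]-k,\ p(x)
    \right\},\ p(b)
  \right\}.
\end{equation*}
I would then use two monotonicity facts about $\Gamma_f$. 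First, for fixed $(x,t)$ the carryover $\me^{-\delta}(x-p^{-1}(t))$ is fixed, so the integrand, seen as a function of $\hat Z$, is the product $m(\hat Z,\hat\epsilon)\cdot f(\me^{-\delta}(x-p^{-1}(t))+y(\hat Z,\hat\eta),\hat Z)$ of two nonnegative factors each decreasing in $\hat Z$ (because $m(\cdot,\hat\epsilon)$ is decreasing, $y(\cdot,\hat\eta)$ is increasing, and $f$ is decreasing in both arguments); integrating out $(\hat\epsilon,\hat\eta)$ preserves this, and the monotone Markov property then makes $z\mapsto\EE_z[\,\cdot\,]$ decreasing, so $\Gamma_f(x,\cdot,t)$ is decreasing in $z$. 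Second, $\Gamma_f(x,z,\cdot)$ is nonincreasing in $t$: a larger $t$ lowers consumption $p^{-1}(t)$, raises the carryover, and hence lowers the continuation value. With these in hand, for $z\le z'$ with $g(x,z),g(x,z')>0$ the standard comparison for fixed points of antitone maps applies: $g(x,z)=\Gamma_f(x,z,g(x,z))\ge\Gamma_f(x,z',g(x,z))$, and were $g(x,z)<g(x,z')$ we would get $\Gamma_f(x,z',g(x,z))\ge\Gamma_f(x,z',g(x,z'))=g(x,z')>g(x,z)$, a contradiction; so $g(x,z)\ge g(x,z')$. On the complementary region $g$ equals $0$ (free disposal); a parallel threshold argument---obtained by equating the discounted expected marginal value of carryover to the storage cost $k$ and using $f$ decreasing in $z$ with the monotone Markov property---shows the region $\{g=0\}$ expands as $z$ increases, ruling out an upward jump. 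Hence $g\in\cC_\downarrow$.

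The statements for $\bar p$ and $i^*$ then follow from $f^*$ decreasing in $z$. For $\bar p$, the map $\hat z\mapsto m(\hat z,\hat\epsilon)\,f^*(y(\hat z,\hat\eta),\hat z)$ is again a product of nonnegative decreasing functions, so the monotone Markov property makes $z\mapsto\me^{-\delta}\EE_z[\hat M f^*(\hat Y,\hat Z)]-k$ decreasing, hence $\bar p(z)=\min\{\,\cdot\,,p(b)\}$ decreasing. For $i^*$ I would use the three regions of Theorem~\ref{t:opt}/Proposition~\ref{pr:storage}: $i^*(x,z)=0$ on $x\le p^{-1}[\bar p(z)]$; $i^*(x,z)=x-p^{-1}[f^*(x,z)]$ on $p^{-1}[\bar p(z)]<x<x^*(z)$, decreasing in $z$ since $f^*$ is and $p^{-1}$ is decreasing; and $i^*(x,z)=x^*(z)-p^{-1}(0)$ on $x\ge x^*(z)$, with $x^*(z)=\inf\{x:f^*(x,z)=0\}$ decreasing in $z$ because $f^*(x,z)=0$ implies $f^*(x,z')=0$ for $z'\ge z$. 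Since the thresholds $p^{-1}[\bar p(z)]$ and $x^*(z)$ move monotonically and the pieces agree at the boundaries (continuity of $i^*$, Proposition~\ref{pr:storage}), $i^*(x,z)$ is decreasing in $z$.

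I expect the storage region in the second step to be the main obstacle, because the obvious approach fails there. If one iterated using the \emph{old} iterate's storage $x-p^{-1}[f(x,z)]$ to build the next-period state, the transition channel (a larger $z$ makes $\hat Z$ stochastically larger, so the discount factor falls and output rises, lowering the continuation value and the price) and the storage channel (a lower price at larger $z$ means less carryover, hence a higher price next period and a higher current continuation value) would push in opposite directions, and the two effects cannot be ordered pointwise. Writing $g=Tf$ in complementarity form, with the current storage pinned down consistently with $g$, is precisely what makes $g$ the fixed point of a map antitone in $g$, so that the storage feedback only damps---never reverses---the sign dictated by the transition channel. (Should the operator-level argument be awkward to set up, the same conclusion should follow by extremizing the $z$-increment directly in the fixed-point equation for $f^*$, paralleling the uniqueness argument behind Theorem~\ref{t:opt}.)
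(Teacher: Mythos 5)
Your proposal is correct and follows essentially the same route as the paper's proof: you show the set of candidates decreasing in $z$ is invariant under the price operator by exploiting that the map $\xi\mapsto\psi(\xi,x,z)$ (your $\Gamma_f$) is antitone in $\xi$ and decreasing in $z$ (via the product of decreasing factors plus the monotone Markov property), handle the free-disposal region through monotonicity of the threshold $x_f^*$, and then derive the monotonicity of $\bar p$ and $i^*$ from that of $f^*$ exactly as the paper does. The only cosmetic difference is that the paper writes $i^*(x,z)=\min\{x,x^*(z)\}-p^{-1}[f^*(x,z)]$ and treats all regions at once, where you argue piecewise with boundary matching.
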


The intuition is as follows: If (i) a higher $Z_t$ shifts up the distribution of
$Z_{t+1}$ in terms of first-order stochastic dominance and (ii) interest rates
and output are both nondecreasing in this state variable, then a high $Z_t$
today tends to generate both sustained high output and more impatient
speculators in the future. The former boosts supply, while the latter
diminishes the incentive for holding inventories, reducing speculative
demand. As a result, both inventories and prices are lower. 

The assumptions of Proposition~\ref{pr:mono_z} do
not restrict $R_t$ and $Y_t$ to be strictly increasing in
$Z_t$, nor do they impose that $R_t$ and $Y_t$ are driven by a common factor. In
particular, the second assumption concerning monotone Markov process is standard
(see Appendix~\ref{sec:pf_itheory} for sufficient conditions). 
Below, we discuss the first assumption through examples.

\begin{example}
    If $\{R_t\}$ and $\{Y_t\}$ are {\sc iid} and mutually independent, then we
    can set $Z_t \equiv 0$, $\epsilon_t = R_t$ and $\eta_t = Y_t$, in which case
    $r(z,\epsilon) = \epsilon$ and $y(z,\eta) = \eta$. Hence, the first two
    assumptions of Proposition~\ref{pr:mono_z} hold.
\end{example}

\begin{example}\label{ex:RY_indep}
    If $\{R_t\}$ and $\{Y_t\}$ are autocorrelated and mutually independent, then
    we can write $Z_t$ as $Z_t = (Z_{1t}, Z_{2t})$, where $\{Z_{1t}\}$ and
    $\{Z_{2t}\}$ are mutually independent, possibly multi-dimensional Markov
    chains, and
    	$R_t = r(Z_{1t}, \epsilon_t)$ and $Y_t = y(Z_{2t}, \eta_t)$.
    In this case, the first assumption of Proposition~\ref{pr:mono_z} holds as
    long as $r$ is nondecreasing in $Z_{1t}$ and $y$ is nondecreasing in
    $Z_{2t}$. 
\end{example}

\begin{example}\label{ex:RY_MC}
    If $\{R_t\}$ and $\{Y_t\}$ are finite Markov processes, then we can set
    $\epsilon_t = \eta_t \equiv 0$ and define $Z_t = (R_t, Y_t)$,  in which
    case the first assumption of Proposition~\ref{pr:mono_z} holds
    automatically, while the second assumption holds as long as $\{R_t\}$ and
    $\{Y_t\}$ are monotone and non-negatively correlated Markov processes.
\end{example}

We can now state our main result concerning correlation. In doing so,
we suppose that $Z_t = (Z_{1t}, \ldots, Z_{nt})$
takes values in $\RR^n$.

\begin{proposition}[\textbf{Negative Correlation of Interest Rates and Prices}]\label{pr:neg_corr}	
  If the conditions of Proposition~\ref{pr:mono_z} hold and $\{Z_{1t}, \ldots,
  Z_{nt}\}$ are independent for each fixed $t$, then
  \begin{equation*}
    \Cov_{t-1} (P_t, R_t) \leq 0 \quad \text{for all } \, t \in \NN.
  \end{equation*}
\end{proposition}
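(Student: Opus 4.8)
The plan is to condition on the information available at $t-1$, reduce $\Cov_{t-1}(P_t,R_t)$ to the conditional covariance of a coordinatewise-decreasing and a coordinatewise-increasing function of the single random vector $Z_t$, and then close the argument with a Harris/FKG-type inequality for product measures. Write $\mathscr F_{t-1}$ for the natural filtration and $\EE_{t-1}$ for $\EE(\cdot \mid \mathscr F_{t-1})$. Conditional on $\mathscr F_{t-1}$, the carry-in stock $I_{t-1}=i^*(X_{t-1},Z_{t-1})$ and the state $Z_{t-1}$ are fixed, so, using $X_t=\me^{-\delta}I_{t-1}+y(Z_t,\eta_t)$,
\[
  P_t = f^*\!\big(\me^{-\delta}I_{t-1}+y(Z_t,\eta_t),\,Z_t\big) =: g(Z_t,\eta_t),
  \qquad R_t = r(Z_t,\epsilon_t).
\]
Since $f^* \in \cC$ is decreasing in its first argument, $f^*$ is decreasing in $z$ by \fref{pr:mono_z}, and $y(z,\eta)$ is nondecreasing in $z$, the map $z \mapsto g(z,\eta)$ is coordinatewise nonincreasing for every $\eta$ — the direct pricing channel through $f^*$ and the supply channel through $X_t$ push the same way — while $z \mapsto r(z,\epsilon)$ is coordinatewise nondecreasing for every $\epsilon$ by hypothesis.

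Next I integrate out the innovations. Because $\{\epsilon_t\}$ and $\{\eta_t\}$ are {\sc iid}, mutually independent, and independent of $\{Z_t\}$, the pair $(\eta_t,\epsilon_t)$ is independent of $\sigma(\mathscr F_{t-1},Z_t)$ and has product law; hence, with $\tilde g(z):=\EE[g(z,\eta_t)]$ and $\tilde r(z):=\EE[r(z,\epsilon_t)]$ (expectations over the innovations only), the tower rule gives $\EE_{t-1}[P_tR_t]=\EE_{t-1}[\tilde g(Z_t)\tilde r(Z_t)]$, $\EE_{t-1}[P_t]=\EE_{t-1}[\tilde g(Z_t)]$ and $\EE_{t-1}[R_t]=\EE_{t-1}[\tilde r(Z_t)]$, so that
\[
  \Cov_{t-1}(P_t,R_t) = \Cov_{t-1}\!\big(\tilde g(Z_t),\,\tilde r(Z_t)\big).
\]
Monotonicity is preserved under these expectations, so $\tilde g$ is coordinatewise nonincreasing and $\tilde r$ coordinatewise nondecreasing.

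Finally I invoke the Harris inequality. The independence-across-dimensions hypothesis means the coordinate chains of $\{Z_t\}$ are mutually independent (cf.\ Examples~\ref{ex:RY_indep}--\ref{ex:RY_MC}), so its transition kernel factorizes and, conditional on $\mathscr F_{t-1}$, the components $Z_{1t},\dots,Z_{nt}$ are independent. For independent random variables, a coordinatewise nonincreasing $\phi$ and a coordinatewise nondecreasing $\psi$ satisfy $\Cov(\phi,\psi)\le 0$ (Harris/FKG); this follows by induction on $n$, the base case $n=1$ being Chebyshev's integral inequality $\EE[\phi(W)\psi(W)]\le\EE[\phi(W)]\,\EE[\psi(W)]$, obtained from $(\phi(w)-\phi(w'))(\psi(w)-\psi(w'))\le 0$ averaged over independent copies $w,w'$, and the inductive step by conditioning on the last coordinate. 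Applying this to the conditional law of $Z_t$ given $\mathscr F_{t-1}$ with $\phi=\tilde g$, $\psi=\tilde r$ gives $\Cov_{t-1}(\tilde g(Z_t),\tilde r(Z_t))\le 0$, whence $\Cov_{t-1}(P_t,R_t)\le 0$ for all $t\in\NN$.

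The crux is the last step: one must pass from ``independence across dimensions'' to conditional independence of the coordinates of $Z_t$ given the past and then apply the correct monotone-coupling inequality — the sign genuinely uses that the coordinates carry no residual dependence, and, as the paper notes, contemporaneous correlation among the $Z_{it}$ can flip the conclusion. The remaining points — joint monotonicity of the composition $g$ in $Z_t$ through both the $f^*$ channel and the $X_t$ channel, and preservation of monotonicity under expectation over the innovations — are routine.
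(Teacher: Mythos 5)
Your proof is correct and follows essentially the same route as the paper: both condition on the period-$(t-1)$ state, use Proposition~\ref{pr:mono_z} to make $z\mapsto f^*(\me^{-\delta}I_{t-1}+y(z,\eta),z)$ nonincreasing and $z\mapsto r(z,\epsilon)$ nondecreasing, and close with the FKG/Harris inequality for the independent coordinates of $Z_t$ (the paper's Lemma~\ref{lm:fkg}). The only cosmetic difference is that you integrate out the innovations $(\epsilon_t,\eta_t)$ before applying FKG, whereas the paper conditions on them, applies FKG, and then integrates; both orderings rest on the same independence structure and yield the same bound.
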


As Proposition~\ref{pr:mono_z} illustrates, when interest rates and output are
both positively affected by the monotone exogenous state process, commodity
prices will be negatively affected by the exogenous state. Therefore, there is a
trend of comovement (in opposite directions) between commodity price and
interest rate, resulting in a negative correlation. The proof of
Proposition~\ref{pr:neg_corr} relies on the Fortuin--Kasteleyn--Ginibre
inequality.

Note that the independence-across-dimensions condition in
Proposition~\ref{pr:neg_corr} cannot be omitted. In
Appendix~\ref{sec:counter_ex}, we provide examples showing that if $\{Z_{1t},
\ldots, Z_{nt}\}$ are positively or negatively correlated for some $t \in \NN$,
then interest rates and prices can be positively correlated. This is because
contemporaneous correlation across dimensions of $Z_t$ can alter comovement of
interest rates and commodity prices. (Such correlation can either strengthen or
weaken the impact of interest rates on commodity prices, yielding rich model
dynamics.)

\begin{remark}
    In Appendix~\ref{sec:pf_itheory}, we show that Proposition~\ref{pr:neg_corr}
    can be extended to the setting of Section~\ref{sec:or}, where agents
    are not necessarily risk neutral. In particular, $\Cov_{t-1} (P_t, M_t)
    \geq 0$ holds.
\end{remark}

\begin{example}\textbf{(The Speculative Channel).}
  In applications, $\{R_t\}$ typically follows a Markov process, while
  $\{Y_t\}$ represents a sequence of supply shocks (e.g., harvest failures,
  conflicts around oil production sites, so on), which is {\sc iid} and less
  likely to be affected by the monetary conditions \citep[see,
  e.g.,][]{deaton1992on, Cafi15a}. Hence, $\{R_t\}$ and $\{Y_t\}$ are mutually
  independent. In this case, all the effects of interest rates on commodity
  prices transit through commodity speculation. By letting
  	  $Z_t = R_t$, $\epsilon_t \equiv 0$ and 
  	  $\eta_t = Y_t$,
  we have $r(z,\epsilon) = z$ and $y(z,\eta) = \eta$. Hence, all the assumptions
  of Proposition~\ref{pr:neg_corr} hold as long as $\{R_t\}$ is a finite
  monotone Markov process (e.g., a discrete version of a positively correlated
  AR(1) process) and Assumption~\ref{a:opt} holds (see the next section). In
  this case, Proposition~\ref{pr:neg_corr} implies that interest rates are negatively correlated
  with commodity prices, which matches the empirical results of
  \citet{frankel1986expectations, frankel2008, frankel2014effects}. 
\end{example}

\begin{example}\textbf{(The Global Demand Channel).}\label{ex:demand_shock}
  Since the output of the commodity, $Y_t$, enters linearly in total
  availability, it can be redefined as
  $Y_t = Y_t^S - Y_t^D$, where $Y_t^S$ is the supply shock and $Y_t^D$ is the
  demand shock. Hence $Y_t$ can be interpreted as a net supply shock. There is
  widespread evidence that both types of shocks matter in commodity markets,
  albeit with relative importance depending on the commodities \citep[see,
  e.g.,][]{kilian2009not, Goue22a}. Unlike supply shocks, demand shocks are
  likely to be affected by monetary policies. Since interest rates affect global
  demand \citep{Rame16}, an interest rate shock leads to an aggregate demand
  shock that affects all commodities. If interest
  rates follow a Markov process, then $Z_t = (R_t, Z_{2t})$ and
  $Y_t = y(Z_{2t}, \eta_t)$, where $\{Z_{2t}\}$ is a Markov process that is
  correlated with $\{R_t\}$.  Hence, $Z_t$ is contemporaneously correlated, and
  the independence-across-dimensions condition of Proposition~\ref{pr:neg_corr}
  fails. However, the theory of Section~\ref{sec:or} still applies.
\end{example}

\subsection{Causality}\label{sec:causality}

We now study the causal relationship between interest rates and commodity 
prices. As a first step, we state an elementary monotonicity property 
concerning interest rates and prices.
To this end, we take $\{R_t^i\}$ to be the interest rate
process for economy $i\in \{1,2\}$. In addition, let $f_i^*$ and
$\{P_t^i\}$ be the equilibrium pricing rule and the price process
corresponding to $\{R_t^i\}$. 

\begin{proposition}[\textbf{Causal Effect of Ordered Interest Rates on Prices}]\label{pr:mono_seq}
	If $R_t^2 \leq R_t^1$ with probability one for all $t \geq 0$, then $f_1^*
	\leq f_2^*$ and $P_t^1 \leq P_t^2$ with probability one for all
	$t$.
\end{proposition}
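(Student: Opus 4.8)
The plan is to reduce the pricing-rule comparison to monotonicity of the equilibrium price operator of Appendix~\ref{sec:proof-sect-refs}, and then to compare the realized price processes through a coupling. Write $T_i$ for the equilibrium price operator attached to the discount process $M_t^i \coloneq 1/R_t^i$; by Appendix~\ref{sec:proof-sect-refs}, $T_i$ maps $\cC$ into itself, $f_i^*$ is its unique fixed point, and $f_i^* = \lim_{n\to\infty} T_i^{\,n} g$ for every $g \in \cC$. Recall from the construction there that $(T_i g)(x,z)$ is the unique root of a map that is \emph{decreasing} in that root (raising the trial price raises next-period available supply, hence lowers the decreasing candidate $g$ evaluated there). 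Two facts then follow. (a) Each $T_i$ is isotone on $\cC$: raising $g$ raises the map whose root defines $T_i g$, hence raises the root. (b) $T_1 g \le T_2 g$ for every $g \in \cC$: since $R_t^2 \le R_t^1$ almost surely for all $t$ and $R_t^i = r(Z_t,\epsilon_t)$, we have $0 \le M_t^1 \le M_t^2$ almost surely; at each trial price the economy-$1$ and economy-$2$ defining maps differ only in this discount factor, which multiplies the common nonnegative value $g(\cdot)$, so the economy-$2$ map lies above the economy-$1$ map and its root moves weakly up.

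Granting (a)--(b), fix $g \in \cC$ and induct: $T_1^{\,n} g \le T_2^{\,n} g$ implies $T_1^{\,n+1} g = T_1(T_1^{\,n} g) \le T_1(T_2^{\,n} g) \le T_2(T_2^{\,n} g) = T_2^{\,n+1} g$, the first inequality by (a), the second by (b). Letting $n \to \infty$ and using global convergence of successive approximation gives $f_1^* \le f_2^*$. (Equivalently, $f_1^* = T_1 f_1^* \le T_2 f_1^*$, so $f_1^*$ is a subsolution of $T_2$; iterating the isotone $T_2$ from $f_1^*$ produces an increasing sequence with limit $f_2^*$.)

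For the price processes I would realize both economies on one probability space carrying common $\{Z_t\}$, $\{\epsilon_t\}$, $\{\eta_t\}$ and a common initial inventory, so $\{Y_t\}$ and $X_0^1 = X_0^2$ coincide, and induct on $t$, carrying $X_t^1 \le X_t^2$ and $P_t^1 \le P_t^2$. The base case is immediate from $X_0^1 = X_0^2$ and $f_1^* \le f_2^*$. For the step, $f_1^* \le f_2^*$, formula~\eqref{eq:opt_stor}, and $x^*_1(z) \le x^*_2(z)$ (since $f_1^*$ reaches zero no later than $f_2^*$) give $i_1^* \le i_2^*$ pointwise on $\SS$ via a short case split over the three regimes in~\eqref{eq:opt_stor}; with monotonicity of $x \mapsto i_1^*(x,z)$ from \Fref{pr:storage} and $X_t^1 \le X_t^2$ this gives $I_t^1 = i_1^*(X_t^1,Z_t) \le i_1^*(X_t^2,Z_t) \le i_2^*(X_t^2,Z_t) = I_t^2$, hence $X_{t+1}^1 = \me^{-\delta} I_t^1 + Y_{t+1} \le \me^{-\delta} I_t^2 + Y_{t+1} = X_{t+1}^2$. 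It then remains to propagate $P_{t+1}^1 \le P_{t+1}^2$; the plan is to carry the market-clearing identity $p^{-1}(P_t^i) = X_t^i - I_t^i$ and the carryover gap $I_t^2 - I_t^1 \ge 0$ through the recursion and show that this gap always dominates the extra availability it generates, $X_t^2 - X_t^1 = \me^{-\delta}(I_{t-1}^2 - I_{t-1}^1)$ --- a statement equivalent to $P_t^1 \le P_t^2$.

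This last propagation step is where I expect the real difficulty to lie: lower interest rates raise the pricing \emph{rule} pointwise, but they also induce heavier carryover, which feeds forward into larger future availability and hence tends to depress realized prices in later periods, so the ordering of realized prices cannot be obtained simply by evaluating $f_1^* \le f_2^*$, the two economies being at different supply levels. I anticipate the clean route is to track the coupled pair of available-supply processes jointly, as above, and to work with the arbitrage relation~\eqref{eq:foc} directly --- using $0 \le M_{t+1}^1 \le M_{t+1}^2$, $P_{t+1}^i \ge 0$, and the already-established ordering of next-period prices --- rather than comparing $f_1^*$ and $f_2^*$ at mismatched arguments.
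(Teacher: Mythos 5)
Your argument for the first claim, $f_1^* \le f_2^*$, is correct and is essentially the paper's own proof: the paper likewise establishes $T_1 f \le T_2 f$ for every $f \in \cC$ by comparing the two root-defining maps at a common trial price (using $M_t^1 \le M_t^2$ together with the monotonicity of $g$ and $h$ in~\eqref{eq:simp_func}), combines this with order preservation of the equilibrium price operator (Lemma~\ref{lm:order_pres}), and passes to the limit of successive approximations. Your facts (a) and (b) and the induction are exactly that argument.

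The second claim is where the gap lies, and you have located it precisely --- but you do not close it, and neither does the paper. The paper's proof opens with ``it suffices to show that $T_1 f \leq T_2 f$ for all $f \in \cC$,'' proves $f_1^* \le f_2^*$, and then asserts that ``all the stated claims hold''; no argument for the realized price paths is ever given. Your worry about the inventory feedback is substantive and, as far as I can see, fatal to the propagation step you sketch. Under the natural coupling (common $\{Z_t\}$, $\{\eta_t\}$, and initial availability), suppose at date $t-1$ the common availability lies in the nonempty interval $\left(p^{-1}[\bar p_2(Z_{t-1})],\, p^{-1}[\bar p_1(Z_{t-1})]\right]$, so that economy $2$ carries a small but strictly positive inventory into period $t$ while economy $1$ carries none, and suppose the date-$t$ harvest is near its lower bound. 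Then $X_t^1 = Y_t < \me^{-\delta} I_{t-1}^2 + Y_t = X_t^2$, and if the carryover is small enough that economy $2$ also stocks out at $t$ (i.e., $X_t^2 \le p^{-1}[\bar p_2(Z_t)]$, which is attainable whenever the worst harvest alone triggers a stock-out in economy $2$), Theorem~\ref{t:opt}(i) gives $P_t^1 = p(X_t^1) > p(X_t^2) = P_t^2$, reversing the claimed ordering. In other words, the pathwise inequality $P_t^1 \le P_t^2$ does not follow from $f_1^* \le f_2^*$ evaluated at the divergent supply paths, and it appears to fail outright on a positive-probability event: storage in the low-rate economy smooths prices, which raises them in gluts but \emph{lowers} them in subsequent shortages relative to the no-storage benchmark. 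Your instinct to track the coupled availability processes and the arbitrage relation is the right diagnostic, but the ``gap dominates'' inequality $I_t^2 - I_t^1 \ge \me^{-\delta}(I_{t-1}^2 - I_{t-1}^1)$ that you would need is exactly what breaks down at a joint stock-out; so this part of the proposal is incomplete, and the deficiency is inherited from (indeed present in) the paper's own proof.
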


The intuition is straightforward. Seen from the speculative channel, lower
interest rates reduce the opportunity cost of storage. Lower storage costs
encourage a build-up of inventories. Higher demand for inventories induces 
higher prices.

Proposition~\ref{pr:mono_seq} has limited implications because it concerns
variations in interest rates that are uniformly ordered over time. Next, we 
aim to relax this assumption.

Let $\{X_t^i\}$ and $\{Z_t^i\}$ be respectively the endogenous and exogenous 
state processes for economy $i\in \{1,2\}$. Unless otherwise specified, we 
assume that both economies experience the same innovation process $\{\eta_t\}$ 
to output.

\begin{proposition}[\textbf{Causal Effect of Interest Rates on Prices}]\label{pr:caus_1}
	Suppose $\{R_t\}$ is a monotone finite-state Markov process and 
	$Y_t = y(R_t, \eta_t)$, where $y$ is nondecreasing in $R$. If 
	$X_{t-1}^2 \leq X_{t-1}^1$, $R_{t-1}^1 \leq R_{t-1}^2$, and $R_t^2 \leq 
	R_t^1$ with probability one, then $P_t^1 \leq P_t^2$ with probability 
	one.
\end{proposition}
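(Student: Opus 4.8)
The plan is to reduce the claim to the monotonicity of the equilibrium pricing and storage rules established in Theorem~\ref{t:opt}, Proposition~\ref{pr:storage}, and Proposition~\ref{pr:mono_z}, propagated forward one period through the law of motion~\eqref{eq:Xdef}. Since the proposition posits a single monotone finite-state Markov process $\{R_t\}$ and a common output map $y$ (together with the common innovation sequence $\{\eta_t\}$), both economies are governed by the same primitives and hence share the same equilibrium pricing rule $f^*$ and storage rule $i^*$; the superscripts merely index two realizations on a common probability space, with $P_t^i = f^*(X_t^i, R_t^i)$ and $I_{t-1}^i = i^*(X_{t-1}^i, R_{t-1}^i)$. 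Taking $Z_t = R_t$, $\epsilon_t \equiv 0$, and $r(z,\epsilon) = z$ places us in the setting of Section~\ref{sec:itheory}, and the standing hypotheses ($r$ trivially nondecreasing in $z$, $y$ nondecreasing in $R$, $\{R_t\}$ monotone Markov) are exactly those of Proposition~\ref{pr:mono_z}. Consequently $f^*(x,z)$ and $i^*(x,z)$ are decreasing in $z$; in addition $f^*$ is decreasing in $x$ and $i^*$ is increasing in $x$.

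The argument is then a short chain. First, because $i^*$ is increasing in $x$ and decreasing in $z$, the hypotheses $X_{t-1}^2 \le X_{t-1}^1$ and $R_{t-1}^1 \le R_{t-1}^2$ give $I_{t-1}^2 = i^*(X_{t-1}^2,R_{t-1}^2) \le i^*(X_{t-1}^1,R_{t-1}^2) \le i^*(X_{t-1}^1,R_{t-1}^1) = I_{t-1}^1$ with probability one. Second, since $y$ is nondecreasing in $R$ and both economies see the same $\eta_t$, the hypothesis $R_t^2 \le R_t^1$ gives $Y_t^2 = y(R_t^2,\eta_t) \le y(R_t^1,\eta_t) = Y_t^1$. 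Combining these two facts through~\eqref{eq:Xdef} yields $X_t^2 = \me^{-\delta} I_{t-1}^2 + Y_t^2 \le \me^{-\delta} I_{t-1}^1 + Y_t^1 = X_t^1$. Finally, since $f^*$ is decreasing in both arguments, $X_t^2 \le X_t^1$ and $R_t^2 \le R_t^1$ give
\begin{equation*}
  P_t^1 = f^*(X_t^1, R_t^1) \le f^*(X_t^2, R_t^1) \le f^*(X_t^2, R_t^2) = P_t^2
\end{equation*}
with probability one, which is the assertion.

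I do not expect a genuine obstacle here; the proof is a careful assembly of monotonicity facts already in hand. The two delicate points are (i) confirming that the present specialization legitimately falls under Proposition~\ref{pr:mono_z}, so that decreasingness in $z$ of both $f^*$ and $i^*$ is available, and (ii) keeping the direction of every inequality straight through the chain. Conceptually, the content is that the two one-period orderings imposed on $(X_{t-1}, R_{t-1})$ and on $R_t$ both push $X_t$ in the same direction, so the monotone structure survives a single step forward — which is precisely what allows the ``ordered at every date'' hypothesis of Proposition~\ref{pr:mono_seq} to be relaxed to a two-date condition.
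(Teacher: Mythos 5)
Your proof is correct and follows essentially the same route as the paper's: invoke Propositions~\ref{pr:storage} and~\ref{pr:mono_z} (with $Z_t = R_t$, $\epsilon_t\equiv 0$) to get that $i^*$ is increasing in $x$ and decreasing in $z$ and that $f^*$ is decreasing in both arguments, push the orderings through the law of motion to get $X_t^2 \leq X_t^1$, and conclude via the two-step monotonicity chain on $f^*$. The only cosmetic difference is the order in which you relax the two arguments of $f^*$ in the final chain, which is immaterial.
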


Proposition~\ref{pr:caus_1} indicates that if the interest rate is a monotone 
finite Markov process that has a nonnegative effect on output, then, conditional 
on the same previous state, a higher interest rate today reduces commodity 
price in the same period.

\begin{proposition}[\textbf{Causal Effect of Interest Rates on Prices Over Time}]\label{pr:caus_2}
	Suppose $\{R_t\}$ is a monotone finite-state Markov process and $Y_t = 
	y(R_t)$, where $y$ is nondecreasing. If $X_{t-1} \leq X_t$, $R_{t-1} \geq 
	R_t$, and $R_{t} \leq R_{t+1}$ with probability one, 
	then $P_t \geq P_{t+1}$ with probability one.
\end{proposition}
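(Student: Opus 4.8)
The plan is to reduce the claim to comparative statics already in hand: under Assumption~\ref{a:opt} (in force throughout Section~\ref{sec:itheory}) the equilibrium pricing rule $f^*$ belongs to $\cC$ and hence is decreasing in $x$ (Theorem~\ref{t:opt}), while the equilibrium storage rule $i^*$ is increasing in $x$ (Proposition~\ref{pr:storage}). Moreover, under the hypotheses of Proposition~\ref{pr:caus_2} we are in the special case $Z_t = R_t$, $\epsilon_t \equiv 0$, $\eta_t \equiv 0$, with $r(z,\epsilon)=z$ and $y(z,\eta)=y(z)$; since $z\mapsto z$ and $z\mapsto y(z)$ are nondecreasing and $\{R_t\}$ is a monotone finite-state Markov process, Proposition~\ref{pr:mono_z} applies and gives that $f^*(x,z)$ and $i^*(x,z)$ are decreasing in $z$. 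Throughout, $P_s = f^*(X_s,R_s)$, and by \eqref{eq:Xdef} and the definition of $i^*$ the availability evolves as $X_{s+1} = \me^{-\delta} i^*(X_s,R_s) + y(R_{s+1})$. Since the three hypotheses are assumed to hold with probability one, it suffices to argue pointwise on the probability-one event on which they all hold.

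The key intermediate step is to propagate the assumed ordering of the endogenous state one period forward, i.e.\ to show $X_t \leq X_{t+1}$. Writing $X_t = \me^{-\delta} i^*(X_{t-1},R_{t-1}) + y(R_t)$ and $X_{t+1} = \me^{-\delta} i^*(X_t,R_t) + y(R_{t+1})$, I would bound the two summands separately. First, $i^*(X_{t-1},R_{t-1}) \leq i^*(X_t,R_{t-1}) \leq i^*(X_t,R_t)$, where the first inequality uses $X_{t-1} \leq X_t$ and monotonicity of $i^*$ in $x$, and the second uses $R_{t-1} \geq R_t$ together with the fact that $i^*$ is decreasing in $z$. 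Second, $y(R_t) \leq y(R_{t+1})$ follows from $R_t \leq R_{t+1}$ and monotonicity of $y$. Adding the $\me^{-\delta}$-scaled inventory terms to the output terms yields $X_t \leq X_{t+1}$.

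With $X_t \leq X_{t+1}$ established, the conclusion is immediate from monotonicity of $f^*$ in both arguments:
\[
  P_t = f^*(X_t,R_t) \;\geq\; f^*(X_{t+1},R_t) \;\geq\; f^*(X_{t+1},R_{t+1}) = P_{t+1},
\]
where the first inequality uses $X_t \leq X_{t+1}$ with $f^*(\cdot,R_t)$ decreasing, and the second uses $R_t \leq R_{t+1}$ with $f^*(X_{t+1},\cdot)$ decreasing. As this holds on a probability-one event, $P_t \geq P_{t+1}$ with probability one.

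Given the comparative-statics results in Propositions~\ref{pr:mono_z} and~\ref{pr:storage}, there is no substantial obstacle; the only point requiring care is the forward-propagation step $X_t \leq X_{t+1}$. There one must check that the three hypotheses all act in the same direction on next-period availability: $X_{t-1}\leq X_t$ and $R_{t-1}\geq R_t$ push the carried-over inventory component up from $\me^{-\delta} i^*(X_{t-1},R_{t-1})$ to $\me^{-\delta} i^*(X_t,R_t)$ (via monotonicity of $i^*$ in $x$ and antitonicity in $z$), while $R_t\leq R_{t+1}$ pushes the output component up from $y(R_t)$ to $y(R_{t+1})$. Keeping the signs of these monotonicities straight is essentially the whole of the work.
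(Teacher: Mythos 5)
Your proposal is correct and follows essentially the same route as the paper's proof: both establish $X_t \leq X_{t+1}$ by combining the monotonicity of $i^*$ in $x$ (Proposition~\ref{pr:storage}), its antitonicity in the exogenous state (Proposition~\ref{pr:mono_z}), and $y(R_t)\leq y(R_{t+1})$, and then conclude via the monotonicity of $f^*$ in both arguments. The only cosmetic difference is the order in which the two final inequalities on $f^*$ are applied, which is immaterial.
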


Proposition~\ref{pr:caus_2} above indicates that, if $R_t$ is a monotone finite 
Markov process and has a nonnegative effect on $Y_t$, $X_t$ is no less than its 
previous period level, and $R_t$ is no higher than its previous period level, 
then an increase in interest rates next period causes falling commodity prices. 

\section{Quantitative Analysis}\label{sec:quant}

To illustrate the quantitative implications of our theory, we study the impact
of interest rates on commodity prices through the speculative and
the aggregate demand channels. We use a stylized model that requires a minimum
number of parameters to explore how these channels
operate. We calibrate the model to a quarterly
setting to limit the number of state variables.\footnote{A monthly real interest
  rate process requires a rich autoregressive structure, introducing many lags.}

The main takeaways from this section are fourfold. First, IRFs show that prices
decrease immediately following a positive interest rate shock and slowly
converge to their long-run average, with a stronger decrease and a slower
convergence when the global demand channel is active. Second, inventory dynamics
are nuanced: while inventories typically fall after an interest rate shock due
to higher holding costs, they may rise again when the demand channel is
active. This is because lower demand reduces spot prices, creating profit
opportunities through storage. Overall, inventories tend to return to their
long-run average at a slower pace than prices. Third, price volatility exhibits
sensitivity to inventory dynamics: a larger response in inventory tends to
generate an inversely larger response in price volatility. Fourth, the strength
of the IRFs is highly state-dependent, being more pronounced for high
availabilities.

\subsection{Model Specification}\label{ss:numer_formu}

For our quantitative analysis, we adopt a linear demand function:
\begin{equation}
	\label{eq:7}
	p \left( x \right)=\bar{p} \left[ 1+ \left( x/\mu_{Y}-1 \right)/\lambda \right],
\end{equation}
where $\bar{p}>0$ is the steady-state price,\footnote{If not otherwise
  specified, we designate by steady state the equilibrium in the absence of any
  shocks.}  $\mu_Y>0$ is the mean of the commodity output process (so also the
steady-state consumption level), and $\lambda<0$ is the price elasticity of
demand.\footnote{\label{fn:2}An isoelastic inverse demand function has also been
  tested, and the results are robust to this change.} We assume that storage costs are
represented entirely by depreciation (i.e., $k=0$ and $\delta\ge0$), as previous
research shows that different types of storage costs have indistinguishable
effects on price moments \citep{Goue22a}.

The real annual interest rate, measured at a quarterly frequency, follows a first-order
autoregressive process:
\begin{equation}\label{eq:R_dist}
	R_t^a = \mu_R(1-\rho_R) + \rho_R R_{t-1}^a + 
	\sigma_R \sqrt{1-\rho_R^2} \, \epsilon_t^R, 
	\quad \{\epsilon_t^R\} \iidsim N(0, 1).
\end{equation}
Following Example~\ref{ex:demand_shock}, we treat $\{Y_t\}$ as a net supply
shock, where $Y_t=Y_t^S-Y_t^D$. Commodity output, $\{Y^S_t\}$, follows a
truncated normal distribution with mean $\mu_Y$ and standard deviation
$\mu_Y \sigma_Y$, truncated at 5 standard deviations. The truncation ensures a lower
bound for commodity output and total available supply \citep[see,
e.g.,][]{deaton1992on}. Commodity demand is proportional to economic activity:
$Y_t^D=\alpha A_t$. A simple IS curve represents economic activity:
\begin{equation}
	\label{eq:1}
	A_t = \rho_{A} A_{t-1} - \gamma (R_t^a - \mu_R),
\end{equation}
where $|\rho_{A}| <1$ parameterizes the persistence of economic activity, and
$\gamma\ge 0$ controls the sensitivity of economic activity to deviations in the
interest rate from its mean. For constant interest rates at the mean, economic
activity is just 0.\footnote{We assume a contemporaneous effect of interest rate
  on economic activity to be consistent with recent VAR results \citep{Baue23}.}

To simplify the problem, we have assumed that $A_t$ is
driven only by interest rates and by its own persistence. These assumptions avoid
the need to identify the innovation process of economic activity and, more
importantly, to represent the effect of economic activity on interest rates. The
joint dynamic of $(R_t^a,A_t)$ is represented by a SVAR(1) model:
\begin{equation}
	\label{eq:4}  
	\begin{bmatrix}
		1 & 0 \\
		\gamma & 1
	\end{bmatrix} 
	\begin{bmatrix}
		R_t^a \\
		A_{t}
	\end{bmatrix} =
	\begin{bmatrix}
		\mu_{R}(1-\rho_R) \\
		\gamma\mu_{R}
	\end{bmatrix} +
	\begin{bmatrix}
		\rho_R & 0 \\
		0 & \rho_{A}
	\end{bmatrix}
	\begin{bmatrix}
		R_{t-1}^a \\
		A_{t-1}
	\end{bmatrix} +
	\begin{bmatrix}
		\sigma_{R}\sqrt{1-\rho_R^2} \\
		0
	\end{bmatrix}
	\epsilon_t^R.
\end{equation}
To make this process compatible with our assumptions, we discretize it.

This setup is a special case of the theoretical framework developed in
Section~\ref{sec:itheory}, with $Z_t=(R_t^a,A_t)$, $\epsilon_t \equiv 0$, $\eta_t = Y^S_t$,
and $y(Z_t, \eta_t) = Y_t^S-\alpha A_t$. We use an
annual interest rate process to obtain results that are directly comparable to others in
the literature, but the model calls for an interest rate at the quarterly
frequency, so we define $R_t=r(Z_t, \epsilon_t) = (R_t^a)^{1/4}$. Below
we estimate the interest rate process and show that $\rho_R>0$ (implying
that $\{R_t^a\}$ is a monotone Markov process) and that the discount condition in
Assumption~\ref{a:opt} holds. Hence, all the statements of Theorem~\ref{t:opt}
and Proposition~\ref{pr:storage} are valid.

In a first step, we will analyze the speculative channel without the global
demand channel (so assuming $\alpha=0$). In this setting, $r(z,\epsilon)$ and
$y(z,\eta)$ are increasing in $z$. So, Proposition~\ref{pr:mono_z} is valid, and
since in this case $\{R_t\}$ is independent of $\{Y_t\}$, the assumptions (and
thus conclusions) of Proposition~\ref{pr:neg_corr} also
hold. Propositions~\ref{pr:mono_z} and~\ref{pr:neg_corr} do not generally hold
with the global demand channel.

This choice of parameterization limits the free parameters that matter in the
analysis of price movements to $\delta$, $\lambda$, $\alpha$, and $\sigma_{Y}$. Indeed, the interest
rate process is estimated on observations, the economic activity process is
calibrated based on \citet{Baue23}, and we can normalize $\bar{p}$ and
$\mu_Y$ to unity, since their effect is only to set the average price and
quantity levels. To ease 
interpretation and limit the number of parameters to adjust,
we fix $\sigma_Y$ to 0.05.\footnote{According to \citet{Goue22a}, 
	a coefficient of variation of 5\% for the net supply shock is slightly 
	above the total shock (demand plus supply) affecting the aggregate crop 
	market of maize, rice, soybeans, and wheat, but below the shocks affecting 
	each of these markets individually.} 
If only the speculative channel is active, this choice
is innocuous as we can prove that adjusting the intercept and slope of the
demand function is equivalent to adjusting the mean and variance of the output
process \citep[see the proof in Appendix~\ref{sec:equiv}, which is a
generalization of Proposition~1 of][]{deaton1996competitive}.

To calibrate the real interest rate process $\{R^a_t\}$, we follow the
literature on interest rates and commodity prices \citep[e.g.,][]{frankel2008,
  gruber2018interest, Kili22} and use the nominal one-year treasury
yield. Compared to the 3-month Treasury yield, the one-year yield is less prone
to the distortions caused by the zero lower bound, while still capturing meaningful policy
shifts. We deflate this rate by a measure of expected inflation, calculated through an autoregressive model estimated on a 30-year
window prior to the year of interest to account for changes in the dynamics of
inflation.\footnote{Using lagged inflation or ex-post inflation would lead to a
  similar real interest rate process.} The real interest rate obtained in this
way is represented in \fref{fig:real_int}. Maximum likelihood estimation
of~\eqref{eq:R_dist} over the period 1962--2022 yields $\mu_R = 1.0062$,
$\rho_R = 0.9407$, and $\sigma_R = 0.03$.

These results imply that the stationary mean of the real interest rate process
is about $0.6\%$, with an unconditional standard deviation of about $3\%$.  Note
that any constant spread above the risk-free rates can be captured in our model
by $\delta$. Therefore, when interpreting the values of $\delta$ in what follows, it
should be kept in mind that $\delta$ represents storage costs, a premium above
risk-free rates, and any long-run trend in commodity prices.\footnote{See
  \citet{Bobe17} for an analysis of the role of commodity price trends in the
  storage model.}

We use proxy SVAR estimates from \citet{Baue23} to calibrate the economic
activity process. Our economic activity process presents two free parameters:
$\rho_{A}$ and $\gamma$. We calibrate them by matching two moments: the number of months
needed to attain the minimum activity level after a monetary shock and the
size of the decrease at the minimum. This calibration is done conditional on the
interest rate process calibrated previously. The preferred estimation of
\citet[Figure~8]{Baue23} indicates 9 months to reach a decrease of industrial
production of $-0.4$\% after a 25~basis-point monetary shock. In our setting, this leads
to $\rho_{A}=0.52$ and $\gamma=0.95$.

If only the speculative channel is active, we discretize the interest rate
process~\eqref{eq:R_dist} into an finite-state Markov chain using the method of
\citet{tauchen1986finite}. If both channels are active, we transform the SVAR(1)
model of \fref{eq:4} into a VAR(1) model and discretize it using the approach of
\citet{Schm14a}.

Demand is specified as $p^{-1}(P_t) + Y_t^D$. With the steady-state value of
$p^{-1}(P_t)$ normalized to unity and $Y_t^D$ having a zero mean, the parameter
$Y_t^D$ may be viewed as a deviation from the steady-state demand level. Within
this framework, the parameter $\alpha$ plays a crucial role in dictating the extent
to which shocks to economic activity influence the commodity market. For our
central calibration, we choose $\alpha = 0.2$ to align with the immediate price
reactions of a commodity index to positive interest rate shocks, as identified
in the IRFs reported in \citet[Figure~8]{Baue23}. However, to acknowledge the
variability in the responsiveness of different commodities to economic
conditions and ensure robustness of our findings, we conduct a series of
simulations wherein we systematically vary the value of $\alpha$.

Our first step is to verify Assumption~\ref{a:opt}, which requires
$\kappa(M) > -\delta$. \Fref{fig:stab} plots $\kappa(M)$ for different values of
$(\mu_R, \rho_R, \sigma_R)$.\footnote{The method for computing $\kappa(M)$ is described in
  Lemma~\ref{lm:yy} of Appendix~\ref{sec:proof-sect-refs}.} The figure shows
that $\kappa(M)$ is increasing in $\mu_R$ and decreasing in $\sigma_R$ and
$\rho_R$.  In general, $\kappa(M)>-\delta$ fails only when $\mu_R$ is sufficiently low, or when
$\rho_R$ or $\sigma_R$ is very large. The black solid curves represent the thresholds at
which $\kappa(M) = -0.02, -0.01, 0$, respectively. Clearly, Assumption~\ref{a:opt}
holds at the estimated values of $(\mu_R, \rho_R, \sigma_R)$, even when $\delta=0$.

\begin{figure}%[htb!]
	\begin{subfigure}{.5\linewidth}
		\centering
		\includegraphics[width=\linewidth]{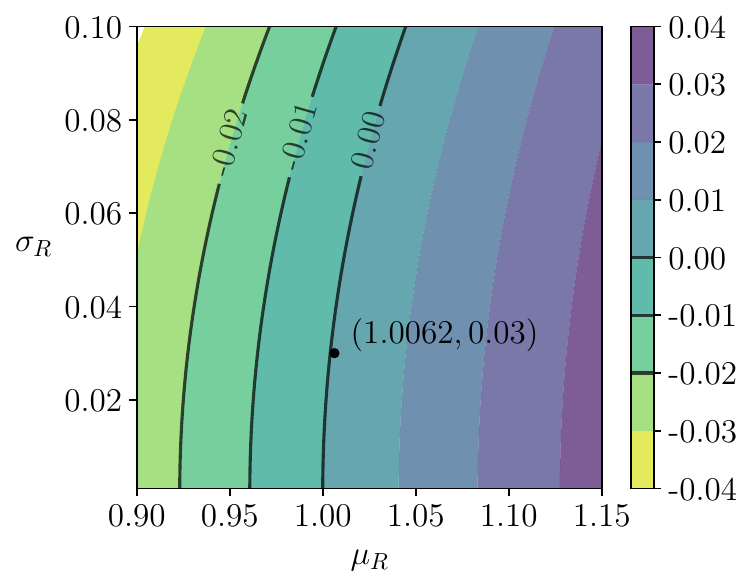}
	\end{subfigure}%
	\begin{subfigure}{.5\linewidth}
		\centering
		\includegraphics[width=\linewidth]{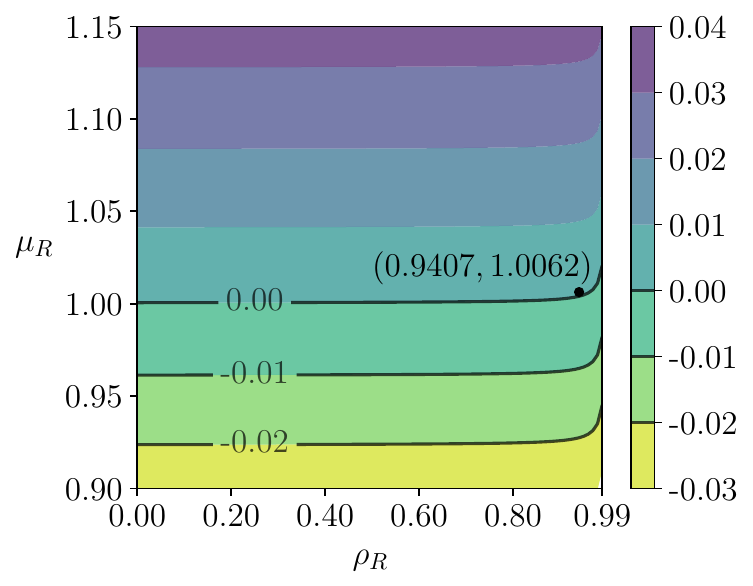}
	\end{subfigure}
	\caption{$\kappa(M)$ values for different combinations of
          $(\mu_R, \rho_R, \sigma_R)$.  The parameter $\kappa(M)$ represents the asymptotic
          yield on risk-free zero-coupon bonds as maturity approaches infinity,
          with $(\mu_R, \rho_R, \sigma_R)$ controlling the dynamics of the gross real
          interest rate process. In the left panel, $\rho_R$ is fixed at its
          estimated value, and a contour plot of $\kappa(M)$ is provided as a
          function of $(\mu_R, \sigma_R)$. In the right panel, $\sigma_R$ is fixed at its
          estimated value, and a contour plot of $\kappa(M)$ is shown as a function
          of $(\mu_R, \rho_R)$.}\label{fig:stab}
\end{figure}

After verifying Assumption~\ref{a:opt}, we solve for the equilibrium pricing
rule using a modified endogenous grid method \citep{carroll2006the}. To capture
the nonlinear dynamics generated by the storage model, we calculate IRFs as
state-and-history-dependent random variables, following the approach of
\citet{Koop96}, and use these IRFs to explore how interest rate shocks affect
prices, inventories, and price volatility. Details of the algorithm and
computation are given in Appendix~\ref{s:alg}.\footnote{Replication materials 
are available at \url{https://github.com/qingyin-ma/cp_code}.}

\subsection{Experiments}\label{ss:qe}

We begin by examining the speculative channel in isolation (i.e., assuming
$\alpha=0$). We then investigate the role of the global demand channel. While
speculative incentives remain active under the global demand channel (for
example, via autocorrelation of economic activity, which affects the
speculators' expectations), we can isolate and analyze the effects through each
channel by comparing the IRFs.

\subsubsection{Speculative Channel}\label{sss:speculative-channel}

Unless otherwise specified, we assume $\delta=0.02$ and $\lambda=-0.06$. This combination
of parameters leads to the following price moments on the asymptotic
distribution: a coefficient of variation of $24\%$, a first-order
autocorrelation of $0.61$, and a skewness of $2.9$, aligning with the empirical
observations.\footnote{\label{fn:6}See, e.g., Table~V in \citet{Goue15} for measures of
  these moments for a sample of commodities.}

Since the Markov process we adopt here is symmetric around the mean, the pricing
rule fluctuates symmetrically around the corresponding constant-discounting
pricing rule. This has implications for unconditional price moments. Notably,
the standard price moments of interest---the coefficient of variation,
autocorrelation, and skewness---are nearly indistinguishable between this model
and a constant-discoun\-ting model with the same average discounting (differing
only at three digits). While this outcome may seem surprising, considering that
the real rate volatility could be perceived as an additional source of
volatility, in practice this does not create additional standard demand or
supply shocks. When the interest rate falls below its mean, it prompts
additional demand for storage, driving prices up. However, when the interest
rate rises, these additional stocks are sold, exerting downward pressure on
prices. These effects tend to offset each other. This indicates that the
speculative channel may not contribute significantly to empirical analyses based
on unconditional price moments.

However, the presence of time-varying interest rates holds significant
implications for conditional moments, which could be exploited empirically.
Below, we delve into this exploration using IRFs. We calculate the IRFs to
a 100~bp interest rate impulse (i.e., a $1\%$ increase in the real interest
rate). All IRFs represent percentage deviation from the benchmark simulation.

% Main behavior of IRFs 
\Fref{fig:irf_param} shows the IRFs calculated at the stationary mean of
$(X_{t-1},R_{t-1}^a)$. We first discuss the central IRFs corresponding to
$\delta=0.02$ and $\lambda=-0.06$, before analyzing the sensitivity of these IRFs to the
parameters and states. The left panels present the IRFs for prices, which show
an immediate price decrease followed by a gradual convergence to the long-run
average over 2 to 4 years.\footnote{\label{fn:3}These dynamics are very 
	different from what would be expected from a transitory MIT shock to the 
	interest rate in a standard storage model with a constant interest rate. In 
	the latter scenario, an unexpected increase in the interest rate would also 
	depress prices due to decreased stockholding. However, this price decrease 
	would be short-lived, lasting only one period. With a transitory shock, the 
	interest rate would revert to its baseline after one period, incentivizing 
	stockpiling and consequently driving prices above their non-shock levels in 
	all subsequent periods. This point is demonstrated in
	Appendix~\ref{sec:mit-shocks}, Proposition~\ref{pr:mit-shock-constant}.} 
The middle panels display the IRFs
for inventories, which, unlike prices, reach their lowest value more than a year
after the shock, and even after 4 years, they have not returned to their
long-run values. Finally, the right panels illustrate the IRFs for price
volatility, namely, the conditionally expected standard deviation of
price.\footnote{Price volatility is the
	square root of the conditional variance:
	$\sqrt{\EE_{t-1} [f^{*}(X_{t},Z_{t})]^2 -[\EE_{t-1} f^{*}(X_{t},Z_{t})]^2}$.}
They indicate that price volatility largely follows stock dynamics with a peak
reached after a year and an incomplete convergence after 4 years. This finding
is consistent with the empirical results of \citet{gruber2018interest}, who show that
higher interest rates imply higher price volatility.

\begin{figure}%[htb!]
	\centering
	\scalebox{0.7}{\includegraphics{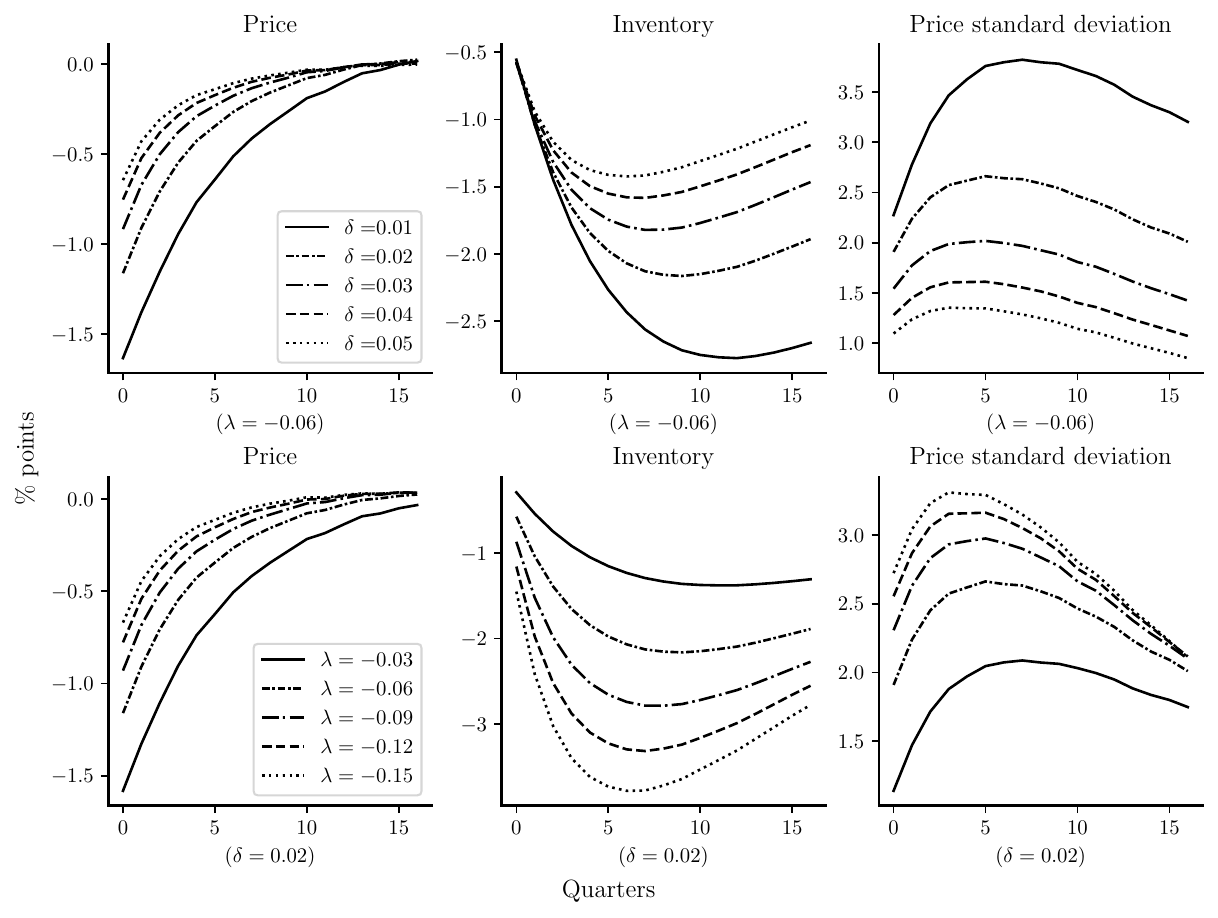}}
	\caption{\label{fig:irf_param}IRFs for a 100~bp real interest rate shock 
		under different parameter setups without demand channel, fixing 
		$X_{t-1}$ and $R_{t-1}^a$ at the stationary mean. $X_{t-1}$ and 
		$R_{t-1}^a$ are the total available supply and the annual 
		gross real interest rate in the previous quarter, $\delta$ is the rate 
		of depreciation, and $\lambda$ is the demand elasticity.}
\end{figure}

% Mechanism
The mechanism explaining this behavior is as follows. When the interest rate
increases, speculators tend to dispose of stocks. This
decreases current prices due to increased supply. However, the price decrease
mitigates the extent of stock selling compared to what would happen if prices
remained constant. In subsequent periods, stocks remain excessive due to
persistent high opportunity costs, prompting speculators to continue selling
them. Again, the price decrease acts as a cushion, preventing a complete
sell-off.  After more than a year of this dynamic, with increasingly
smaller quantities being sold from inventories, interest rates have declined,
alleviating the pressure to sell inventories, and prices are below their
long-run values with anticipation of eventual convergence.  Consequently, stock
accumulation gradually increases as interest rates decline. Since stock accumulation is slow, prices converge to their
long-run values from below without overshooting, although this remains a
possibility. After four years, prices have converged to their long-run values,
as the only factor influencing the increase in stock levels is the reversion of
the interest rate to its long-run average. As for price volatility, its
behavior, opposite to that of inventories, can be explained by the fact that
inventories serve as a key determinant of price volatility by providing a buffer
against production shocks.

% Sensitivity to storage costs
\Fref{fig:irf_param} also includes IRFs for various sets of parameters. The
impact on commodity price increases as storage costs decrease: lower storage 
costs result in a higher initial decline and a slower return to 
equilibrium.\footnote{\label{fn:IRF_delta}Price 
	dynamics may not consistently show a less significant decrease with a 
	higher storage cost when the global demand channel is activated, in which 
	case storage costs have only a marginal and short-term effect on price 
	dynamics.} 
Intuitively, when storage costs are lower, opportunity costs become more 
significant and, as a result, variations in the interest rate have a greater 
influence on storage behavior and prices. This effect is evident in the 
observation that stocks decrease more sharply with an interest rate increase 
when storage costs are lower. In such scenarios, stock levels tend to be higher 
on average and more susceptible to changes in opportunity costs.

% Sensitivity to demand elasticity
The price effects are more pronounced with a more inelastic demand function, as
prices react significantly to variations in the sale of stock when demand is
less elastic. Likewise, the stock decrease is less substantial when demand is
more inelastic. Since with inelastic demand, even a
slight increase in sales from inventory can greatly depress prices, thereby
reducing the incentive to sell off stocks excessively.

% Sensitivity to states
To explore the sensitivity of IRFs to states, \fref{fig:irf_state} draws the
IRFs calculated for different realized values of $(X_{t-1}, R_{t-1}^a)$. We use
$(X_{t-1}^p, R_{t-1}^{a,p})$ to denote the percentile points of the realized
$(X_{t-1}, R_{t-1}^a)$ states on the stationary distribution. The top left panel
shows that price responses are stronger when availability becomes larger. The
immediate responses of price are respectively 1.72 and 2.17 times larger when
availability increases from the $25\%$ percentile to the $75\%$ and $95\%$
percentiles. This is because when availability is lower, inventory tends to be
lower (Proposition~\ref{pr:storage}), hence there is less room for stock
adjustment and prices react much less in response to the interest rate shock.
This intuition is verified by the top middle panel, which shows that a higher
availability causes stock decumulation to last longer, yielding a larger decline
in inventory in the medium to long run (despite a slightly lower immediate
decline). 

\begin{figure}%[htb!]
	\centering
	\scalebox{0.7}{\includegraphics{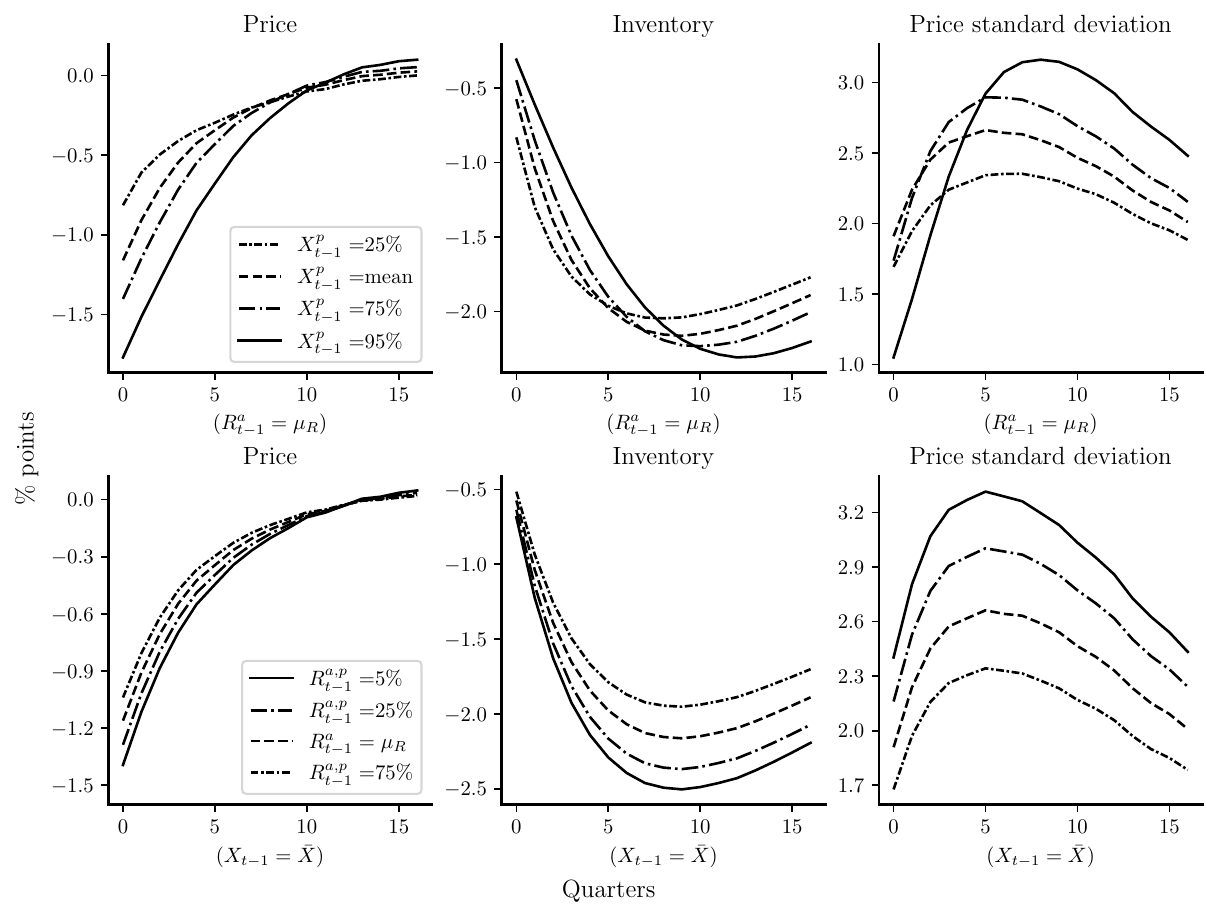}}
	\caption{\label{fig:irf_state}IRFs for a 100~bp real interest rate shock
          conditional on different states without demand channel.
          $(X_{t-1}^p, R_{t-1}^{a,p})$ denotes the percentile points of the
          realized total available supply and interest rate states on the
          stationary distribution, and $\mu_R$ and $\bar X$ are the stationary
          means of the interest rate and the total available supply processes,
          respectively.}
\end{figure}

The bottom left panel in \fref{fig:irf_state} shows that price responses to a
100~bp interest rate shock tend to be slightly larger when interest rates are
relatively lower. The overall trend of price and storage IRFs in the bottom
panels of \fref{fig:irf_state} is consistent with our theory
(Proposition~\ref{pr:mono_z}), which shows that under lower interest rates,
prices and inventories are generally higher and, therefore, more sensitive to
variations in opportunity costs.\footnote{\label{fn:IRF_Rs}When the global 
	demand channel is activated, market dynamics become more complex, in 
	which case this monotone pattern can be disrupted due to global economic 
	activity. We leave further investigation of this issue for future 
	research.}

As in the previous cases, the right panels of \fref{fig:irf_state} show that
the dynamics of price volatility are highly consistent with the inventory
dynamics, with a larger response in speculative storage causing an oppositely
larger response in price volatility. 

\subsubsection{Global Demand Channel}\label{sss:glob-demand-chann}

% Central IRFs
\begin{figure}%[htb!]
	\centering
	\includegraphics[width=\textwidth]{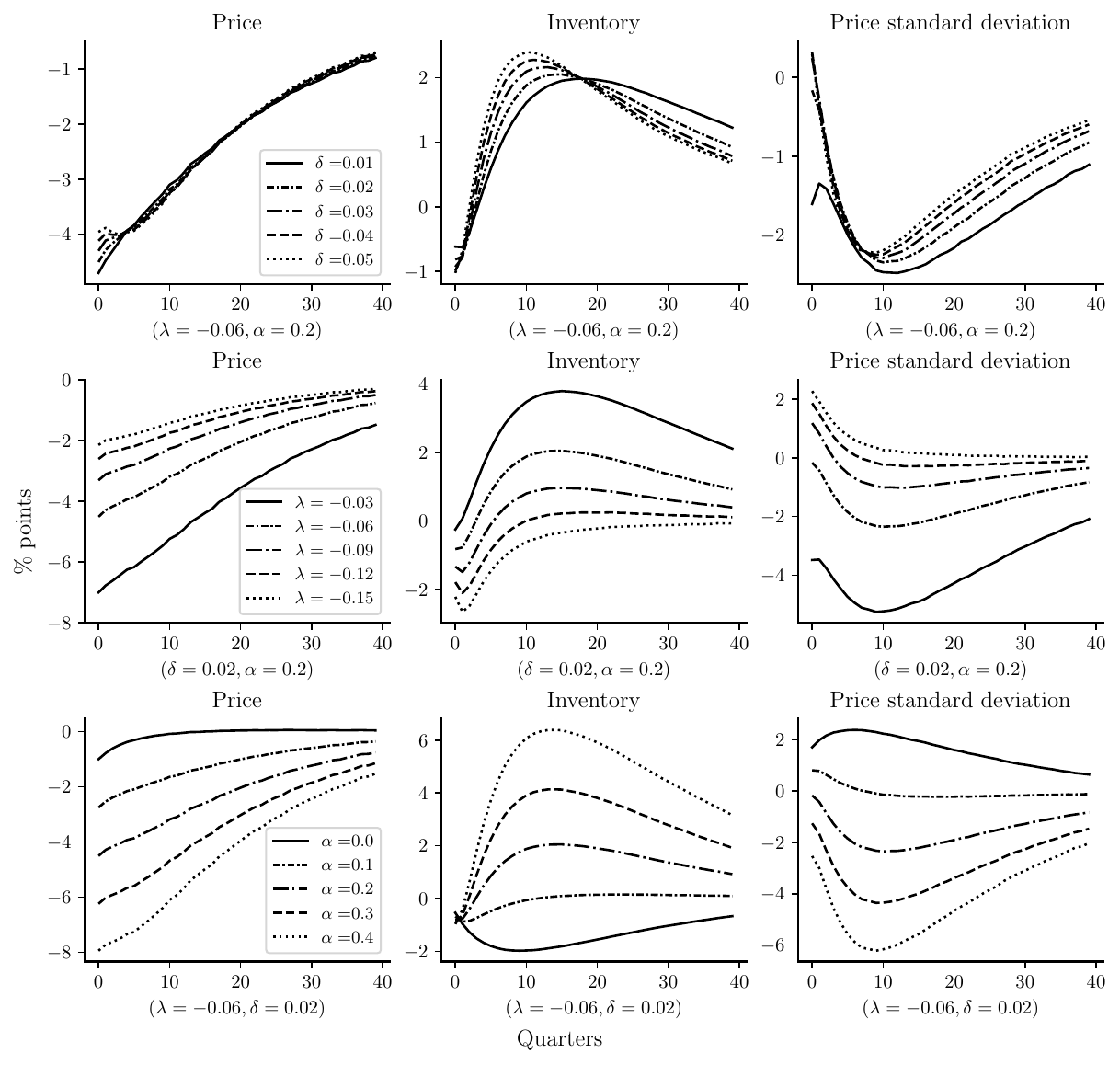}
	\caption{\label{fig:irf_demand_param}IRFs for a 100~bp real interest
		rate shock under different parameter setups, fixing $X_{t-1}$ and
		$R_{t-1}^a$ at the stationary mean. $X_{t-1}$ and $R_{t-1}^a$ are the
		total available supply and the annual gross real interest rate in the
		previous quarter, $\delta$ is the rate of depreciation, $\lambda$ is 
		the demand
		elasticity, and $\alpha$ governs the sensitivity of the demand shock to
		economic activity.}
\end{figure}

% Effect on unconditional moments
Next, we study the global demand channel as described in
Section~\ref{ss:numer_formu}. 
%Empirical benchmarks for the
%magnitude of the demand channel are not readily available, so our
%calibration strategy aims to align our model with empirical
%evidence that combine various channels. Specifically, we set the parameter $\alpha$ in
%such a way that a real interest rate shock of 25~bp triggers a decrease in
%prices by 1\%, echoing the findings illustrated in the IRFs of
%\citet[][Figure~8]{Baue23}. This requires
%setting $\alpha=0.2$. 
A notable difference with the speculative channel is that the unconditional
price moments are affected by the presence of the demand channel. This is due to
the fact that the demand channel involves a demand shock driven by interest
rates shocks, while the speculative channel implies offsetting demand and supply
shocks. In addition, since the demand shock follows an autocorrelated process,
it adds persistence to the price process. In the benchmark setting
($\alpha=0.2$), for example, price autocorrelation increases from 0.61 to 0.88. Thus,
unlike the speculative channel, the demand channel can play a role in
econometric analysis of the storage model using unconditional price
moments.\footnote{This potential role of the demand channel aligns with the
  findings of \citet{Goue22a}, which elucidate the empirical relevance of
  demand shocks and their role in explaining price autocorrelation.}
Nonetheless, the speculative channel, as well as the demand channel, could be
empirically relevant for estimation strategies relying on conditional moments or
unconditional moments calculated on quantities.

\Fref{fig:irf_demand_param} shows the IRFs with the global demand channel for 
various parameters. The impact of an interest rate shock on price is 
consistently more pronounced when the demand channel is activated. This is 
because the demand channel amplifies the price decline already induced by the 
speculative channel. In our central calibration, the dominance of the demand 
channel is evident, as the price effect is three times larger than under the 
speculative channel alone. Furthermore, since interest rates take a long time to
converge back to their steady state, so does economic activity. As a 
consequence, the price effect also becomes more persistent. 

The inventory dynamics differ notably with the inclusion of the demand channel.
Initially, inventories experience a decline comparable in magnitude to that
observed under the speculative channel alone. Subsequently, they rebound and
tend to surpass their long-run level. This overshooting behavior indicates that,
after several quarters, the influence of low demand outweighs that of high
interest rates. The low demand leads to suppressed prices with an expectation of
future price recovery once demand returns to normal, incentivizing higher
inventory holdings despite the disincentives posed by elevated interest rates.

These inventory dynamics also manifest in the behavior of price standard
deviation, which moves inversely to inventory dynamics. Notably, it is important
to recognize that the figure depicts the price standard deviation, rather than
the coefficient of variation. Given that price standard deviation decreases less
than the price level, the coefficient of variation actually experiences an
increase over time.

% IRFs depending on parameter values
The sensitivity of the IRFs to the model's parameters facilitates a deeper
understanding of the mechanisms behind the demand channel. Storage cost 
exhibits a minimal impact on price dynamics, except within the initial 
quarters. A comparison with \fref{fig:irf_param} suggests that this marginal 
effect is primarily driven by the speculative channel, indicating that the bulk 
of the demand channel's impact stems from a direct demand shock. Any indirect 
effect of the demand shock through storage is likely limited. The effect of 
demand elasticity is straightforward: a less elastic demand corresponds to a 
more pronounced price decrease and a greater increase in stock levels. 

The parameter $\alpha$ governs the sensitivity of the demand shock
to economic activity and, as described above, we calibrate its baseline value to 
$\alpha=0.2$ to align with the commodity index dynamics of \citet[][Figure~8]{Baue23}.
Given the heterogeneous exposure of commodities to economic 
activity, the importance of the demand channel varies across different 
commodities. Modulating this parameter elucidates the relative significance of 
the demand channel compared to the speculative channel (case $\alpha=0$).

\section{Conclusion}\label{sec:concl}

This paper extends the classical competitive storage model to account for
time-varying interest rates, offering a unified theoretical framework for
understanding how interest rates and other aggregate factors influence commodity
prices. We proposed readily verifiable conditions under which a unique
equilibrium solution exists and can be efficiently computed. These conditions
have a natural interpretation in terms of the asymptotic yield on long-maturity
risk-free assets. We also provided a sharp characterization of the analytical
properties of the equilibrium objects and developed an efficient solution
algorithm.

Within this framework, we investigated the dynamic causal effect of interest
rates on commodity prices from a theoretical and quantitative perspective. Our
theoretical results established conditions under which interest rates exert a
negative effect on commodity prices or are negatively correlated with them. The
quantitative analysis explored the impact through the
speculative and global demand channels. Impulse response analysis demonstrated a
substantial and persistent negative effect of interest rates on commodity prices
in most empirically relevant scenarios. Moreover, the magnitude of this effect
is state-dependent, varying according to market supply conditions and interest
rate regimes.

Our quantitative application focuses on the speculative and the global demand
channels. However, exploring the impact of interest rates on commodity
prices through various other channels, and the impact of more sophisticated
stochastic discount factors \citep[as found, for example,
in][]{schorfheide2018identifying} and risky returns are equally important.
Although these topics lie beyond the scope of the current paper, the theory we
develop provides a solid foundation for such extensions.

%%% Local Variables:
%%% TeX-master: "cp-theory"
%%% End:

%\clearpage
\ifpdf\pdfbookmark{Appendix}{sec:Appendix}\fi
\appendix 

\numberwithin{equation}{section}
\renewcommand\theequation{\thesection.\arabic{equation}}
\setcounter{equation}{0}
\numberwithin{figure}{section}
\renewcommand\thefigure{\thesection.\arabic{figure}}
\setcounter{figure}{0}
\numberwithin{table}{section}
\renewcommand\thetable{\thesection.\arabic{table}}
\setcounter{table}{0}

\section{Proof of Section~\ref{sec:or} Results}\label{sec:proof-sect-refs}

Here and in the remainder of the appendix, we let $\Phi$ be the probability
transition matrix of $\{Z_t\}$. In particular, $\Phi(z, \hat z)$ denotes the
probability of transitioning from $z$ to $\hat z$ in one step. Recall $M_t$
defined in~\eqref{eq:qy_func}.  We denote $\EE_z \coloneq \EE(\cdot \mid Z=z)$ and
$\EE_{\hat z} \coloneq \EE(\cdot \mid \hat Z = \hat z)$, and introduce the matrix $L$
defined by
\begin{equation}\label{eq:ell}
	L(z, \hat z) \coloneq \Phi \left(z, \hat z\right) \EE_{\hat z} \, m(\hat z, \hat \epsilon).
\end{equation}
Here $L$ is expressed as a function on $\ZZ \times \ZZ$ but can be represented
in traditional matrix notation by enumerating $\ZZ$.  Specifically, if $\ZZ =
\{z_1, \ldots, z_N\}$, then $L = \Phi D$, where $D \coloneq \diag \left\{ \EE_{z_1} M,
\ldots, \EE_{z_N} M \right\}$.

For a square matrix $A$, let $s(A)$ denote its spectral radius. In other words,
$s(A) \coloneq \max_{\alpha \in \Lambda} |\alpha|$, where $\Lambda$ is the set of
eigenvalues of $A$.

\begin{lemma}\label{lm:yy}
    Given $L$ defined in~\eqref{eq:ell}, the asymptotic yield satisfies
    $\kappa(M) = - \ln s(L)$.       
\end{lemma}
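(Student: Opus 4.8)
The plan is to realize the strip‑bond prices $q_n$ as a bilinear functional of the matrix powers $L^n$ and then apply Gelfand's spectral‑radius formula. Write $d(z) \coloneq \EE_z\, m(z,\hat\epsilon)$ for the diagonal entries of $D$, so that $L = \Phi D$, and let $\pi$ be the stationary distribution of $\{Z_t\}$, which exists, is unique, and satisfies $\pi(z) > 0$ for all $z \in \ZZ$ since the chain is irreducible on a finite set.

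First I would show that $q_n = \pi^\top L^n \mathbf 1$, where $\mathbf 1 \in \RR^{|\ZZ|}$ is the vector of ones. Because the innovations $\{\epsilon_t\}$ are iid and independent of $\{Z_t\}$, conditioning on the path $(Z_0, Z_1, \dots, Z_n)$ gives $\EE\,[\, \prod_{t=1}^n M_t \mid Z_0, \dots, Z_n \,] = \prod_{t=1}^n d(Z_t)$. Hence the state‑$z$ price $q_n(z) \coloneq \EE_z \prod_{t=1}^n M_t$ equals $\sum_{z_1, \dots, z_n} \Phi(z, z_1)\Phi(z_1, z_2) \cdots \Phi(z_{n-1}, z_n)\, \prod_{t=1}^n d(z_t)$, which is exactly the path expansion of $(L^n \mathbf 1)(z)$ for $L = \Phi D$. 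Averaging over the stationary law of $Z_0$ then yields $q_n = \sum_z \pi(z)\, q_n(z) = \pi^\top L^n \mathbf 1$.

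Next I would control $q_n$ by the size of $L^n$. Since $m \ge 0$ and the entries of $\Phi$ are nonnegative, $L$ and every power $L^n$ are entrywise nonnegative; combined with $\pi(z) > 0$ this gives the two‑sided bound $\big(\min_z \pi(z)\big)\, \|L^n\|_1 \le q_n \le \|L^n\|_1$, where $\|A\|_1 \coloneq \sum_{z, \hat z} |A(z,\hat z)|$ is the (submultiplicative) entrywise sum norm. Taking logarithms and dividing by $n$, the bounded prefactor contributes only $o(1)$, so $\kappa(M) = -\lim_{n\to\infty} n^{-1} \ln q_n = -\lim_{n\to\infty} n^{-1} \ln \|L^n\|_1$. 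Gelfand's formula, $\lim_{n\to\infty} \|L^n\|_1^{1/n} = s(L)$, then delivers $\kappa(M) = -\ln s(L)$ (and in particular shows that the limit defining $\kappa(M)$ exists), with the convention $\ln 0 = -\infty$ covering the degenerate case $s(L) = 0$, i.e.\ $L$ nilpotent, in which $q_n = 0$ for all large $n$.

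The only step that warrants care is the conditioning identity $\EE\,[\, \prod_{t=1}^n M_t \mid Z_0, \dots, Z_n \,] = \prod_{t=1}^n d(Z_t)$ together with the attendant claim that the factor $\EE_{\hat z}\, m(\hat z, \hat\epsilon)$ in the definition of $L$ coincides with $d(\hat z)$; both are routine once one notes that each $M_t = m(Z_t, \epsilon_t)$ depends on the chain only through $Z_t$ and that the innovations are mutually independent and independent of $\{Z_t\}$. The remaining ingredients — nonnegativity of $L^n$, strict positivity of $\pi$ by irreducibility, and Gelfand's formula on the finite‑dimensional matrix algebra — are all standard.
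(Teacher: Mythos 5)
Your proof is correct and follows essentially the same route as the paper's: both rest on the path-expansion identity $q_n(z) = (L^n \mathbf{1})(z)$ and then recover $s(L)$ from the $n$-th roots of $q_n$. The only substantive difference is in how the spectral-radius limit is justified — the paper invokes the Krasnosel'skii positive-operator formula $s(L) = \lim_n \|L^n h\|^{1/n}$ directly with the norm $\|f\| \coloneq \EE|f(Z_0)|$, whereas you sandwich $q_n = \pi^\top L^n \mathbf{1}$ between constant multiples of the entrywise matrix norm $\|L^n\|_1$ (using nonnegativity of $L$ and strict positivity of $\pi$ from irreducibility) and apply the ordinary Gelfand formula, a marginally more elementary argument that also handles the degenerate case $s(L)=0$ explicitly.
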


\begin{proof}
    By induction, we can show that, for any function $h: \ZZ \to \RR$ and $n \in
    \NN$,
	\begin{equation}\label{eq:L_map}
		L^n h (z) = \EE_z \left(\prod_{t=1}^{n} M_t\right) h(Z_n),
	\end{equation}
    where $L^n$ is the $n$-th composition of the operator $L$ with itself or,
    equivalently, the $n$-th power of the matrix $L$. By Theorem~9.1 of
    \citet{krasnosel2012approximate} and the positivity of $L$, we have 
	\begin{equation}\label{eq:sL}
		s(L) = \lim_{n\to\infty} \| L^n h \|^{1/n},
	\end{equation}
    where $h$ is any function on $\ZZ$ that takes positive values, and $\|
    \cdot\|$ is any norm on the set of real-valued functions defined on $\ZZ$.
    Letting $h \equiv 1$ and $\|f\| \coloneq \EE |f(Z_0)|$, we obtain
	\begin{equation*}
		s(L) 
		% = \lim_{n \to \infty} \left\|
		%   L^n 1
		% \right\|^{1/n} 
		= \lim_{n \to \infty} \left(
		\EE \left|
		\EE_{Z_0} \prod_{t=1}^{n} M_t 
		\right|
		\right)^{1/n} 
		= \lim_{n \to \infty} \left(
		\EE \prod_{t=1}^n M_t
		\right)^{1/n} 
		= \lim_{n\to\infty} q_n^{1/n},
	\end{equation*}
    where the first and the last equalities are by definition, and the second
    equality is due to the Markov property. Since the log function is
    continuous, we then have $\ln s(L) = \lim_{n \to \infty} \ln q_n /n =
    -\kappa(M)$ and the claim follows.
\end{proof}

\begin{corollary}\label{cor:sL}
	Assumption~\ref{a:opt} holds if and only if $s(L) <\me^{\delta}$.
\end{corollary}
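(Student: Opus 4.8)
The plan is to deduce the corollary directly from Lemma~\ref{lm:yy}. That lemma shows $\kappa(M) = -\ln s(L)$, so Assumption~\ref{a:opt}, which states $\kappa(M) + \delta > 0$, is equivalent to $\delta - \ln s(L) > 0$, i.e.\ to $\ln s(L) < \delta$.

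From here I would simply exponentiate. Since $t \mapsto \me^t$ is a strictly increasing bijection from $\RR$ onto $(0,\infty)$ and $s(L) > 0$, the inequality $\ln s(L) < \delta$ holds if and only if $s(L) < \me^{\delta}$. Reading this chain of equivalences in both directions yields the stated ``if and only if''.

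The only point that needs a word of care is that $\ln s(L)$ is well defined, i.e.\ $s(L) > 0$; this is already implicit in Lemma~\ref{lm:yy}, since the limit defining $\kappa(M)$ is a finite real number. If one prefers a self-contained justification: irreducibility of $\{Z_t\}$ makes $\Phi$ an irreducible stochastic matrix, and~\eqref{eq:ezp} together with $p$ being bounded above and $k \geq 0$ forces $\EE_z \hat M > 0$ for every $z \in \ZZ$ (because $p(\hat Y) \leq p(b)$ pointwise gives $\EE_z \hat M\, p(\hat Y) \leq p(b)\, \EE_z \hat M$, and the left side exceeds $\me^{\delta} k \geq 0$); hence $L = \Phi D$ is an irreducible nonnegative matrix and the Perron--Frobenius theorem gives $s(L) > 0$. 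I do not anticipate any genuine obstacle here: once Lemma~\ref{lm:yy} is in hand, the corollary is merely a restatement obtained by applying $\exp$ to both sides of the inequality in Assumption~\ref{a:opt}.
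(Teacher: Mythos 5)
Your proposal is correct and follows essentially the same route as the paper, which simply notes that the corollary ``follows directly from $\kappa(M) = -\ln s(L)$'' (Lemma~\ref{lm:yy}) and then exponentiates. Your additional remark verifying $s(L) > 0$ is a reasonable piece of due diligence but not a substantive departure.
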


This follows directly from $\kappa(M) = - \ln s(L)$. Below we routinely use the
alternative version $s(L) < \me^{\delta}$ for Assumption~\ref{a:opt}.

In the main text we imposed $p(b) < \infty$ to simplify analysis. Here and
below, we relax this assumption. We assume instead $p(b) \leq \infty$, and prove
that all the theoretical results in Sections~\ref{sec:or}--\ref{sec:itheory}
still hold in this generalized setup. To that end, we assume that
\begin{equation}\label{eq:EY_bd}
	\EE_z \max \{p(Y), 0\} < \infty \quad \text{for all } z \in \ZZ.
\end{equation}
(This mild assumption confines the expected market value of commodity
output to be finite and holds trivially in the setting of
\fref{sec:or}, where $p$ is bounded above.)
We then update the endogenous state space $\XX$ and define it as
\begin{equation*}
  \XX \coloneq 
  \begin{cases}
    (b, \infty),  &\text{if } p(b) = \infty, \\
    [b, \infty),  &\text{if } p(b) < \infty.
  \end{cases}
\end{equation*}
There is no loss of generality to truncate the endogenous state space when $p(b)
= \infty$, because in this case,~\eqref{eq:EY_bd} implies that $Y_t > b$ almost
surely, and thus $X_t > b$ with probability one for all $t$.

Let $p_0(x) \coloneq \max \{p(x), 0\}$ and let
$\cC$ be all continuous $f \colon \SS \to \RR$ such that $f$ is
decreasing in its first argument, $f(x,z) \geq p_0(x)$ for all $(x,z) \in \SS$,
and
\begin{equation*}
	\sup_{(x,z) \in \SS} \left| 
	f(x,z) - p_0(x)
	\right| < \infty.
\end{equation*}
Obviously, $\cC$ reduces to the candidate space in Theorem~\ref{t:opt} when the
demand function $p$ is bounded above, i.e., when $p(b) < \infty$. To compare
pricing policies, we metrize $\cC$ via
\begin{equation*}
	\rho (f,g) \coloneq \|f - g\| 
	\coloneq \sup_{(x,z) \in \SS} |f(x,z) - g(x,z)|.
\end{equation*}
Although $f$ and $g$ are not required to be bounded, one can show that $\rho$ is
a valid metric on $\cC$ and that $(\cC, \rho)$ is a complete metric space
\citep[see, e.g.,][]{ma2020income}.

We aim to characterize the equilibrium pricing rule as the unique fixed point of
the \textit{equilibrium price operator} described as follows: For fixed $f \in
\cC$ and $(x,z) \in \SS$, the value of $Tf$ at $(x,z)$ is defined as the $\xi
\geq p_0(x)$ that solves
\begin{equation}
	\label{eq:sol_euler}
	\xi = \psi (\xi, x,z)  
	\coloneq \min \left\{
	\max \left\{
	\me^{-\delta} \EE_z \hat M 
	f \left(
	\me^{-\delta} I (\xi, x, z) + \hat Y, \hat Z 
	\right) - k, \, p(x)
	\right\}, \, p(b)
	\right\},
\end{equation} 
where, considering free disposal, 
\begin{equation}\label{eq:storage}
	I(\xi, x, z) \coloneq 
	\begin{cases}
		x - p^{-1} (\xi), & \text{if }\, x < x_f^*(z) \\
		x_f^*(z) - p^{-1} (\xi), & \text{if }\, x \geq x_f^*(z) 
	\end{cases}
\end{equation}
with
\begin{equation*}
	x_f^*(z) \coloneq \inf \left\{
	x \geq p^{-1}(0) \colon 
	\me^{-\delta} \EE_z \hat M f \left(
	\me^{-\delta} [x - p^{-1}(0)] + \hat Y, \hat Z
	\right) - k = 0
	\right\}.
\end{equation*}
The domain of $\psi$ is
\begin{equation}
	\label{eq:G_set_gen}
	\mathsf G \coloneq \left\{ 
	(\xi, x, z) \in \RR_+ \times \SS 
	\colon \xi \in B(x) 
	\right\},
\end{equation}
where $B(x)$ is defined for each $x$ as 
\begin{equation}
	\label{eq:b_set}
	B(x) \coloneq 
	\begin{cases}
		[p_0(x), \infty),  &\text{if } p(b) = \infty, \\
		[p_0(x), p(b)],  &\text{if } p(b) < \infty.
	\end{cases}
\end{equation}

\begin{proposition}\label{pr:wel_def}
	If $f \in \cC$ and $(x,z) \in \SS$, then $Tf (x,z)$ is uniquely defined. 
\end{proposition}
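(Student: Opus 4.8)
The plan is to show that for fixed $f \in \cC$ and $(x,z) \in \SS$, the map $\xi \mapsto \psi(\xi, x, z)$ defined in~\eqref{eq:sol_euler} has exactly one fixed point in the interval $B(x)$ from~\eqref{eq:b_set}. The natural approach is to argue that $\psi(\cdot, x, z)$ is a nonincreasing self-map of $B(x)$ and then invoke a monotonicity-plus-continuity argument to pin down a unique crossing of the $45^\circ$ line. First I would check that $\psi(\cdot, x, z)$ maps $B(x)$ into itself: the outer $\min$ with $p(b)$ and the inner $\max$ with $p(x) \geq p_0(x)$ (using $p(x) \geq p_0(x)$ together with free disposal forcing $\xi \geq p_0(x)$) guarantee the value lies in $[p_0(x), p(b)]$ when $p(b) < \infty$, and in $[p_0(x), \infty)$ otherwise. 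Continuity of $\xi \mapsto I(\xi,x,z)$ (note $p^{-1}$ is continuous since $p$ is continuous and strictly decreasing), continuity of $f$, and dominated convergence applied to $\EE_z \hat M f(\cdot)$ — justified because $f - p_0$ is bounded and $\EE_z \hat M p_0(\hat Y) < \infty$ by~\eqref{eq:EY_bd} combined with~\eqref{eq:ezp} — give continuity of $\psi(\cdot, x, z)$ on $B(x)$.

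The key monotonicity step is that $\psi(\cdot, x, z)$ is \emph{nonincreasing} in $\xi$: as $\xi$ rises, $p^{-1}(\xi)$ falls (strict monotonicity of $p$), so $I(\xi,x,z)$ rises, hence $\me^{-\delta} I(\xi,x,z) + \hat Y$ rises, and since $f$ is decreasing in its first argument, $f(\me^{-\delta} I(\xi,x,z)+\hat Y, \hat Z)$ falls pointwise; taking $\EE_z \hat M (\cdot)$ (with $\hat M \geq 0$) preserves the inequality, and the $\min$/$\max$ operations are monotone, so $\psi(\xi, x, z)$ is nonincreasing. A nonincreasing continuous self-map of an interval has a fixed point by the intermediate value theorem applied to $\xi \mapsto \psi(\xi,x,z) - \xi$ (strictly decreasing, since $\psi$ is nonincreasing and $-\xi$ is strictly decreasing), which immediately delivers \emph{uniqueness} as well: a strictly decreasing continuous function crosses zero at most once. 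On the unbounded domain $[p_0(x),\infty)$ one checks the sign at the endpoints — at $\xi = p_0(x)$ we have $\psi - \xi \geq 0$, and as $\xi \to \infty$, $\psi$ stays bounded (by $p(b)$ if finite, or because $f - p_0$ is bounded and $p_0(\me^{-\delta} I(\xi,x,z) + \hat Y) \leq p_0(\hat Y)$ so the whole expression is bounded above uniformly in $\xi$) while $\xi \to \infty$, forcing $\psi - \xi < 0$ eventually — so a root exists.

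I expect the main obstacle to be the care needed at two points: first, verifying that the integrand $\hat M f(\me^{-\delta} I(\xi,x,z)+\hat Y,\hat Z)$ is genuinely integrable and that the integral is finite and continuous in $\xi$ uniformly enough to apply dominated convergence — this is where the bound $f \le p_0 + \text{const}$, assumption~\eqref{eq:EY_bd}, and the fact that $p^{-1}(\xi) \geq$ some lower value must be combined; and second, handling the boundary behavior on the unbounded-demand case $p(b) = \infty$, where $\XX = (b,\infty)$ and one must confirm $I(\xi,x,z)$ stays well-defined and the argument $\me^{-\delta}I(\xi,x,z)+\hat Y$ remains in $\XX$ almost surely (which follows from $\hat Y > b$ a.s. under~\eqref{eq:EY_bd}). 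Everything else — self-map, continuity, monotonicity, existence and uniqueness of the fixed point — then follows from elementary one-dimensional arguments.
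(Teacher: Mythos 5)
Your overall strategy coincides with the paper's: show that $\xi \mapsto \psi(\xi,x,z)$ is decreasing and continuous on $B(x)$, deduce uniqueness from the strict monotonicity of $\xi \mapsto \xi - \psi(\xi,x,z)$, and get existence from the intermediate value theorem via sign checks at the two ends of $B(x)$. Your continuity argument (dominated convergence with the envelope $\hat M f(\hat Y,\hat Z) \leq \hat M[p_0(\hat Y)+D]$, integrable by~\eqref{eq:EY_bd}), your monotonicity argument, and your treatment of the upper endpoint (including the unbounded case $p(b)=\infty$, where $\psi$ is decreasing and bounded while $\xi \to \infty$) all match the paper's proof.

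There is, however, a genuine gap at the lower endpoint. You justify $\psi(p_0(x),x,z) \geq p_0(x)$ by appealing to the inner $\max\{\cdot,\,p(x)\}$ together with the claimed inequality $p(x) \geq p_0(x)$. That inequality is backwards: $p_0(x) = \max\{p(x),0\} \geq p(x)$, with strict inequality precisely when $p(x) < 0$, i.e.\ when $x > p^{-1}(0)$. In that region the inner $\max$ only yields $\psi \geq \min\{p(x),p(b)\} = p(x) < 0 = p_0(x)$, which is not the bound you need, and indeed $\psi$ need not map $B(x)$ into itself for arbitrary $\xi$. The correct argument at $\xi = p_0(x) = 0$ uses the free-disposal truncation built into $I$: by~\eqref{eq:storage}, $I(0,x,z) \leq x_f^*(z) - p^{-1}(0)$, so the monotonicity of $f$ and the defining property of $x_f^*(z)$ give
\begin{equation*}
  \me^{-\delta}\EE_z \hat M f\!\left(\me^{-\delta} I(0,x,z) + \hat Y, \hat Z\right) - k
  \;\geq\;
  \me^{-\delta}\EE_z \hat M f\!\left(\me^{-\delta}[x_f^*(z)-p^{-1}(0)] + \hat Y, \hat Z\right) - k
  \;=\; 0,
\end{equation*}
whence $\psi(0,x,z) \geq \min\{\max\{0,p(x)\},p(b)\} = 0$. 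Without this step the existence of a solution $\xi \geq p_0(x)$ is not established on the region where the nonnegativity constraint on prices is relevant, which is exactly the case the free-disposal construction was introduced to handle. Everything else in your proposal goes through.
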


\begin{proof}
    Fix $f \in \cC$ and $(x,z) \in \SS$. Since $f$ is decreasing in its first
    argument and $p^{-1}$ is decreasing (by the inverse function theorem), the
    map $\xi \mapsto \psi(\xi, x, z)$ is decreasing. Since the left-hand-side of
    \fref{eq:sol_euler} is strictly increasing in $\xi$,~\eqref{eq:sol_euler}
    can have at most one solution. Hence, uniqueness holds.  Existence follows
    from the intermediate value theorem provided we can show that
	\begin{enumerate}
		\item[(a)] $\xi \mapsto \psi (\xi, x,z)$ is a continuous function,
		\item[(b)] there exists $\xi \in B(x)$ such that 
		$\xi \leq \psi (\xi, x,z)$, and
		\item[(c)] there exists $\xi \in B(x)$ such that 
		$\xi \geq \psi (\xi, x, z)$. 
	\end{enumerate}
	For part (a), it suffices to show that 
	\begin{equation*}
		g(\xi) \coloneq \EE_z \hat M f \left(
		\hat Y + \me^{-\delta} I(\xi,x,z), \hat Z
		\right)
	\end{equation*}
    is continuous on $B(x)$. To see this, fix $\xi \in B(x)$ and $\xi_n \to
    \xi$. Since $f \in \cC$, there exists $D \in \RR_+$ such that
	\begin{equation*}
		0 \leq \hat M f \left(
		\hat Y + \me^{-\delta} I(\xi_n,x,z), \hat Z
		\right)
		\leq \hat M f(\hat Y, \hat Z)
		\leq \hat M \, [ p_0(\hat Y) + D ].
	\end{equation*}
    Since $\EE_z \hat M p_0(\hat Y) = \EE_z \left[\EE_{\hat z} \hat M \,
    \EE_{\hat z}p_0(\hat Y)\right]$, the last term is integrable
    by~\eqref{eq:EY_bd}.  Hence, the dominated convergence theorem applies. From
    this fact and the continuity of $f$, $p^{-1}$, and $I$, we obtain $g(\xi_n)
    \to g(\xi)$. Hence, $\xi \mapsto \psi (\xi, x,z)$ is continuous.
	
	Regarding part (b), consider $\xi=p_0(x)$. If $p(x) \geq 0$, then $\xi = p(x)$ and thus
	\begin{equation*}
		\psi(\xi,x,z) \geq \min \{p(x), p(b)\}
		= p(x) = \xi.
	\end{equation*}
	If $p(x)<0$, then $\xi = 0$. In this case,
		$I(\xi,x,z) = I(0,x,z) \leq x_f^*(z) - p^{-1}(0)$.
	The monotonicity of $f$ and the definition of $x_f^*$ then imply that
	\begin{equation*}
		\me^{-\delta} \EE_z \hat M f \left(
		\me^{-\delta} I(\xi,x,z) + \hat Y, \hat Z
		\right) - k 
		\geq 
		\me^{-\delta} \EE_z \hat M f \left(
		\me^{-\delta} [x_f^*(z) - p^{-1}(0)] + \hat Y, \hat Z
		\right) - k
		=0.
	\end{equation*}
	By the definition of $\psi$, 
	\begin{equation*}
		\psi(\xi,x,z) \geq
		\min \{\max \{0,p(x)\}, p(b)\}
		= \min \{0,p(b)\} = 0 = \xi.
	\end{equation*}
	We have now verified part~(b).
	
    If $p(b) < \infty$, then part (c) holds by letting $\xi = p(b)$.  If $p(b) =
    \infty$, then part (c) holds as $\xi$ gets large since $\xi \mapsto \psi
    (\xi, x, z)$ is decreasing and bounded.
	
	In summary, we have verified both existence and uniqueness. 
\end{proof}

\begin{proposition}\label{pr:selfmap}
	$T f \in \cC$ for all $f \in \cC$.
\end{proposition}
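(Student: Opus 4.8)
The plan is to verify that $Tf$ satisfies the four defining properties of $\cC$: (i) $Tf(x,z)\ge p_0(x)$ on $\SS$; (ii) $x\mapsto Tf(x,z)$ is decreasing for each $z$; (iii) $\sup_{(x,z)\in\SS}|Tf(x,z)-p_0(x)|<\infty$; and (iv) $Tf$ is continuous on $\SS$. Property (i) is immediate: by construction and Proposition~\ref{pr:wel_def}, $Tf(x,z)$ is by definition the unique $\xi\ge p_0(x)$ solving $\xi=\psi(\xi,x,z)$. A fact I would use repeatedly is that on the domain $\mathsf G$ one has $I(\xi,x,z)\ge 0$: since $\xi\in B(x)$ forces $\xi\ge p_0(x)\ge p(x)$, monotonicity of $p^{-1}$ gives $p^{-1}(\xi)\le x$, and when $p(x)\le 0$ also $p^{-1}(\xi)\le p^{-1}(0)\le x^*_f(z)$; substituting into \eqref{eq:storage} yields $I(\xi,x,z)\ge 0$.

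For (ii), I would first observe that, for fixed admissible $(\xi,z)$, the map $x\mapsto I(\xi,x,z)$ is continuous and (weakly) increasing: by \eqref{eq:storage} it equals $x-p^{-1}(\xi)$ on $\{x<x^*_f(z)\}$, is constant on $\{x\ge x^*_f(z)\}$, and the two pieces agree at $x^*_f(z)$ (the upper piece being vacuous when $x^*_f(z)=\infty$). Since $f$ is decreasing in its first argument, $\hat M\ge 0$, and $p$ is decreasing, the right-hand side $\psi(\xi,x,z)$ of \eqref{eq:sol_euler} is then decreasing in $x$ on $\{x:\xi\in B(x)\}$. A comparison argument now closes the gap: if $x_1<x_2$ and $\xi_i\coloneq Tf(x_i,z)$, then $\xi_1\in B(x_2)$ (because $p_0(x_2)\le p_0(x_1)\le \xi_1\le p(b)$), and if $\xi_1<\xi_2$ we would get $\xi_2=\psi(\xi_2,x_2,z)\le\psi(\xi_1,x_2,z)\le\psi(\xi_1,x_1,z)=\xi_1$ — using monotonicity of $\psi$ in its first argument and then in $x$ — a contradiction. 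Hence $Tf(x_1,z)\ge Tf(x_2,z)$.

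For (iii), fix $(x,z)$, set $\xi\coloneq Tf(x,z)$ and $D\coloneq\|f-p_0\|$. Since $\me^{-\delta}I(\xi,x,z)\ge 0$, monotonicity of $f$ and the bound $f\le p_0+D$ give $\me^{-\delta}\EE_z\hat M f(\me^{-\delta}I(\xi,x,z)+\hat Y,\hat Z)-k\le \me^{-\delta}\EE_z\hat M\,[p_0(\hat Y)+D]-k=:C(z)$, which is finite by \eqref{eq:EY_bd} together with the conditional independence of $\hat M$ and $\hat Y$ given $\hat Z$, exactly as in the proof of Proposition~\ref{pr:wel_def}. From \eqref{eq:sol_euler}, $Tf(x,z)\le\max\{C(z),p(x)\}$, so $0\le Tf(x,z)-p_0(x)\le\max\{C(z),0\}$; taking the supremum over the finite set $\ZZ$ yields (iii).

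Property (iv) is the main obstacle. Since $\ZZ$ is finite, it suffices to show $x\mapsto Tf(x,z)$ is continuous on $\XX$ for each $z$. I would first upgrade the continuity-in-$\xi$ argument from the proof of Proposition~\ref{pr:wel_def} to \emph{joint} continuity of $(\xi,x)\mapsto\psi(\xi,x,z)$ on $\mathsf G$: $(\xi,x)\mapsto I(\xi,x,z)$ is jointly continuous, and for $g(\xi,x)\coloneq\EE_z\hat M f(\me^{-\delta}I(\xi,x,z)+\hat Y,\hat Z)$ the dominated convergence theorem applies along any convergent sequence in $\mathsf G$, with dominating function $\hat M\,[p_0(\hat Y)+D]$ (legitimate because $I\ge 0$ on $\mathsf G$) integrable by \eqref{eq:EY_bd}. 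Given $x_n\to x$, part (iii) shows $\xi_n\coloneq Tf(x_n,z)$ is bounded (as $p_0(x_n)\to p_0(x)$ and $p(x_n)\to p(x)$), so every subsequence has a further subsequence $\xi_{n_k}\to\xi^*$; passing to the limit in $\xi_{n_k}=\psi(\xi_{n_k},x_{n_k},z)$ by joint continuity gives $\xi^*=\psi(\xi^*,x,z)$, whence $\xi^*=Tf(x,z)$ by uniqueness in Proposition~\ref{pr:wel_def}. As every subsequence has a further subsequence converging to the same limit, $\xi_n\to Tf(x,z)$, proving continuity. The delicate points are the a priori boundedness of $\{\xi_n\}$ (which is precisely why (iii) must precede (iv)) and checking that the dominating function is uniform over the relevant neighborhood, with the kink of $I$ at $x^*_f(z)$ needing a little care.
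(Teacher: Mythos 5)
Your proposal is correct, and for three of the four properties --- the lower bound $Tf \geq p_0$, monotonicity in $x$, and the uniform bound on $Tf - p_0$ --- it follows essentially the same contradiction and domination arguments as the paper (the paper likewise reads $Tf(x,z)\geq p_0(x)$ off Proposition~\ref{pr:wel_def}, proves monotonicity by the same two-point comparison via the decreasingness of $\psi$ in $\xi$ and $x$, and bounds $Tf - p_0$ by $\me^{-\delta}\EE_z \hat M\,[p_0(\hat Y)+D]$). Where you genuinely diverge is the continuity step. The paper establishes joint continuity of $\psi$ in $(\xi,x,z)$ on $\mathsf G$ using a Fatou lemma for varying measures \citep{feinberg2014fatou} applied to the nonnegative functions $\hat M f(\hat Y,\hat Z)\pm \hat M f(\hat Y+\me^{-\delta}I,\hat Z)$, and then invokes a parametric continuity theorem (Theorem~B.1.4 of \citealp{stachurski2009economic}) together with the compact-valued continuous correspondence $\Gamma(x,z)$ in which the fixed point lives. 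You instead exploit the finiteness of $\ZZ$ to reduce to continuity in $x$ at fixed $z$, prove joint continuity of $(\xi,x)\mapsto\psi(\xi,x,z)$ by ordinary dominated convergence (the dominating function $\hat M[p_0(\hat Y)+D]$ is independent of $(\xi,x)$ precisely because $I\geq 0$ on $\mathsf G$, so the ``uniformity'' you worry about at the end is automatic), and close with a boundedness--subsequence--uniqueness argument. Your route is more elementary and self-contained --- it avoids both the varying-measure Fatou lemma and the maximum-theorem machinery --- and it makes explicit why the uniform bound (iii) must precede continuity (to get a priori boundedness of $\{Tf(x_n,z)\}$); the paper's route is heavier but would survive a non-finite exogenous state space where $z_n\to z$ is nontrivial. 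Both are valid proofs of the proposition as stated.
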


\begin{proof}
	Fix $f \in \cC$ and define $g (\xi, x, z) \coloneq 
	\EE_z \hat M 
	f \left( 
	\hat Y + \me^{-\delta} I(\xi,x,z), 
	\hat Z 
	\right)$.

    First, we show that $Tf$ is continuous. To this end, we first show that $\psi$
    in~\eqref{eq:sol_euler} is jointly continuous on the set $\mathsf G$ defined
    in~\eqref{eq:G_set_gen}. This will be true if $g$ is jointly continuous on
    $\mathsf G$. For any $\{(\xi_n, x_n, z_n)\}$ and $(\xi, x, z)$ in $\mathsf
    G$ with $(\xi_n, x_n, z_n) \to (\xi, x, z)$, we need to show that $g(\xi_n,
    x_n, z_n) \to g(\xi, x, z)$. Define
	\begin{equation*}
		h_1(\xi, x, z, \hat Z, \hat \epsilon, \hat \eta), \,
		h_2(\xi, x, z, \hat Z, \hat \epsilon, \hat \eta)
		\coloneq \hat M f (\hat Y, \hat Z )
		\pm \hat M 
		f \left( 
		\hat Y + \me^{-\delta} I(\xi, x, z), 
		\hat Z 
		\right),
	\end{equation*}
    where $\hat M \coloneq m(\hat Z, \hat \epsilon)$ and $\hat Y = y (\hat Z, \hat
    \eta)$. Then $h_1$ and $h_2$ are continuous in $(\xi, x, z,\hat Z)$ by the
    continuity of $f$, $p^{-1}$, and $I$, and non-negative by the monotonicity
    of $f$ in its first argument.
	
    Let $\pi_\epsilon$ and $\pi_\eta$ denote respectively the probability
    measure of $\{\epsilon_t\}$ and $\{\eta_t\}$. Fatou's lemma and Theorem~1.1
    of \citet{feinberg2014fatou} imply that
	\begin{align*}
		&\int \int \sum_{\hat z \in \ZZ} 
		h_i (\xi, x, z, \hat z, \hat \epsilon, \hat \eta) 
		\Phi (z, \hat z) 
		\pi_{\epsilon} \diff (\hat \epsilon) 
		\pi_\eta \diff (\hat \eta)  \\
		& \leq \int \int 
		\liminf_{n \to \infty} \sum_{\hat z \in \ZZ} 
		h_i (\xi_n, x_n, z_n, \hat z, \hat \epsilon, \hat \eta)
		\Phi(z_n, \hat z) 
		\pi_\epsilon (\diff \hat \epsilon) 
		\pi_\eta (\diff \hat \eta)  \\
		& \leq \liminf_{n \to \infty} \int \int
		\sum_{\hat z \in \ZZ} h_i(\xi_n, x_n, z_n, \hat z, \hat \epsilon, \hat \eta)
		\Phi(z_n, \hat z) 
		\pi_\epsilon (\diff \hat \epsilon) 
		\pi_\eta (\diff \hat \eta).
	\end{align*}
	Since in addition $z \mapsto \EE_z \hat M f(\hat Y, \hat Z)$ is continuous, 
	we have
	\begin{equation*}
		\pm \EE_{z} \hat M 
		f \left( \hat Y + \me^{-\delta} I(\xi,x,z), \, \hat Z \right)  
		\leq 
		\liminf_{n \to \infty} \left( 
		\pm \EE_{z_n} \hat M 
		f \left( \hat Y + \me^{-\delta} I(\xi_n, x_n,z_n), \, \hat Z \right)
		\right).
	\end{equation*}
	Then $g$ is continuous, since the above inequality is equivalent to 
	\begin{equation*}
		\limsup_{n \to \infty} g(\xi_n, x_n, z_n) 
		\leq g(\xi, x, z)
		\leq \liminf_{n \to \infty} g(\xi_n, x_n, z_n).
	\end{equation*}
	Hence, $\psi$ is continuous on $\mathsf G$, as was to be shown. Since $\xi \mapsto \psi (\xi, x,z)$ takes values in 
	\begin{equation*}
		\Gamma(x,z) \coloneq \left[
		p_0(x), \,
		\min \left\{
		p(b), \, 
		p_0(x) + \me^{-\delta} \EE_z \hat M (p_0(\hat Y) + D)
		\right\}
		\right]
	\end{equation*}
    for some $D \in \RR_+$, and the correspondence $(x,z) \mapsto \Gamma(x,z)$
    is nonempty, compact-valued and continuous, Theorem~B.1.4 of
    \citet{stachurski2009economic} implies that $Tf$ is continuous on $\SS$.
	
    Second, we show that $Tf$ is decreasing in $x$. Suppose for some $z
    \in \ZZ$ and $x_1$, $x_2 \in \XX$ with $x_1 < x_2$, we have $\xi_1 \coloneq Tf
    (x_1, z) < Tf(x_2, z) =: \xi_2$. Since $f$ is decreasing in its first
    argument by assumption and $I$ defined in~\eqref{eq:storage} is increasing
    in $\xi$ and $x$, $\psi$ is decreasing in $\xi$ and $x$. Then $\xi_2 > \xi_1
    = \psi (\xi_1, x_1, z) \geq \psi(\xi_2, x_2, z) = \xi_2$, which is a
    contradiction. 
	
	Third, we show that $\sup_{(x,z) \in \SS} | Tf (x,z) - p_0(x) | < \infty$. This obviously holds since
	\begin{align*}
		&|Tf(x,z) - p_0(x)| = Tf(x,z) - p_0(x)  \\
		&\leq \me^{-\delta} \EE_z \hat M 
		f \left( 
		\hat Y + \me^{-\delta} I \left(Tf(x,z), x, z \right), 
		\hat Z 
		\right)
		\leq \me^{-\delta} \EE_z \hat M [p_0(\hat Y) + D ]  
	\end{align*}
    for all $(x,z) \in \SS$ and some $D \in \RR_+$, and the last term is finite
    by~\eqref{eq:EY_bd}.
    
    Finally, Proposition~\ref{pr:wel_def} implies that $Tf(x,z) \in B(x)$ for
    all $(x,z) \in \SS$. In conclusion, we have shown that $Tf(x,z) \in \cC$.
\end{proof}

\begin{lemma}\label{lm:order_pres}
    $T$ is order preserving on $\cC$. That is, $T f_1 \leq T f_2$ for all $f_1,
    f_2 \in \cC$ with $f_1 \leq f_2$.
\end{lemma}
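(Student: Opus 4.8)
The plan is to reduce the claim to two ingredients: a monotonicity property of the free-disposal threshold in $f$, and the standard decreasing-$\psi$ fixed-point comparison already used in the proofs of Propositions~\ref{pr:wel_def} and~\ref{pr:selfmap}. Fix $f_1, f_2 \in \cC$ with $f_1 \leq f_2$, and for each $i$ let $\psi_{f_i}$, $I_{f_i}$, and $x_{f_i}^*$ denote the objects in~\eqref{eq:sol_euler}--\eqref{eq:storage} constructed from $f_i$.

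First I would establish that $x_{f_1}^*(z) \leq x_{f_2}^*(z)$ for every $z \in \ZZ$. Put $\phi_i(x) \coloneq \me^{-\delta} \EE_z \hat M f_i(\me^{-\delta}[x - p^{-1}(0)] + \hat Y, \hat Z) - k$ for $x \geq p^{-1}(0)$. Each $\phi_i$ is continuous (dominated convergence, exactly as in part~(a) of the proof of Proposition~\ref{pr:wel_def}) and decreasing in $x$ (since $f_i$ is decreasing in its first argument and $\hat M \geq 0$), with $\phi_i(p^{-1}(0)) \geq \me^{-\delta}\EE_z \hat M p(\hat Y) - k > 0$ by~\eqref{eq:ezp} and $f_i \geq p_0 \geq p$. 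Since $f_1 \leq f_2$ we have $\phi_1 \leq \phi_2$ pointwise. If $x_{f_2}^*(z) = \infty$ there is nothing to prove; otherwise continuity yields $\phi_2(x_{f_2}^*(z)) = 0$, so $\phi_1(x_{f_2}^*(z)) \leq 0$, and the intermediate value theorem applied to $\phi_1$ on $[p^{-1}(0), x_{f_2}^*(z)]$ produces a zero, whence $x_{f_1}^*(z) \leq x_{f_2}^*(z)$.

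With this in hand, fix $(x,z) \in \SS$ and split into two cases. If $x \geq x_{f_1}^*(z)$ (vacuous when $x_{f_1}^*(z) = \infty$), then $p_0(x) = 0$ and, substituting $\xi = 0$ into~\eqref{eq:sol_euler}, the definition of $x_{f_1}^*(z)$ forces $\psi_{f_1}(0,x,z) = 0$; by uniqueness (Proposition~\ref{pr:wel_def}) $Tf_1(x,z) = 0 \leq Tf_2(x,z)$. If instead $x < x_{f_1}^*(z)$, then $x < x_{f_2}^*(z)$ as well by the previous step, so $I_{f_1}(\xi,x,z) = I_{f_2}(\xi,x,z) = x - p^{-1}(\xi)$ for every $\xi \in B(x)$; using $f_1 \leq f_2$, $\hat M \geq 0$, and monotonicity of $\min$ and $\max$, this gives $\psi_{f_1}(\xi,x,z) \leq \psi_{f_2}(\xi,x,z)$ for all $\xi \in B(x)$. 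Writing $\xi_i \coloneq Tf_i(x,z)$ and using that $\xi \mapsto \psi_{f_1}(\xi,x,z)$ is decreasing on $B(x)$ (Proposition~\ref{pr:wel_def}), if $\xi_1 > \xi_2$ then $\xi_1 = \psi_{f_1}(\xi_1,x,z) \leq \psi_{f_1}(\xi_2,x,z) \leq \psi_{f_2}(\xi_2,x,z) = \xi_2$, a contradiction; hence $Tf_1(x,z) \leq Tf_2(x,z)$. As $(x,z)$ was arbitrary, $Tf_1 \leq Tf_2$.

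The main obstacle will be the threshold comparison in the second paragraph: because the storage map $I_f$, and therefore $\psi_f$, depends on $f$ through $x_f^*$, one cannot directly compare $\psi_{f_1}$ and $\psi_{f_2}$ at a common $\xi$ without first controlling how $x_f^*$ moves with $f$. Once monotonicity of $x_f^*$ is available, it cleanly partitions $\XX$ into the disposal region (where $Tf_1$ vanishes) and the storage region (where the two storage maps coincide), so that the remaining step is just the familiar decreasing-$\psi$ argument.
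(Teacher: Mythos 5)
Your proof is correct and follows essentially the same route as the paper's: both reduce the claim to $\psi_{f_1}\leq\psi_{f_2}$ together with the fact that $\xi\mapsto\psi_f(\xi,x,z)$ is decreasing, and then derive a contradiction from $Tf_1(x,z)>Tf_2(x,z)$. Your additional step establishing $x_{f_1}^*(z)\leq x_{f_2}^*(z)$ and splitting into the disposal and storage regions is a genuine refinement: the paper simply asserts that $f_1\leq f_2$ implies $\psi_{f_1}\leq\psi_{f_2}$ without addressing the dependence of $I(\xi,x,z)$ on $f$ through the threshold $x_f^*$, which is exactly the point your argument makes rigorous.
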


\begin{proof}
    Let $f_1, f_2$ be functions in $\cC$ with $f_1 \leq f_2$. Recall $\psi$
    defined in~\eqref{eq:sol_euler}. With a slight abuse of notation, we define
    $\psi_f$ such that $\psi_f(Tf(x,z),x,z) = Tf (x,z)$ for $f \in \{f_1,
    f_2\}$.  Then $f_1 \leq f_2$ implies that $\psi_{f_1} \leq \psi_{f_2}$.
    Suppose to the contrary that there exits $(x,z) \in \SS$ such that $\xi_1 \coloneq
    Tf_1 (x,z) > Tf_2 (x,z) = \xi_2$.
	
    Since we have shown that $\xi \mapsto \psi(\xi, x,z)$ is decreasing for each
    $f \in \cC$ and $(x,z) \in \SS$, we have $\xi_1 = \psi_{f_1} (\xi_1, x, z)
    \leq \psi_{f_2} (\xi_1, x, z) \leq \psi_{f_2} (\xi_2, x, z) = \xi_2$, which
    is a contradiction. Therefore, $T$ is order preserving.
\end{proof}

\begin{lemma}\label{lm:ctr}
	There exist $N \in \NN$ and $\alpha \in (0,1)$ such that, for all $n \geq 
	N$, 
	\begin{equation*}
		\max_{z \in \ZZ} \, 
		\EE_z \prod_{t=1}^{n} \me^{-\delta} M_t < \alpha^n.
	\end{equation*}
    Moreover, $D_1 \coloneq \sum_{t=0}^{\infty} \max_{z \in \ZZ} \EE_z \prod_{i=1}^t
    \me^{-\delta} M_i < \infty$.
\end{lemma}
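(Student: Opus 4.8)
The plan is to read off the quantity $\EE_z \prod_{t=1}^n \me^{-\delta} M_t$ from the operator $L$ of~\eqref{eq:ell} and then convert Assumption~\ref{a:opt} into a statement about exponential decay using the spectral reformulation in Corollary~\ref{cor:sL}. First I would apply~\eqref{eq:L_map} with the test function $h \equiv 1$ to get $L^n h(z) = \EE_z \prod_{t=1}^n M_t =: q_n(z)$, so that $\EE_z \prod_{t=1}^n \me^{-\delta} M_t = \me^{-n\delta} q_n(z)$. Since $L$ is a nonnegative matrix on the finite set $\ZZ$, the spectral radius formula~\eqref{eq:sL} — already used in the proof of Lemma~\ref{lm:yy} — applies with $h \equiv 1$ and the sup norm $\|f\| \coloneq \max_{z \in \ZZ} |f(z)|$; because $L^n h \geq 0$, this gives $\bigl( \max_{z \in \ZZ} q_n(z) \bigr)^{1/n} = \|L^n h\|^{1/n} \to s(L)$ as $n \to \infty$, hence $\bigl( \max_{z \in \ZZ} \me^{-n\delta} q_n(z) \bigr)^{1/n} \to \me^{-\delta} s(L)$.

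Next, Corollary~\ref{cor:sL} identifies Assumption~\ref{a:opt} with $s(L) < \me^{\delta}$, i.e.\ $\me^{-\delta} s(L) < 1$. So I would fix any $\alpha$ with $\me^{-\delta} s(L) < \alpha < 1$; the convergence just established produces an $N \in \NN$ such that $\bigl( \max_{z \in \ZZ} \me^{-n\delta} q_n(z) \bigr)^{1/n} < \alpha$ for all $n \geq N$, which is exactly the first claim $\max_{z \in \ZZ} \EE_z \prod_{t=1}^n \me^{-\delta} M_t < \alpha^n$. For the summability of $D_1$ I would split the series at $N$: each $q_t(z)$ is finite, being $L^t$ (a power of the finite, nonnegative matrix $L$) applied to the vector $h \equiv 1$, so the head $\sum_{t=0}^{N-1} \max_{z \in \ZZ} \me^{-t\delta} q_t(z)$ is a finite sum of finite terms, while the tail $\sum_{t \geq N} \max_{z \in \ZZ} \me^{-t\delta} q_t(z)$ is dominated by the convergent geometric series $\sum_{t \geq N} \alpha^t$; adding the two bounds gives $D_1 < \infty$.

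I do not anticipate a substantive obstacle here — the lemma is essentially a packaging of Gelfand's formula for the nonnegative matrix $\me^{-\delta} L$ together with Corollary~\ref{cor:sL}. The only point needing a little care is to invoke~\eqref{eq:sL} with the right test function ($h \equiv 1$) and norm (the sup norm) so that the growth rate appearing there is literally $\max_{z \in \ZZ} q_n(z)$, rather than having to pass through an equivalence-of-norms or matrix-norm argument.
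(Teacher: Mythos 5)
Your proposal is correct and follows essentially the same route as the paper's proof: both apply the spectral radius formula~\eqref{eq:sL} with $h \equiv 1$ and the sup norm to identify $\lim_n (\max_z \EE_z \prod_{t=1}^n M_t)^{1/n} = s(L)$, then invoke Corollary~\ref{cor:sL} to get $\me^{-\delta} s(L) < 1$ and extract $N$ and $\alpha$. Your explicit head/geometric-tail argument for $D_1$ is just a spelled-out version of the paper's remark that the second claim follows immediately from the first.
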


\begin{proof}
    The second inequality follows immediately from the first inequality. 
    To verify the first inequality, note that letting $h \equiv 1$ and $\|f\| =
    \max_{z \in \ZZ} |f(z)|$ in~\eqref{eq:sL} yields
	\begin{equation*}
		s(L) = \lim_{n \to \infty} \left(
		\max_{z\in \ZZ} \EE_z \prod_{t=1}^{n} M_t
		\right)^{1/n}.
	\end{equation*}
    Since $\me^{-\delta} s(L) <1$ by Corollary~\ref{cor:sL}, there exists $N \in
    \NN$ and $\alpha<1$ such that for all $n \geq N$,
	\begin{equation*}
		\me^{-\delta} \left(
		\max_{z\in \ZZ} \EE_z \prod_{t=1}^{n} M_t
		\right)^{1/n}
		= \left(
		\max_{z\in \ZZ} \EE_z \prod_{t=1}^{n} 
		\me^{-\delta} M_t
		\right)^{1/n} < \alpha. 
	\end{equation*}
	Hence, the first inequality holds, and the proof is now complete.
\end{proof}

To simplify notation, for given $\hat Y$, we denote
\begin{equation}
	\label{eq:simp_func}
	h(\xi, x,z) \coloneq \hat Y + \me^{-\delta} I(\xi,x,z) 
	\quad \text{and} \quad
	g(\zeta, x) \coloneq \min \left\{ \max \{\zeta - k, p(x) \}, p(b) \right\}.
\end{equation}
By definition, $\xi \mapsto h(\xi,x,z)$ and $\zeta \mapsto g(\zeta, x)$ are
increasing given $(x,z)$.

\begin{lemma}\label{lm:disc}
	For all $m \in \NN$, $(x,z) \in \SS$, and $\gamma \geq 0$, we have
	\begin{equation}
		\label{eq:disc}
		T^m (f + \gamma) (x,z) \leq 
		T^m f(x,z) + \gamma \, \EE_z \prod_{t=1}^m \me^{-\delta} M_t.
	\end{equation}
\end{lemma}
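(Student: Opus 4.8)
The plan is to reduce \eqref{eq:disc} to a one-step estimate and then induct on $m$. The one-step estimate I would prove is: for every $h \in \cC$ and every $\beta \colon \ZZ \to \RR_+$, viewed as the element $(x,z)\mapsto\beta(z)$ of $\cC$, one has $h + \beta \in \cC$ and
\begin{equation*}
	T(h + \beta)(x,z) \;\le\; Th(x,z) + \me^{-\delta}\, \EE_z \hat M\, \beta(\hat Z), \qquad (x,z) \in \SS .
\end{equation*}
Granting this, \eqref{eq:disc} for $m=1$ is the case $\beta \equiv \gamma$. For the step, if $T^m(f+\gamma) \le T^m f + \beta_m$ with $\beta_m(z) \coloneq \gamma\, \EE_z \prod_{t=1}^m \me^{-\delta}M_t$, then, since $T$ is order preserving (Lemma~\ref{lm:order_pres}) and $T^m f + \beta_m \in \cC$, applying $T$ and then the one-step estimate with $h = T^m f$ gives $T^{m+1}(f+\gamma) \le T^{m+1} f + \me^{-\delta}\EE_z\hat M\,\beta_m(\hat Z)$; a short computation with time-homogeneity and the Markov property identifies $\me^{-\delta}\EE_z\hat M\,\beta_m(\hat Z)$ with $\gamma\,\EE_z\prod_{t=1}^{m+1}\me^{-\delta}M_t$, closing the induction.

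To prove the one-step estimate, I would fix $(x,z)$, set $\xi^* \coloneq Th(x,z)$ and $\bar\xi \coloneq \xi^* + \me^{-\delta}\EE_z\hat M\,\beta(\hat Z)$, and write $\phi(s) \coloneq \min\{\max\{s,p(x)\},p(b)\}$, which is nondecreasing with $\phi(s+c) \le \phi(s)+c$ for $c \ge 0$. If $\bar\xi \ge p(b)$ the claim is trivial since $T(h+\beta)(x,z) \le p(b)$; otherwise $\bar\xi \in B(x)$ and, because $\xi \mapsto \psi(\xi,x,z)$ is decreasing (proof of Proposition~\ref{pr:wel_def}) and $T(h+\beta)(x,z)$ is the unique fixed point of $\xi\mapsto\psi_{h+\beta}(\xi,x,z)$ in $B(x)$, it suffices to verify $\psi_{h+\beta}(\bar\xi,x,z) \le \bar\xi$. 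Since $\beta$ is independent of its first argument, \eqref{eq:sol_euler} yields $\psi_{h+\beta}(\bar\xi,x,z) = \phi\big(\me^{-\delta}\EE_z\hat M\, h(\me^{-\delta} I_{h+\beta}(\bar\xi,x,z) + \hat Y,\hat Z) - k + \me^{-\delta}\EE_z\hat M\,\beta(\hat Z)\big)$ while $\xi^* = \phi\big(\me^{-\delta}\EE_z\hat M\, h(\me^{-\delta} I_h(\xi^*,x,z) + \hat Y,\hat Z) - k\big)$, where $I_g$ is the storage map \eqref{eq:storage} built from the rule $g$. Using the shift property and monotonicity of $\phi$, the desired inequality reduces to $\EE_z\hat M\, h(\me^{-\delta} I_{h+\beta}(\bar\xi,x,z)+\hat Y,\hat Z)\le\EE_z\hat M\, h(\me^{-\delta} I_h(\xi^*,x,z)+\hat Y,\hat Z)$, and since $h$ is decreasing in its first argument and $\hat M \ge 0$, it is enough to show $I_{h+\beta}(\bar\xi,x,z) \ge I_h(\xi^*,x,z)$.

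For the latter I would use the identity $I_g(\xi,x,z) = \min\{x, x_g^*(z)\} - p^{-1}(\xi)$ from \eqref{eq:storage} and split it into two monotonicities: $p^{-1}(\bar\xi) \le p^{-1}(\xi^*)$, which holds since $p^{-1}$ is decreasing and $\bar\xi \ge \xi^*$; and $x_{h+\beta}^*(z) \ge x_h^*(z)$, which holds because $x \mapsto \me^{-\delta}\EE_z\hat M\, g(\me^{-\delta}[x - p^{-1}(0)]+\hat Y,\hat Z) - k$ is continuous, nonincreasing, strictly positive at $x = p^{-1}(0)$ by \eqref{eq:ezp}, and pointwise at least as large for $g = h+\beta$ as for $g = h$, so its nonpositivity set --- an interval $[x_g^*(z),\infty)$ or empty --- begins no earlier for $g = h+\beta$. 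The main obstacle, and the reason this is not a one-line remark, is that the storage map $I$ and the free-disposal threshold $x^*$ in \eqref{eq:storage} depend on the pricing rule being iterated, so one must compare $I_{h+\beta}(\bar\xi,\cdot)$ against $I_h(\xi^*,\cdot)$ --- two different storage maps at two different prices --- rather than a single map, and the fact that makes this go through is the threshold monotonicity $x_{h+\beta}^*(z) \ge x_h^*(z)$. A secondary point to get right is the bookkeeping of the state-dependent shift $\beta_m$ through the induction, namely that one more application of $T$ reproduces $\gamma\,\EE_z\prod_{t=1}^{m+1}\me^{-\delta}M_t$ exactly via the Markov property. The remaining pieces --- the estimate for $\phi$, well-posedness of $T(h+\beta)$, and $h+\beta \in \cC$ --- are routine given Propositions~\ref{pr:wel_def}--\ref{pr:selfmap} and Lemma~\ref{lm:order_pres}.
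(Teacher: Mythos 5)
Your proposal is correct and follows essentially the same route as the paper: a one-step shift estimate combined with induction via the order-preservation of $T$ and the Markov property to telescope the discount factors into $\gamma\,\EE_z\prod_{t=1}^{m+1}\me^{-\delta}M_t$. If anything, your version is slightly more careful than the paper's on one point it elides --- you explicitly track that the storage map $I$ and free-disposal threshold $x^*$ in~\eqref{eq:storage} depend on the candidate pricing rule, and verify the needed monotonicity $x_{h+\beta}^*(z)\geq x_h^*(z)$ before comparing $I_{h+\beta}(\bar\xi,x,z)$ with $I_h(\xi^*,x,z)$.
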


\begin{proof}
    Fix $f \in \cC$, $\gamma \geq 0$, and let $f_\gamma (x,z) \coloneq f(x,z) +
    \gamma$. By the definition of $T$, 
	\begin{align*}
		Tf_\gamma (x,z) &= g \left[
		\me^{-\delta} \EE_z \hat M 
		f_\gamma \left( h [Tf_\gamma (x,z), x,z], \hat Z \right), x
		\right] \\
		&\leq g \left[
		\me^{-\delta} \EE_z \hat M 
		f \left(h [Tf_\gamma (x,z), x,z], \hat Z \right), x
		\right] + \gamma \, \EE_z \me^{-\delta} \hat M \\
		&\leq 
		g \left[
		\me^{-\delta} \EE_z \hat M
		f \left(h[Tf (x,z), x,z], \hat Z \right), x
		\right] + \gamma \, \EE_z \me^{-\delta} \hat M,
	\end{align*}
        where the second inequality is due to the fact that $f \leq f_\gamma$ and
        $T$ is order preserving. Hence,
        $T(f+\gamma) (x,z) \leq Tf(x,z) + \gamma \, \EE_z \me^{-\delta} \hat M$
        and~\eqref{eq:disc} holds for $m=1$. Suppose~\eqref{eq:disc} holds for
        arbitrary $m$. It remains to show that it holds for $m+1$.  For
        $z\in \ZZ$, let
        $\ell (z) \coloneq \gamma \, \EE_z \prod_{t=1}^{m} \me^{-\delta} M_t$.  By the induction
        hypothesis, Lemma~\ref{lm:order_pres}, and the Markov property,
	\begin{align*}
		T^{m+1} f_\gamma (x,z)  
		&= g \left[ 
		\me^{-\delta} \EE_z \hat M (T^{m} f_\gamma) 
		\left(h[T^{m+1} f_\gamma (x,z), x,z], \hat Z \right), x
		\right]  \\
		&\leq g \left[
		\me^{-\delta} \EE_z \hat M (T^{m} f + \ell) 
		\left(h[T^{m+1} f_\gamma (x,z), x,z], \hat Z \right), x
		\right]  \\
		&\leq g \left[
		\me^{-\delta} \EE_z \hat M (T^{m} f) 
		\left(h[T^{m+1} f_\gamma (x,z), x, z], \hat Z \right), x
		\right]
		+ \EE_z \me^{-\delta} M_1 \ell (Z_1) \\
		&\leq T^{m+1} f(x,z) + \gamma \, \EE_z \me^{-\delta} M_1 
		\EE_{Z_1} \me^{-\delta} M_1 \cdots \me^{-\delta} M_m \\
		&= T^{m+1} f(x,z) + \gamma \, \EE_z \prod_{t=1}^{m+1} \me^{-\delta} M_t.
	\end{align*}
	Hence~\eqref{eq:disc} holds by induction. 
\end{proof}

\begin{lemma}\label{lm:discounting}
	There exist $n\in \NN$ and $\theta \in (0,1)$ such that
	\begin{equation*}
		T^n (f+\gamma) \leq T^n f +\theta \gamma
		\quad \text{for all }  f \in \cC \text{ and }\gamma \geq 0. 
	\end{equation*}
\end{lemma}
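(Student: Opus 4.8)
The plan is to combine the pointwise bound of Lemma~\ref{lm:disc} with the uniform geometric decay supplied by Lemma~\ref{lm:ctr}. First I would fix $N \in \NN$ and $\alpha \in (0,1)$ as in Lemma~\ref{lm:ctr}, so that $\max_{z \in \ZZ} \EE_z \prod_{t=1}^{n} \me^{-\delta} M_t < \alpha^n$ for every $n \geq N$. Then I would choose any such $n \geq N$ (e.g.\ $n = N$) and set $\theta \coloneq \alpha^n$, noting $\theta \in (0,1)$ since $\alpha \in (0,1)$ and $n \geq 1$.

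Next, for arbitrary $f \in \cC$ and $\gamma \geq 0$, I would apply Lemma~\ref{lm:disc} with $m = n$ to obtain, for every $(x,z) \in \SS$,
\begin{equation*}
	T^n (f+\gamma)(x,z) \leq T^n f(x,z) + \gamma \, \EE_z \prod_{t=1}^{n} \me^{-\delta} M_t
	\leq T^n f(x,z) + \gamma \max_{z' \in \ZZ} \EE_{z'} \prod_{t=1}^{n} \me^{-\delta} M_t
	< T^n f(x,z) + \theta \gamma .
\end{equation*}
Since $(x,z)$ was arbitrary, this gives $T^n (f+\gamma) \leq T^n f + \theta \gamma$ on $\SS$, which is the claim.

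There is essentially no serious obstacle here: the two preceding lemmas do all the work, and the only point requiring a moment's care is that the constant $\theta$ must be chosen independently of $f$, $\gamma$, and the current state---this is exactly why one takes the maximum over $z \in \ZZ$ in the middle inequality and why $n$ must be taken at least as large as the threshold $N$ from Lemma~\ref{lm:ctr} (a one-period bound would not suffice, since $\EE_z \me^{-\delta} M_1$ need not be below $1$). The finiteness of $\ZZ$ guarantees the maximum is attained and strictly below $\alpha^n$, closing the argument.
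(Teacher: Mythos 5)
Your proposal is correct and follows essentially the same route as the paper: invoke Lemma~\ref{lm:ctr} to fix $n$ and $\alpha$ with $\max_{z}\EE_z\prod_{t=1}^{n}\me^{-\delta}M_t<\alpha^n$, set $\theta=\alpha^n$, and conclude via the pointwise bound of Lemma~\ref{lm:disc}. The only difference is that you spell out the uniformity-over-$z$ step explicitly, which the paper leaves implicit.
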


\begin{proof}
	By the first part of Lemma~\ref{lm:ctr}, there exist $n \in \NN$ and 
	$\alpha \in (0,1)$ such that 
	\begin{equation*}
		\EE_z \prod_{t=1}^n \me^{-\delta} M_t 
		< \alpha^n \quad \text{for all } z \in \ZZ.
	\end{equation*}
    Letting $\theta \coloneq \alpha^n$, we have $\theta < 1$. The stated claim then 
    follows from Lemma~\ref{lm:disc}.
\end{proof}

\begin{theorem}\label{t:opt_gen}
    If Assumption~\ref{a:opt} holds, then $T$ is well defined on the function
    space $\cC$, and there exists an $n \in \NN$ such that $T^n$ is a
    contraction mapping on $(\cC, \rho)$.  Moreover,
	\begin{enumerate}
		\item $T$ has a unique fixed point $f^*$ in $\cC$.
		\item The fixed point $f^*$ is the unique equilibrium pricing rule in $\cC$.
		\item For each $f$ in $\cC$, we have $\rho(T^kf, f^*)$ as $k \to \infty$.
	\end{enumerate}
\end{theorem}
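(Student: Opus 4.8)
The plan is to assemble the groundwork already laid in Propositions~\ref{pr:wel_def}--\ref{pr:selfmap} and Lemmas~\ref{lm:order_pres}--\ref{lm:discounting}. That $T$ is well defined as a self-map of $\cC$ is exactly the content of those two propositions: Proposition~\ref{pr:wel_def} shows that $Tf(x,z)$ is the unique solution of~\eqref{eq:sol_euler} for every $(x,z)\in\SS$, and Proposition~\ref{pr:selfmap} shows that $Tf\in\cC$ whenever $f\in\cC$. To obtain the eventual contraction I would run the $n$-step version of Blackwell's argument. Fix $f,g\in\cC$ and put $\gamma\coloneq\rho(f,g)$, so that $f\le g+\gamma$ and $g\le f+\gamma$ pointwise. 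Let $n\in\NN$ and $\theta\in(0,1)$ be as in Lemma~\ref{lm:discounting}. Applying order preservation (Lemma~\ref{lm:order_pres}) and then the discounting bound (Lemma~\ref{lm:discounting}) gives $T^nf\le T^n(g+\gamma)\le T^ng+\theta\gamma$, and symmetrically $T^ng\le T^nf+\theta\gamma$; hence $\rho(T^nf,T^ng)\le\theta\,\rho(f,g)$, i.e.\ $T^n$ is a contraction with modulus $\theta$ on $(\cC,\rho)$.

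Since $(\cC,\rho)$ is complete, Banach's fixed point theorem yields a unique $f^*\in\cC$ with $T^nf^*=f^*$. Because $T$ commutes with $T^n$, we have $T^n(Tf^*)=T(T^nf^*)=Tf^*$, so $Tf^*$ is also a fixed point of $T^n$ and therefore $Tf^*=f^*$; moreover every fixed point of $T$ is a fixed point of $T^n$, so $f^*$ is the unique fixed point of $T$, which is part~(1). For part~(3), write $k=jn+r$ with $0\le r<n$. Combining Lemma~\ref{lm:order_pres} with Lemma~\ref{lm:disc} applied with $m=r$ yields $\rho(T^rh_1,T^rh_2)\le\bar C\,\rho(h_1,h_2)$ for all $h_1,h_2\in\cC$, where $\bar C\coloneq\max_{0\le r<n}\max_{z\in\ZZ}\EE_z\prod_{t=1}^{r}\me^{-\delta}M_t$ (empty product equal to $1$), finite by Lemma~\ref{lm:ctr} and finiteness of $\ZZ$. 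Since $f^*=T^rf^*$, it follows that $\rho(T^kf,f^*)=\rho\big(T^r(T^{jn}f),T^rf^*\big)\le\bar C\,\rho(T^{jn}f,f^*)\le\bar C\,\theta^{\,j}\,\rho(f,f^*)\to0$ as $k\to\infty$ (so that $j\to\infty$), which gives part~(3).

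Part~(2) is the step requiring the most care, and I expect it to be the main obstacle, though it is bookkeeping rather than new estimates. For the forward direction I would unpack $f^*=Tf^*$: by construction $f^*(x,z)$ solves $\xi=\psi(\xi,x,z)$ of~\eqref{eq:sol_euler}, and, setting $I_t\coloneq I\big(f^*(X_t,Z_t),X_t,Z_t\big)$ with $I$ as in~\eqref{eq:storage} and $X_{t+1}$ as in~\eqref{eq:Xdef}, this equation is precisely the defining recursion of an equilibrium pricing rule together with the min/max truncation in~\eqref{eq:foc}. The point to verify is that the threshold $x_f^*(z)$ appearing in~\eqref{eq:storage} coincides with $x^*(z)=\inf\{x\in\XX:f^*(x,z)=0\}$ when $f=f^*$ --- which follows from $f^*$ being decreasing in its first argument together with the definition of $x_f^*$ --- so that $I\big(f^*(x,z),x,z\big)$ reduces to the equilibrium storage rule $i^*$ of~\eqref{eq:opt_stor}, while $f^*\ge 0$ holds because $f^*\in\cC$. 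Conversely, any equilibrium pricing rule $\hat f\in\cC$ satisfies the same pointwise recursion, so $T\hat f=\hat f$, and uniqueness of the fixed point of $T$ forces $\hat f=f^*$. Everything outside this last step is a formal consequence, via Banach's theorem, of the monotonicity and discounting lemmas already in hand; the only genuinely model-specific work is matching the storage threshold and truncations built into $T$ to the equilibrium objects $i^*$ and~\eqref{eq:foc}.
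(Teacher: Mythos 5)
Your proposal is correct and follows essentially the same route as the paper: well-definedness from Propositions~\ref{pr:wel_def}--\ref{pr:selfmap}, the $n$-step Blackwell conditions (order preservation plus Lemma~\ref{lm:discounting}) to make $T^n$ a contraction of modulus $\theta$ on the complete space $(\cC,\rho)$, and the Banach fixed point theorem together with the $T$-commutes-with-$T^n$ argument for claims (1)--(3). The paper delegates the Blackwell inequality, the $k=jn+r$ convergence step, and the identification of the fixed point with the equilibrium pricing rule to citations and to ``the definition''; you spell these out explicitly (the threshold identity $x^*(z)=x^*_{f^*}(z)$ you flag is immediate from Lemma~\ref{lm:threshold}(iii) applied at the fixed point), so there is no substantive difference.
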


\begin{proof}
    Proposition~\ref{pr:wel_def} shows that $T$ is a well-defined operator on
    $\cC$. Since in addition (i) $T$ is order preserving by 
    Lemma~\ref{lm:order_pres}, (ii) $\cC$ is closed under the addition of 
    non-negative constants, and (iii) there exist $n \in \NN$ and $\theta < 1$ 
    such that $T^n (f + \gamma) \leq T^n f + \theta \gamma$ for all $f \in \cC$ 
    and $\gamma \geq 0$ by Lemma~\ref{lm:discounting}, we have: $T^n$ is a 
    contraction mapping on $(\cC, \rho)$ of modulus $\theta$ based on
    \citet{blackwell1965discounted}. Claims~(i)--(iii) then follow from the 
    Banach contraction mapping theorem and the definition of the equilibrium 
    pricing rule.
\end{proof}

For each $f$ in $\cC$, we define
\begin{equation*}
	\bar p_f^0 (z) \coloneq \me^{-\delta} \EE_z \hat M 
	f(\hat Y, \hat Z)-k
	\quad \text{and} \quad
	\bar p_f(z) \coloneq \min \{\bar p_f^0 (z), p(b)\}.
\end{equation*}

\begin{lemma}\label{lm:threshold}
	For each $f$ in $\cC$, $Tf$ satisfies
	\begin{enumerate}
		\item $Tf(x,z) = p(x)$ if and only if $x \leq p^{-1} [\bar p_f(z)]$,
		\item $Tf(x,z) > p_0(x)$ if and only if $p^{-1} [\bar p_f(z)] < x < x_f^*(z)$, and
		\item $Tf(x,z) =0$ if and only if $x \geq x^*_f(z)$.
	\end{enumerate}
\end{lemma}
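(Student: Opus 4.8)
The plan is to reduce each of the three equivalences to a comparison of the fixed-point map $\xi \mapsto \psi(\xi,x,z)$ with the identity at the left endpoint $p_0(x)$ of $B(x)$, and then to evaluate $\psi$ there explicitly. Throughout fix $f \in \cC$ and $(x,z)\in\SS$, and abbreviate $\phi(\xi,x,z) \coloneq \me^{-\delta}\EE_z \hat M f(\me^{-\delta} I(\xi,x,z)+\hat Y,\hat Z)-k$, so that $\psi(\xi,x,z)=\min\{\max\{\phi(\xi,x,z),p(x)\},p(b)\}$. By the proof of Proposition~\ref{pr:wel_def}, $\xi\mapsto\psi(\xi,x,z)$ is continuous and decreasing on $B(x)$, so $\xi\mapsto\psi(\xi,x,z)-\xi$ is continuous and strictly decreasing, with $Tf(x,z)$ its unique zero. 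Hence for every $\xi_0\in B(x)$ one has $Tf(x,z)=\xi_0 \iff \psi(\xi_0,x,z)=\xi_0$, and likewise with $\leq$ or $\geq$; since $p_0(x)=\min B(x)$ and therefore $Tf(x,z)\geq p_0(x)$, the case I will use most is $Tf(x,z)=p_0(x) \iff \psi(p_0(x),x,z)\leq p_0(x)$.

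Next I would record two facts. First, since $f\geq p_0\geq p$ and $\hat M\geq 0$, assumption~\eqref{eq:ezp} gives $\bar p_f^0(z)=\me^{-\delta}\EE_z\hat M f(\hat Y,\hat Z)-k\geq \me^{-\delta}\EE_z\hat M p(\hat Y)-k>0$; the same assumption forces $p$ to be positive somewhere, hence (by strict monotonicity) $p(b)>0$, so $\bar p_f(z)=\min\{\bar p_f^0(z),p(b)\}>0$ and $p^{-1}[\bar p_f(z)]<p^{-1}(0)$. Second, set $\Psi(x,z)\coloneq \me^{-\delta}\EE_z\hat M f(\me^{-\delta}[x-p^{-1}(0)]+\hat Y,\hat Z)-k$; by a dominated-convergence argument as in the proof of Proposition~\ref{pr:wel_def}, $\Psi(\cdot,z)$ is continuous, it is decreasing (because $f$ is decreasing in its first argument), and $\Psi(p^{-1}(0),z)=\bar p_f^0(z)>0$. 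Since $x_f^*(z)=\inf\{x\geq p^{-1}(0):\Psi(x,z)=0\}$, the intermediate value theorem then yields $x_f^*(z)>p^{-1}(0)$, $\Psi(x,z)>0$ on $[p^{-1}(0),x_f^*(z))$, and $\Psi(x_f^*(z),z)=0$ whenever $x_f^*(z)<\infty$. Combining this with the definition~\eqref{eq:storage} of $I$: when $p(x)\geq 0$ we have $x\leq p^{-1}(0)<x_f^*(z)$, so $I(p(x),x,z)=x-p^{-1}(p(x))=0$ and $\phi(p(x),x,z)=\bar p_f^0(z)$; when $p(x)\leq 0$ we have $\phi(0,x,z)=\Psi(\min\{x,x_f^*(z)\},z)\geq 0$, equal to $0$ exactly when $x\geq x_f^*(z)$.

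Now the three claims follow. For (i): if $p(x)<0$ then $Tf(x,z)\geq 0>p(x)$ and $x>p^{-1}(0)>p^{-1}[\bar p_f(z)]$, so both sides fail; if $p(x)\geq 0$ then $p_0(x)=p(x)\leq p(b)$, and using the elementary identity $\min\{\max\{a,c\},b\}\leq c \iff \min\{a,b\}\leq c$ (valid whenever $c\leq b$: both hold when $a\leq c$, and they coincide when $a>c$) with $a=\bar p_f^0(z)$, $c=p(x)$, $b=p(b)$,
\[
  Tf(x,z)=p(x) \iff \psi(p(x),x,z)\leq p(x) \iff \min\{\bar p_f^0(z),p(b)\}\leq p(x) \iff x\leq p^{-1}[\bar p_f(z)].
\]
For (iii): if $p(x)>0$ then $Tf(x,z)\geq p(x)>0$ and $x<p^{-1}(0)<x_f^*(z)$, so both sides fail; if $p(x)\leq 0$ then $\phi(0,x,z)\geq 0\geq p(x)$, so $\psi(0,x,z)=\min\{\phi(0,x,z),p(b)\}$, and since $p(b)>0$,
\[
  Tf(x,z)=0 \iff \psi(0,x,z)=0 \iff \phi(0,x,z)=0 \iff x\geq x_f^*(z).
\]
Finally (ii) is obtained by negation: $Tf(x,z)>p_0(x)$ fails iff $Tf(x,z)=p_0(x)$, which by (i) and (iii) holds iff $x\leq p^{-1}[\bar p_f(z)]$ or $x\geq x_f^*(z)$; since $p^{-1}[\bar p_f(z)]<p^{-1}(0)<x_f^*(z)$ these two events exhaust the complement of $(p^{-1}[\bar p_f(z)],x_f^*(z))$, which is the claim.

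The step I expect to be the main obstacle is the analysis of $\Psi$ and $x_f^*(z)$: one must pass from the \emph{infimum} defining $x_f^*(z)$ to the dichotomy ``$\Psi>0$ strictly below $x_f^*(z)$ and $\Psi=0$ at $x_f^*(z)$'', which is delicate because $f$, and hence $\Psi$, is only weakly decreasing, so the zero set of $\Psi(\cdot,z)$ may be a nondegenerate interval; strict positivity to the left is recovered only by combining $\Psi(p^{-1}(0),z)=\bar p_f^0(z)>0$ with the intermediate value theorem, and the clamping of $I$ at $x_f^*(z)$ in~\eqref{eq:storage} is exactly what forces $\phi(0,x,z)$ to equal $0$ rather than a possibly negative $\Psi(x,z)$ once $x\geq x_f^*(z)$. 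A secondary point (handled by never dropping the outer $\min$ with $p(b)$) is the boundary regime $p(b)<\infty$ with $\bar p_f^0(z)>p(b)$, where $p^{-1}[\bar p_f(z)]=b$ and $Tf(b,z)=p(b)$ holds purely because of the capping.
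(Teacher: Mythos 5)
Your proof is correct, and it verifies all three equivalences. The underlying ingredients are the same as in the paper's proof---monotonicity of $\xi \mapsto \psi(\xi,x,z)$, the value of $I$ at the endpoint $\xi = p_0(x)$, assumption~\eqref{eq:ezp} to get $\bar p_f^0(z) > 0$, and the structure of $x_f^*(z)$---but the packaging differs. The paper argues each direction of each claim separately, mostly by contradiction: e.g.\ for the backward direction of (i) it splits on whether $\bar p_f^0(z) \geq p(b)$ and supposes $Tf(x,z) > p(x)$ to derive a contradiction via monotonicity of $f$, and for (iii) it runs two separate contradiction arguments off the definition of $x_f^*(z)$. You instead reduce everything to the single monotone-crossing criterion $Tf(x,z) = p_0(x) \iff \psi(p_0(x),x,z) \leq p_0(x)$ and then compute $\psi$ at the left endpoint explicitly ($\phi(p(x),x,z) = \bar p_f^0(z)$ when $p(x) \geq 0$; $\phi(0,x,z) = \Psi(\min\{x, x_f^*(z)\},z)$ when $p(x) \leq 0$), which turns each claim into an elementary inequality. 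This buys a more unified argument and makes explicit two points the paper treats only implicitly: the dichotomy that $\Psi(\cdot,z)$ is strictly positive on $[p^{-1}(0), x_f^*(z))$ and vanishes at $x_f^*(z)$ (needed because the zero set of a weakly decreasing function could a priori be an interval), and the ``both sides fail'' disposal of the regimes $p(x)<0$ in (i) and $p(x)>0$ in (iii). The cost is that you must justify the crossing criterion itself, but that is immediate from the strict monotonicity of $\xi \mapsto \psi(\xi,x,z) - \xi$ already established in Proposition~\ref{pr:wel_def}.
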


\begin{proof}
	Regarding claim~(i), suppose $Tf(x,z) = p(x)$. We show that 
	$x \leq p^{-1} [\bar p_f(z)]$. Note that in this case, 
	$x \leq p^{-1}(0) \leq x_f^*(z)$ since $p(x) = Tf(x,z) \geq 0$. 
	Hence,
	\begin{equation*}
		I \left[Tf(x,z), x, z \right] = x - p^{-1}[Tf(x,z)] = 0.
	\end{equation*}
	By the definition of $T$, we have
	\begin{align*}
		p(x) = Tf(x,z) &= \min\left\{
		\max\left\{
		\me^{-\delta} \EE_z \hat M f(\hat Y, \hat Z)-k, p(x) 
		\right\}, p(b)
		\right\} \\
		&\geq \min\left\{
		    \bar p_f^0(z), p(b)
		\right\}
		= \bar p_f(z).
	\end{align*}
	Since $p$ is decreasing, this implies $x \leq p^{-1} [\bar p_f(z)]$.
	
	Next, we prove that $x \leq p^{-1} [\bar p_f(z)]$ implies 
	$Tf(x,z) = p(x)$. If $\bar p_f^0(z) \geq p(b)$, then 
	\begin{equation*}
		\bar p_f(z) = p(b) \quad \Longrightarrow \quad
		x \leq p^{-1}[\bar p_f(z)] = p^{-1} [p(b)] = b.
	\end{equation*}
	Hence $x = b$. Then by definition $Tf(x,z) = \min\{\bar p_f^0(z), p(b)\} = p(b) = p(x)$. 
	
	If $\bar p_f^0(z) < p(b)$,
	then $\bar p_f (z) = \bar p_f^0(z)$. Since in addition
	\begin{equation*}
		\bar p_f^0 (z) \geq \me^{-\delta} 
		\EE_z \hat M p(\hat Y) - k >0
		\quad \text{and} \quad
		x \leq p^{-1} [\bar p_f(z)],
	\end{equation*}
	we have $x < p^{-1}(0)\leq x_f^*(z)$ in this case. 
	Suppose to the contrary that $Tf(x,z) > p(x)$ for some 
	$(x,z) \in \SS$. Then by the definition of $T$,
	\begin{align*}
		p(x) < \me^{-\delta} \EE_z \hat M f \left[
		\me^{-\delta} \left(
		x - p^{-1} [Tf(x,z)]
		\right) + \hat Y, \hat Z
		\right] - k.
	\end{align*}
	The monotonicity of $f$ in its first argument then 
	implies that 
	\begin{equation*}
        p(x) < \me^{-\delta} \EE_z \hat M f(\hat Y, \hat Z) - k = \bar p_f^0(z)
        = \bar p_f(z),
	\end{equation*}
	which is a contradiction. Claim~(i) is now verified.	
	
	Note that claim~(ii) follows immediately once claim~(iii) is 
	verified. To see that claim~(iii) is true, suppose to the 
	contrary that $x \geq x_f^*(x)$ and $Tf(x,z) >0$ for some 
	$(x,z) \in \SS$. Then
	\begin{equation*}
		I\left[Tf(x,z),x,z\right] = x^*_f(z) - p^{-1}[Tf(x,z)] 
		> x^*_f(z) - p^{-1}(0).
	\end{equation*} 
	By the definition of $x_f^*(z)$ and the monotonicity of $f$,
	this gives 
	\begin{equation*}
		\me^{-\delta} \EE_z \hat M f \left(
		\me^{-\delta} I \left[
		Tf(x,z), x,z
		\right] + \hat Y, \hat Z
		\right) - k 
		\leq 0.
	\end{equation*}
	Using the definition of $T$, we obtain
	$0 < Tf(x,z) \leq \min \{\max\{0,p(x)\}, p(b)\} = 0$,
	which is a contradiction. Hence, $x \geq x^*_f(z)$ 
	implies $Tf(x,z)=0$.
	
	Now suppose $Tf(x,z) = 0$. The definition of $T$ 
	implies that
	\begin{equation*}
		\me^{-\delta} \EE_z \hat M f \left(
		\me^{-\delta} I(0, x, z) + \hat Y, \hat Z
		\right) - k \leq 0.
	\end{equation*}
	By the definition of $x_f^*(z)$, this gives 
	$x \geq x_f^*(z)$. Claim~(iii) is now verified. 
\end{proof}

\begin{proof}[Proof of Theorem~\ref{t:opt}]
	Theorem~\ref{t:opt_gen} implies that there exists 
	a unique equilibrium pricing rule $f^*$ in $\cC$. 
	Claims~(i)--(iii) follow immediately from 
	Lemma~\ref{lm:threshold} since
		$\bar p(z) = \bar p_{f^*} (z)$
	and $f^*$ is the unique fixed point of $T$ in $\cC$.
	
	To see that claim~(iv) holds, suppose $f^*(x,z)$ 
	is not strictly decreasing in $x$ under the given conditions. 
	Then by claims~(i)--(iii), there exists $z \in \ZZ$ 
	with $\me^{-\delta} \EE_z \hat M < 1$ and a first 
	interval $[x_0, x_1] \subset 
	\left(p^{-1}[\bar p(z) ], x^*(z) \right)$ 
	such that $f^*(x,z) \equiv B$ on this interval 
	for some constant $B>0$. By the definition of $T$, 
	for all $x \in [x_0,x_1]$, 
	\begin{equation*}
		B = f^*(x,z) = \me^{-\delta} \EE_z \hat M f \left(
		\me^{-\delta} I\left[
		f^*(x,z), x,z
		\right] + \hat Y, \hat Z
		\right) - k.
	\end{equation*}
	Since the left-hand-side is a constant, 
	$f \left(
	\me^{-\delta} I\left[
	f^*(x,z), x,z
	\right] + \hat Y, \hat Z
	\right) = B'(\hat Z)$ 
	for some constant $B'(\hat Z)$. Moreover, $B'(\hat Z)\leq B$ since 
	$f$ is decreasing in $x$ and $[x_0,x_1]$ is the 
	first interval on which $f$ is constant in $x$. Since in 
	addition $\me^{-\delta} \EE_z \hat M < 1$, we have
	$B \leq \me^{-\delta} \EE_z \hat M B - k < B - k \leq B$,
	which is a contradiction. Hence claim~(iv) must be true.
\end{proof}

\begin{proof}[Proof of Proposition~\ref{pr:storage}]
    The continuity of $i^*$ and claims~(i)--(iii) follow from
    Theorem~\ref{t:opt} and the definition of $i^*$. We next show that
    $i^*(x,z)$ is increasing in $x$. Since $i^*(x,z)$ is constant given $z$ when
    $x \leq p^{-1}[\bar p(z)]$ and when $x \geq x^*(z)$ by claim~(i) and
    claim~(iii), it remains to show that $i^*(x,z)$ is increasing in $x$ when
    $p^{-1} [\bar p(z)] < x<x^*(z)$. In this case
		$i^*(x,z) = x - p^{-1} [f^*(x,z)]$.
	Suppose to the contrary that there exist $z \in \ZZ$ and 
	$x_1, x_2 \in \left(p^{-1} [\bar p(z)], x^*(z)\right)$ 
	such that $x_1 < x_2$ and $i^*(x_1,z) > i^*(x_2,z)$. 
	Then by definition,
	\begin{equation*}
		x_1 - p^{-1}[f^*(x_1,z)] > x_2 - p^{-1}[f^*(x_2,z)].
	\end{equation*}
	Since $x_1 < x_2$, this gives 
		$p^{-1}[f^*(x_2, z)] > p^{-1} [f^*(x_1,z)]$.
	But by (ii) of Theorem~\ref{t:opt} and the 
	definition of $T$, we obtain
	\begin{align*}
		f^*(x_1,z) &= \me^{-\delta} \EE_z \hat M f^* \left( 
		\me^{-\delta} i^*(x_1, z) + \hat Y, \hat Z 
		\right)-k  \\
		&\leq \me^{-\delta} \EE_z \hat M f^* \left( 
		\me^{-\delta} i^*(x_2,z) + \hat Y, \hat Z 
		\right)-k  
		\leq f^*(x_2,z),
	\end{align*}
	which implies $p^{-1}[f^*(x_1, z)] \geq p^{-1}[f^*(x_2,z)]$. 
	This is a contradiction. Hence, it is true that $i^*(x,z)$ is 
	increasing in $x$.
	
	It remains to verify claim~(iv). Pick any $z \in \ZZ$ 
	and $x_1, x_2 \in \left(p^{-1} [\bar p(z)], x^*(z)\right)$ 
	with $x_1 < x_2$. By claim~(iv) of Theorem~\ref{t:opt}, 
	we have $f^*(x_1,z) > f^*(x_2,z)$. Using the definition 
	of $T$ and claim~(ii) of Theorem~\ref{t:opt} again, 
	we have
	\begin{equation*}
		\EE_z \hat M f^* \left( 
		\me^{-\delta} i^*(x_1, z) + \hat Y, \hat Z 
		\right) > 
		\EE_z \hat M f^* \left( 
		\me^{-\delta} i^*(x_2, z) + \hat Y, \hat Z 
		\right).
	\end{equation*}
	The monotonicity of $f^*$ then gives $i^*(x_1,z)< i^*(x_2,z)$. 
	Hence claim~(iv) holds.
\end{proof}

\section{Proof of Section~\ref{sec:itheory} Results}\label{sec:pf_itheory}

A Markov chain $\{Z_t\}$ with transition matrix $F$ is called
\textit{monotone} if
\begin{equation*}
	\int h(\hat z) \diff F(z_1, \hat z) \leq 
	\int h(\hat z) \diff F(z_2, \hat z)
\end{equation*}
whenever $z_1 \leq z_2$ and $h: \ZZ \to \RR$ is bounded and increasing.
In proofs, when stating equality or inequality conditions between different 
random variables, we understand them as holding with probability one.

\begin{proof}[Proof of Proposition~\ref{pr:mono_z}]
	Let $\cC_1$ be the elements in $\cC$ such that $z \mapsto f(x,z)$ is decreasing
	for all $x$. Obviously, $\cC_1$ is a closed subset of $\cC$. Therefore, to show 
	that $z \mapsto f^*(x,z)$ is decreasing for all $x$, it suffices to verify 
	$T \cC_1 \subset \cC_1$.
	
	Fix $f \in \cC_1$ and $z_1, z_2 \in \ZZ$ with $z_1 \leq z_2$. Suppose there exists 
	an $x$ such that 
	\begin{equation}\label{eq:equal}
		\xi_1 \coloneq Tf(x, z_1) < Tf(x, z_2) =:\xi_2.
	\end{equation}
	Note that $f$ is a 
	decreasing function since $f\in \cC_1$. Moreover, by assumption $m(z, \epsilon) 
	= 1/r(z, \epsilon)$ is decreasing in $z$, $y(z, \eta)$ is increasing in $z$, and 
	$\Phi$ is monotone. Therefore, for all $\xi \in B(x)$, we have
	\begin{equation}\label{eq:mono}
		\EE_{z_1} \hat M f \left(
		\me^{-\delta} [x - p^{-1}(\xi)] +\hat Y, \hat Z
		\right)
		\geq \EE_{z_2} \hat M f \left(
		\me^{-\delta} [x - p^{-1}(\xi)] + \hat Y, \hat Z
		\right).
	\end{equation} 
	In particular, by the definition of $x_f^*$, we have $x_f^*(z_2) \leq x_f^*(z_1)$.
	If $x < x_f^*(z_2)$, then 
	\begin{equation*}
		I(\xi_1, x, z_1) = x - p^{-1}(\xi_1) \leq
		x - p^{-1}(\xi_2) = I(\xi_2, x, z_2).
	\end{equation*}
	Recall $\psi$ defined in~\eqref{eq:sol_euler}. The above inequality
        and~\eqref{eq:mono} imply that
	\begin{equation*}
		\xi_1 = \psi (\xi_1, x, z_1) \geq \psi (\xi_1, x, z_2) \geq 
		\psi(\xi_2, x, z_2) = \xi_2.
	\end{equation*}
	If $x \geq x_f^*(z_2)$, then we also have $\xi_1 \geq \xi_2$ since 
	$\xi_2 = 0$ and $\xi_1 \geq 0$. In either case, this is contradicted 
	with~\eqref{eq:equal}. Therefore, we have shown that $z \mapsto Tf(x,z)$ 
	is decreasing for all $x$ and $T \cC_1 \subset \cC_1$. It then follows 
	that $z \mapsto f^*(x,z)$ is decreasing for all $x$.
	
	To see that $i^*(x,z)$ is decreasing in $z$, pick any $z_1, z_2 \in \ZZ$ 
	with $z_1 \leq z_2$. By the definition of $x^*(z)$ and the monotonicity 
	of $f^*(x,z)$ in $z$, we have
	\begin{equation*}
		0 = f^*(x^*(z_1), z_1) \geq f^*(x^*(z_1), z_2)
	\end{equation*}
	and thus $x^*(z_1) \geq x^*(z_2)$. The definition of $i^*$ and the monotonicity of $p^{-1}$ and $f^*$ then implies that 
	\begin{align*}
		i^*(x,z_1) &= \min \{x, x^*(z_1)\} - p^{-1}[f^*(x,z_1)]  \\
		&\geq \min \{x, x^*(z_2)\} - p^{-1}[f^*(x,z_2)] = i^*(x,z_2).
	\end{align*}
	Hence $z \mapsto i^*(x,z)$ is decreasing for all $x$.
	
	Finally, note that
	$\hat Z \mapsto \hat M f^*(\hat Y, \hat Z) 
	= m(\hat Z, \hat \epsilon) f^*(y(\hat Z, \hat \eta), \hat Z)$ is decreasing 
	because $f^*$ is decreasing, $y$ is increasing in $z$, and $m$ is decreasing in $z$.
	Since in addition $\{Z_t\}$ is monotone, it follows immediately by definition that 
	$z \mapsto \EE_z \hat M f^*(\hat Y, \hat Z)$ is decreasing. 
	Hence $\bar p$ is decreasing by definition.
\end{proof}

Next, we discuss the correlation between commodity price and stochastic discount
factor. To state the result, we suppose $Z_t = (Z_{1t}, \ldots, Z_{nt})$ takes
values in $\RR^n$. The following is a simple corollary of the key result of
\citet{fortuin1971correlation}.

\begin{lemma}[Fortuin--Kasteleyn--Ginibre]\label{lm:fkg}
    If $f, g$ are decreasing integrable functions
    on $\RR^n$ and  $W = (W_1, \ldots, W_n)$ is a random vector on $\RR^n$
    such that $\{W_1, \ldots, W_n\}$ are independent, then 
	$\EE f(W) \EE g(W) \leq \EE f(W) g(W)$. 
\end{lemma}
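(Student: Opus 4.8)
The plan is to prove the inequality by induction on the dimension $n$, bootstrapping from the one-dimensional correlation (Harris/Chebyshev) inequality: if $\phi$ and $\psi$ are decreasing, integrable functions of a single real random variable $V$, then $\EE \phi(V)\psi(V) \geq \EE\phi(V)\,\EE\psi(V)$.

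First I would establish this base case by the standard coupling trick. Let $V'$ be an independent copy of $V$. Because $\phi$ and $\psi$ are both decreasing, the two factors of $(\phi(V)-\phi(V'))(\psi(V)-\psi(V'))$ are always of the same (weak) sign, so this product is pointwise nonnegative. Expanding it and taking expectations, using $V'\perp V$ and that $V'$ has the same law as $V$, gives $2\,\EE[\phi(V)\psi(V)] - 2\,\EE\phi(V)\,\EE\psi(V) \geq 0$. The relevant expectations are well defined in $(-\infty,\infty]$: the negative part of $\phi(V)\psi(V)+\phi(V')\psi(V')$ is dominated by $|\phi(V)\psi(V')| + |\phi(V')\psi(V)|$, which is integrable since $\phi$ and $\psi$ are each integrable and $V'\perp V$; hence no integrability of the product itself needs to be assumed.

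Next I would run the induction. Assuming the claim in dimension $n-1$, write $W=(W',W_n)$ with $W'=(W_1,\dots,W_{n-1})$, which is independent of $W_n$ and has independent coordinates. For fixed $w_n$, the maps $w'\mapsto f(w',w_n)$ and $w'\mapsto g(w',w_n)$ are decreasing on $\RR^{n-1}$ and, by Fubini, integrable with respect to the law of $W'$ for almost every $w_n$. Put $F(w_n):=\EE f(W',w_n)$ and $G(w_n):=\EE g(W',w_n)$; monotonicity of the integral makes $F$ and $G$ decreasing in $w_n$, and $\EE F(W_n)=\EE f(W)$, $\EE G(W_n)=\EE g(W)$. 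The inductive hypothesis, applied for a.e.\ $w_n$, yields
\[
  \EE\bigl[f(W',w_n)\,g(W',w_n)\bigr]\;\geq\; F(w_n)\,G(w_n),
\]
and integrating over $W_n$ with the tower rule (legitimate since $W'\perp W_n$) gives $\EE f(W)g(W)\geq \EE[F(W_n)G(W_n)]$. Since $F$ and $G$ are decreasing functions of the single variable $W_n$, the base case gives $\EE[F(W_n)G(W_n)]\geq \EE F(W_n)\,\EE G(W_n)=\EE f(W)\,\EE g(W)$, and chaining the two inequalities closes the induction.

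The only real subtlety I foresee is integrability: $fg$ need not be integrable a priori, so each expectation above must be checked to be well defined, and I expect this to be the main (and mild) obstacle. It is handled either by repeating the negative-part domination estimate from the base case at every stage, or---if one prefers a cleaner narrative---by first proving the inequality for bounded decreasing $f,g$, where all quantities are trivially finite, and then passing to general integrable $f,g$ via the truncations $f_N:=(f\wedge N)\vee(-N)$ and $g_N:=(g\wedge N)\vee(-N)$, which stay decreasing, satisfy $\EE f_N\to\EE f$ and $\EE g_N\to\EE g$ by dominated convergence, and obey $\liminf_N \EE f_N g_N \geq \EE fg$ by Fatou applied to nonnegative parts. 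No appeal to the general distributive-lattice form of \citet{fortuin1971correlation} is needed, since the law of $W$ is a product measure.
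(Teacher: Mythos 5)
Your proof is correct, but it takes a genuinely different route from the paper: the paper offers no proof at all, simply invoking the general lattice-theoretic result of \citet{fortuin1971correlation} and noting the lemma is a corollary. You instead give a self-contained elementary argument that exploits the product structure of the law of $W$: the one-dimensional base case via the coupling identity $\bigl(\phi(V)-\phi(V')\bigr)\bigl(\psi(V)-\psi(V')\bigr)\geq 0$ (Chebyshev/Harris), followed by induction on the dimension using the decreasing marginals $F(w_n)=\EE f(W',w_n)$ and $G(w_n)=\EE g(W',w_n)$. This is really a proof of the Harris inequality, which is exactly the special case of FKG needed here since independence of the coordinates makes the joint law a product measure and renders the FKG lattice condition automatic. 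What your approach buys is transparency and self-containedness --- no appeal to the distributive-lattice machinery --- at the cost of a few lines of bookkeeping about integrability of $fg$, which the citation route sweeps under the rug. Your handling of that bookkeeping in the main (coupling) route is sound: the negative part of $\phi(V)\psi(V)+\phi(V')\psi(V')$ is dominated by the integrable cross terms, so all expectations are well defined in $(-\infty,\infty]$ and the inequalities hold in the extended sense.

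One small caution on the alternative truncation route you sketch at the end: the inequality you attribute to Fatou, $\liminf_N \EE f_N g_N \geq \EE fg$, points the wrong way for closing the argument. From $\EE f_N g_N \geq \EE f_N\,\EE g_N \to \EE f\,\EE g$ you need an \emph{upper} bound $\limsup_N \EE f_N g_N \leq \EE fg$, which follows instead by splitting into the cases $\EE (fg)^+=\infty$ (where the conclusion is trivial) and $fg\in L_1$ (where $|f_Ng_N|\leq |fg|$ and dominated convergence applies); one must also separately rule out $\EE(fg)^-=\infty$, which the truncation scheme by itself does not do. Since your primary coupling argument already settles all of this, the lemma is proved; I only flag the slip so the alternative is not quoted as-is.
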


Lemma~\ref{lm:fkg} implies that if $f$ is decreasing and $g$ is
nondecreasing (so that $-g$ is decreasing), then we have $\EE f(W) \EE g(W) \geq
\EE f(W) g(W)$. 

\begin{proposition}\label{pr:pos_corr} 
    If $m(z, \epsilon)$ is decreasing in $z$, $y(z,\eta)$ is
    nondecreasing in $z$, $\Phi$ is monotone, and $\{Z_{1t}, \ldots, Z_{nt}\}$
    are independent for each fixed $t$, then
	$\Cov_{t-1} (P_t, M_t ) \geq 0$ for all $t \in \NN$.
\end{proposition}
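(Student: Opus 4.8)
The plan is to condition on the time-$(t-1)$ information and reduce the conditional covariance of $P_t$ and $M_t$ to the conditional covariance of two \emph{decreasing} functions of $Z_t$ alone, at which point the Fortuin--Kasteleyn--Ginibre inequality (Lemma~\ref{lm:fkg}) finishes the argument.

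First I would fix the time-$(t-1)$ information; by the Markov property this amounts to conditioning on $(X_{t-1}, Z_{t-1})$. Then $c \coloneq \me^{-\delta} i^*(X_{t-1}, Z_{t-1})$ is a constant, $X_t = c + y(Z_t, \eta_t)$, and hence $P_t = f^*(c + y(Z_t, \eta_t), Z_t)$ while $M_t = m(Z_t, \epsilon_t)$. Under the present hypotheses the argument in the proof of Proposition~\ref{pr:mono_z} applies (it uses only that $m(z,\epsilon)$ is decreasing in $z$, $y(z,\eta)$ is nondecreasing in $z$, and $\Phi$ is monotone), so $f^*(x,z)$ is decreasing in $z$; it is decreasing in $x$ by construction. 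Since $y(\cdot,\eta)$ is nondecreasing, for each fixed $\eta$ the map $z \mapsto f^*(c + y(z,\eta), z)$ is decreasing, and trivially $z \mapsto m(z,\epsilon)$ is decreasing for each fixed $\epsilon$.

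Next I would integrate out the innovations. Because $\{\epsilon_t\}$ and $\{\eta_t\}$ are {\sc iid}, mutually independent, and independent of $Z_t$ given the time-$(t-1)$ information, conditioning further on $Z_t$ makes $P_t$ a function of $\eta_t$ only and $M_t$ a function of $\epsilon_t$ only, with the two sources independent; hence $\EE_{t-1}[P_t M_t \mid Z_t] = \phi(Z_t)\,\mu(Z_t)$, where
\begin{equation*}
  \phi(z) \coloneq \int f^*(c + y(z,\eta), z)\, \pi_\eta(\diff \eta)
  \quad\text{and}\quad
  \mu(z) \coloneq \int m(z,\epsilon)\, \pi_\epsilon(\diff \epsilon) = \EE[M_t \mid Z_t = z].
\end{equation*}
Both $\phi$ and $\mu$ are decreasing in $z$, since averaging preserves the monotonicity recorded above (and $\mu$ is finite by the maintained assumptions, cf.\ the definition of $L$ in~\eqref{eq:ell}). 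Taking expectations and using the tower property gives $\EE_{t-1}[P_t M_t] = \EE_{t-1}[\phi(Z_t)\mu(Z_t)]$, $\EE_{t-1}[P_t] = \EE_{t-1}[\phi(Z_t)]$, and $\EE_{t-1}[M_t] = \EE_{t-1}[\mu(Z_t)]$, so $\Cov_{t-1}(P_t, M_t) = \Cov_{t-1}(\phi(Z_t), \mu(Z_t))$. Finally, conditional on the time-$(t-1)$ information the coordinates $Z_{1t}, \ldots, Z_{nt}$ are independent (this is the role of the independence-across-dimensions hypothesis, together with the component-chain structure of $\{Z_t\}$), so Lemma~\ref{lm:fkg} applied to the two decreasing functions $\phi, \mu$ yields $\EE_{t-1}[\phi(Z_t)]\,\EE_{t-1}[\mu(Z_t)] \leq \EE_{t-1}[\phi(Z_t)\mu(Z_t)]$, i.e.\ $\Cov_{t-1}(P_t, M_t) \geq 0$.

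The step I expect to be most delicate is the reduction to functions of $Z_t$: one cannot apply the FKG inequality directly to the full vector $(Z_{1t},\ldots,Z_{nt},\eta_t,\epsilon_t)$ because $P_t$ need not be monotone in $\eta_t$ (nor $M_t$ in $\epsilon_t$), so the innovations must be integrated out first, and one must verify both that this operation preserves monotonicity in $Z_t$ and that the resulting functions $\phi,\mu$ are genuinely integrable. The remaining ingredients---monotonicity of $f^*$ in $z$ and the FKG inequality itself---are immediate appeals to results already in hand.
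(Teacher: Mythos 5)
Your proposal is correct and follows essentially the same route as the paper: condition on $(X_{t-1},Z_{t-1})$, use the monotonicity of $f^*$ in both arguments (via the argument of Proposition~\ref{pr:mono_z}) and of $y$ in $z$ to get decreasing functions of $Z_t$, exploit the mutual independence of $\epsilon_t$ and $\eta_t$, and apply the FKG inequality to $Z_t$. The only cosmetic difference is the order of operations --- you integrate out the innovations first and apply FKG once to $\phi$ and $\mu$, whereas the paper applies FKG conditional on $(\epsilon_t,\eta_t)$ and then integrates --- which is immaterial.
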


\begin{proof}
	The equilibrium path is
	 $X_t = \me^{-\delta} \, i^*(X_{t-1}, Z_{t-1}) + y(Z_t, \eta_t)$
	where
	\begin{equation*}
		i^*(X_{t-1}, Z_{t-1}) = \min\left\{X_{t-1}, x^*(Z_{t-1}) \right\}
		- p^{-1} \left[ f^*(X_{t-1}, Z_{t-1})\right].
	\end{equation*}
    Note that $X_t$ is a nondecreasing function of $Z_t$ since 
    $z \mapsto y(z, \eta)$ is nondecreasing for all $\eta$. 
    Moreover, the proof of Proposition~\ref{pr:mono_z} implies 
    that $f^*$ is a decreasing function under the assumptions 
    of the current proposition. 
    Hence, $Z_t \mapsto f^*(X_t, Z_t)$ 
    is decreasing. Since in addition 
    $z \mapsto m(z, \epsilon)$ is decreasing for all $\epsilon$ and 
    $\{Z_{1t}, \ldots, Z_{nt}\}$ are independent, applying Lemma~\ref{lm:fkg} 
    (taking $W = Z_t$) yields
	% the Chebeshev's association inequality (see, e.g., Theorem A.19 of \citet{devroye1996}) 
	% 
	\begin{align*}
		&\EE \left[
		f^*(X_t,Z_t) m(Z_t, \epsilon_t) 
		\mid X_{t-1}, Z_{t-1}, \epsilon_t, \eta_t 
		\right]  \\
		&\geq \EE \left[f^*(X_t,Z_t) \mid X_{t-1}, Z_{t-1}, \epsilon_t, \eta_t
		\right]
		\EE \left[m(Z_t, \epsilon_t)
		\mid X_{t-1}, Z_{t-1}, \epsilon_t, \eta_t
		\right]  \\
		&= \EE \left[f^*(X_t,Z_t) \mid X_{t-1}, Z_{t-1}, \eta_t
		\right]
		\EE \left[m(Z_t, \epsilon_t) \mid Z_{t-1}, \epsilon_t
		\right].
	\end{align*}
	Using this result, it follows that
	\begin{align*}
		&\EE \left(P_t M_t \mid X_{t-1}, Z_{t-1} \right)  
		= \EE \left[
		f^*(X_t,Z_t) m(Z_t, \epsilon_t) \mid X_{t-1}, Z_{t-1} 
		\right]  \\
		&= \EE \left(
		\EE \left[
		f^*(X_t,Z_t) m(Z_t, \epsilon_t) 
		\mid X_{t-1}, Z_{t-1}, \epsilon_t, \eta_t
		\right] 
		\mid X_{t-1}, Z_{t-1}
		\right) \\
		&\geq \EE \left\{
		\EE \left[f^*(X_t,Z_t) \mid X_{t-1}, Z_{t-1}, \eta_t
		\right]
		\EE \left[m(Z_t, \epsilon_t) \mid Z_{t-1}, \epsilon_t
		\right]
		\mid X_{t-1}, Z_{t-1}
		\right\}  \\
		&= \EE \left[f^*(X_t,Z_t) \mid X_{t-1}, Z_{t-1} \right]
		\EE \left[m(Z_t, \epsilon_t) \mid X_{t-1}, Z_{t-1} \right]  \\
		&= \EE \left(P_t \mid X_{t-1}, Z_{t-1} \right) 
		\EE \left(M_t \mid X_{t-1}, Z_{t-1} \right),
	\end{align*}
	where the second-to-last equality holds because $\eta_t$ is independent 
	of $\epsilon_t$. Hence, 
	\begin{align*}
		&\Cov_{t-1} (P_t, M_t)
		=\Cov(P_t, M_t \mid X_{t-1}, Z_{t-1})  \\
		&=\EE \left(P_t, M_t \mid X_{t-1}, Z_{t-1} \right) - 
		\EE \left(P_t \mid X_{t-1}, Z_{t-1} \right) 
		\EE \left(M_t \mid X_{t-1}, Z_{t-1} \right) \geq 0,
	\end{align*}
	as was to be shown. 
\end{proof}

\begin{proof}[Proof of Proposition~\ref{pr:neg_corr}]
  Since $R_t = 1/M_t$, applying Lemma~\ref{lm:fkg} again
  and working through similar steps to the proof of
  Proposition~\ref{pr:pos_corr}, we can show that $\Cov_{t-1}(P_t, R_t) \leq 0$
  for all $t$. The details are omitted.
\end{proof}

\begin{proof}[Proof of Proposition~\ref{pr:mono_seq}]
	Let $T_1$ and $T_2$ be respectively the equilibrium price operators 
	corresponding to $\{R_t^1\}$ and $\{R_t^2\}$. It suffices to show that
	$T_1 f \leq T_2 f$ for all $f \in \cC$. To see this, we adopt an induction 
	argument. Suppose $T_1^k f \leq T_2^k f$. Then by the order preserving 
	property of the equilibrium price operator and the initial argument 
	$T_1 f \leq T_2 f$ for all $f$ in $\cC$, we have
	$T_1^{k+1} f = T_1(T_1^k f) \leq T_1 (T_2^k f) \leq T_2 (T_2^k f) = T_2^{k+1}f$.
	Hence, $T_1^k f \leq T_2^k f$ for all $k \in \NN$ and $f \in \cC$. 
	Letting $k \to \infty$ then yields $f_1^* \leq f_2^*$.
	
	We now show that $T_1 f \leq T_2 f$ for all $f \in \cC$. Suppose there exists
        $(x,z) \in \SS$ such that $\xi_1 \coloneq T_1f(x,z) > T_2 f(x,z) =: \xi_2$.  Let
        $M_t^i = 1 / R_t^i$ for $i=1,2$. Since $R_t^1 \geq R_t^2$, we have
        $M_t^1 \leq M_t^2$. Letting $\hat M_i$ be the next-period discount factor
        of economy $i$, the monotonicity of $g$ and $h$ defined
        in~\eqref{eq:simp_func} then implies that
	\begin{align*}
		\xi_1 &= g \left[
		\me^{-\delta} \EE_z \hat M_1 
		f \left( h(\xi_1,x), \hat Z \right), x
		\right]  \\
		&\leq g \left[
		\me^{-\delta} \EE_z \hat M_2
		f \left( h(\xi_1,x), \hat Z \right), x
		\right]  
		\leq g \left[
		\me^{-\delta} \EE_z \hat M_2
		f \left( h(\xi_2,x), \hat Z \right), x
		\right] 
		= \xi_2,
	\end{align*}
	which is a contradiction. Therefore, $T_1 f \leq T_2 f$ and all the stated 
	claims hold.
\end{proof}

\begin{proof}[Proof of Proposition~\ref{pr:caus_1}]
	In this case, the exogenous state is $Z_t = R_t$, which is a monotone 
	Markov process, and $r(z, \epsilon) \equiv z$ is strictly increasing in 
	$z$. Since $R_{t}^2 \leq R_{t}^1$, $y$ is nondecreasing in $R$, and both 
	economies share the same innovation process $\{\eta_t\}$, we have 
	$Y_{t}^{2} \leq Y_{t}^{1}$. Since in addition $X_{t-1}^2 \leq X_{t-1}^1$ 
	and $R_{t-1}^1 \leq R_{t-1}^2$, Propositions~2.1 and 3.2 and the law of 
	motion of the state process imply that
	\begin{equation*}
		X_t^2 = \me^{-\delta} i^*(X_{t-1}^2, R_{t-1}^2) + Y_t^2 
		\leq \me^{-\delta} i^*(X_{t-1}^1, R_{t-1}^1) + Y_t^1 = X_t^1
	\end{equation*}
	with probability one. Applying Proposition~3.2 again and the monotonicity 
	of $f^*$ with respect to the endogenous state yields
	\begin{equation*}
		P_t^1 = f^*(X_t^1, R_t^1) \leq 
		f^*(X_t^1, R_t^2) \leq 
		f^*(X_t^2, R_t^2) = P_t^2
	\end{equation*}
	with probability one. Hence the statement is verified.
\end{proof}

\begin{proof}[Proof of Proposition~\ref{pr:caus_2}]
	We have seen that, in this case, the exogenous state is $Z_t = R_t$, which 
	is a monotone Markov process, and $r(z, \epsilon) \equiv z$ is strictly 
	increasing in $z$. Since $R_t \leq R_{t+1}$ and $y$ is nondecreasing, we 
	have $Y_{t} \leq Y_{t+1}$. Since in addition $X_{t-1} \leq X_t$ and 
	$R_{t-1} \geq R_t$, applying Propositions~2.1 and 3.2, the law of motion of 
	the state process indicates that
	\begin{equation*}
		X_{t} = \me^{-\delta} i^*(X_{t-1}, R_{t-1}) + Y_t  
		\leq \me^{-\delta} i^*(X_{t}, R_{t}) + Y_{t+1} = X_{t+1}
	\end{equation*}
	with probability one. Then Proposition~3.2 and the monotonicity of $f^*$ 
	with respect to the endogenous state imply that
	\begin{equation*}
		P_{t} = f^*(X_{t}, R_{t}) \geq 
		f^*(X_{t}, R_{t+1}) \geq 
		f^*(X_{t+1}, R_{t+1}) = P_{t+1}
	\end{equation*}
	with probability one. Hence the stated claim holds.
\end{proof}

\section{Positive Correlation}\label{sec:counter_ex}

Here we provide examples showing that Proposition~\ref{pr:neg_corr} does not
hold in general if $Z_t$ is positively or negatively
correlated across dimensions. We begin with the following.

\begin{proposition}\label{pr:conv}
  If Assumption~\ref{a:opt} holds and the inverse demand function is $p(x) = a
  + dx$ with $a>0$ and $d< 0$, then the equilibrium pricing rule $f^*(x,z)$
  is convex in $x$.
\end{proposition}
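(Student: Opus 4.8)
The plan is to prove convexity by showing that the equilibrium price operator $T$ from Appendix~\ref{sec:proof-sect-refs} preserves convexity in the first argument, and then appealing to Theorem~\ref{t:opt_gen}. Let $\cC_{\mathrm{cvx}}$ be the set of $f \in \cC$ with $x \mapsto f(x,z)$ convex for every $z \in \ZZ$. Since uniform limits of convex functions are convex, $\cC_{\mathrm{cvx}}$ is closed in $(\cC,\rho)$, and it is nonempty because $p_0(x) = \max\{a+dx,0\}$ is convex, so $p_0 \in \cC_{\mathrm{cvx}}$. Hence, once $T(\cC_{\mathrm{cvx}}) \subseteq \cC_{\mathrm{cvx}}$ is established, claim~(iii) of Theorem~\ref{t:opt_gen} gives $f^* = \lim_{k\to\infty} T^k p_0 \in \cC_{\mathrm{cvx}}$, which is exactly the assertion.

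For the invariance step, fix $f \in \cC_{\mathrm{cvx}}$ and $z$, and set $\phi_z(i) := \me^{-\delta}\EE_z \hat M f(\me^{-\delta} i + \hat Y, \hat Z) - k$. Since $i \mapsto \me^{-\delta}i + \hat Y$ is affine, $f(\cdot,\hat Z)$ is convex and decreasing, and $\hat M \ge 0$, the function $\phi_z$ is convex and decreasing. Because $\xi \mapsto \psi(\xi,x,z)$ in~\eqref{eq:sol_euler} is decreasing with unique fixed point $Tf(x,z)$, we have $Tf(x,z) \le \xi$ if and only if $\psi(\xi,x,z) \le \xi$. Given $x_1,x_2 \in \XX$ and $\lambda \in (0,1)$, put $\xi_i := Tf(x_i,z)$, $x_\lambda := \lambda x_1 + (1-\lambda)x_2$, $\xi_\lambda := \lambda\xi_1 + (1-\lambda)\xi_2$. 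The decisive use of the linear demand is that $p^{-1}$ is affine, so $p^{-1}(\xi_\lambda) = \lambda p^{-1}(\xi_1) + (1-\lambda)p^{-1}(\xi_2)$, and hence the implied storage obeys $I(\xi_\lambda,x_\lambda,z) = \lambda I(\xi_1,x_1,z) + (1-\lambda)I(\xi_2,x_2,z)$ on $\{x_1,x_2 < x_f^*(z)\}$. Convexity of $\phi_z$, affinity of $p$, joint convexity of $(u,v)\mapsto\max\{u,v\}$, and monotonicity of $t\mapsto\min\{t,p(b)\}$ then yield
\begin{equation*}
\psi(\xi_\lambda,x_\lambda,z) \le \min\Bigl\{ \lambda\max\{\phi_z(I(\xi_1,x_1,z)),p(x_1)\} + (1-\lambda)\max\{\phi_z(I(\xi_2,x_2,z)),p(x_2)\},\; p(b) \Bigr\}.
\end{equation*}
Since the definition of $\psi$ forces $\xi_i = \max\{\phi_z(I(\xi_i,x_i,z)),p(x_i)\}$ whenever $\xi_i < p(b)$, the right-hand side collapses to $\xi_\lambda$, so $Tf(x_\lambda,z) \le \xi_\lambda$ and $x\mapsto Tf(x,z)$ is convex on $[b,x_f^*(z)]$. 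As $Tf(\cdot,z)\equiv0$ on $[x_f^*(z),\infty)$ by Lemma~\ref{lm:threshold} and its left derivative at $x_f^*(z)$ is nonpositive, gluing preserves convexity, giving $Tf \in \cC_{\mathrm{cvx}}$.

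The step I expect to be the main obstacle is controlling the outer truncation $\min\{\cdot,p(b)\}$: the collapse of the bound above to $\xi_\lambda$ relied on $\xi_1,\xi_2 < p(b)$, i.e.\ on the upper bound not being active. This holds in particular when demand is unbounded above ($p(b)=\infty$, the generalized setting of the appendix) or when $\me^{-\delta}\EE_z\hat M \le 1$ for every $z$, since then $\bar p_{f^*}^0(z) = \me^{-\delta}\EE_z\hat M f^*(\hat Y,\hat Z)-k \le p(b)$ and $\{x : f^*(x,z)=p(b)\}$ reduces to the single point $x=b$; moreover this cap-slack property is inherited by every iterate $T^k p_0$ and by the limit. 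Under that condition one also checks, from the explicit three-region description of $Tf$ in Lemma~\ref{lm:threshold} (the affine piece $p(x)$, the interior piece handled above, and the zero piece), that at each junction the left derivative does not exceed the right derivative, so the concatenation is convex. I would therefore carry the cap-slack hypothesis through the argument, stating Proposition~\ref{pr:conv} under it, or equivalently restricting to the parameter regime relevant for the counterexamples that follow.
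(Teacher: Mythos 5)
Your proof takes essentially the same route as the paper's: both argue that the subset of $\cC$ consisting of functions convex in $x$ is closed and invariant under $T$, exploiting the affinity of $p^{-1}$ so that the implied storage $x - p^{-1}(\xi)$ is jointly affine in $(\xi,x)$, and then pass to the fixed point. The paper runs the invariance step as a direct contradiction on the interior Euler equation over $(p^{-1}[\bar p_f(z)], x_f^*(z))$, whereas you use the characterization $Tf(x,z)\le\xi \iff \psi(\xi,x,z)\le\xi$ together with joint convexity of the max; these are interchangeable, and your version has the minor advantage of absorbing the left junction with the linear piece $p(x)$ automatically. The one substantive divergence is your handling of the outer truncation $\min\{\cdot\,,p(b)\}$: the paper's proof writes $Tf(x,z)=\me^{-\delta}\EE_z \hat M f(\cdot)-k$ on the interior region with the min silently dropped, which is legitimate only when the cap is slack there. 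You are right that this can fail when $\me^{-\delta}\EE_z\hat M>1$ for some $z$ (exactly the situation discussed in footnote~\ref{fn:nec_min}), in which case $Tf(\cdot,z)$ can plateau at $p(b)$ on a nondegenerate initial interval before decreasing, breaking convexity at the junction; so your cap-slack hypothesis is a genuine repair of the statement rather than a redundant precaution. It is harmless for the paper: in the only place Proposition~\ref{pr:conv} is used (the counterexamples of Appendix~\ref{sec:counter_ex}), $\{Z_t\}$ is {\sc iid} and $\delta$ is chosen so that $\me^{-\delta}\EE_z\hat M<1$ for every $z$, so your condition holds there. Your explicit derivative checks at the junctions with the three regions of Lemma~\ref{lm:threshold} likewise fill in steps the paper leaves implicit.
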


\begin{proof}
  Let $\cC_2$ be the elements in $\cC$ such that $x \mapsto f(x,z)$ is convex
  for all $z \in \ZZ$. Then $\cC_2$ is a closed subset of $\cC$. Hence it
  suffices to show that $T \cC_2 \subset \cC_2$. Fix $f \in \cC_2$, since $Tf
  \in \cC$ by Proposition~\ref{pr:selfmap}, it remains to show that $Tf(x,z)$ is
  convex in $x$. Since $Tf(x,z)$ is decreasing in $x$ and, by
  Lemma~\ref{lm:threshold}, $Tf(x,z)$ is linear in $x$ when $x \leq p^{-1}[\bar
  p(z)]$ or $x \geq x^*_f(z)$, it suffices to show that $Tf(x,z)$ is convex in
  $x$ on $B_0(z)\coloneq (p^{-1}[\bar p(z)], x_f^*(z) )$. In this case,
  \begin{equation*}
    Tf(x,z) = \me^{-\delta} \EE_z \hat M f \left(
      \me^{-\delta} \left(x - p^{-1}[Tf(x,z)] \right) + \hat Y, \hat Z
    \right) - k.
  \end{equation*}
  Suppose to the contrary that $Tf(x,z)$ is not convex, then there exist $z\in
  \ZZ$, $x_1,x_2 \in B_0(z)$, and $\alpha \in [0,1]$ such that, letting $x_0 \coloneq
  \alpha x_1 + (1-\alpha) x_2$,
  \begin{align*}
    &\alpha Tf(x_1,z) + (1-\alpha) Tf(x_2,z) < Tf(x_0, z) \\
    &= \me^{-\delta} \EE_z \hat M f \left(
      \me^{-\delta} \left(x - p^{-1}[Tf(x_0,z)] \right) + \hat Y, \hat Z
      \right) - k  \\
    &\leq \me^{-\delta} \EE_z \hat M f \left(
      \me^{-\delta} \left(x - p^{-1} \left[
      \alpha Tf(x_1,z) + (1-\alpha) Tf(x_2,z)
      \right] \right) + \hat Y, \hat Z
      \right) - k  \\
    &\leq \alpha Tf(x_1,z) + (1 - \alpha) Tf(x_2,z),
  \end{align*}
  where the last inequality is by convexity of $f(x,z)$ in $x$ and the
  linearity of $p(x)$. This is a contradiction. Hence $Tf(x,z)$ is convex in
  $x$ on $B_0(z)$ and the stated claim holds.
\end{proof}

Suppose $R_t = 0.98$ with probability $0.5$ and $R_t = 1.02$
with probability $0.5$. If $R_t = 0.98$, then $Y_t = y_0$ with
probability one, and if $R_t = 1.02$, then $Y_t = y_1$ with probability $\phi$
and $Y_t = y_2$ with probability $1-\phi$. This is a special case of our framework. In particular,
\begin{align*}
	\epsilon_{t} = \eta_t = 0, &\quad
	Z_{t} = (Z_{1t}, Z_{2t}) = (R_t, Y_t),  \\
	r(Z_t, \epsilon_{t}) = r(R_t, Y_t, \epsilon_t) = R_t 
	&\quad \text{ and } \quad
	y(Z_t, \eta_t) = y(R_t, Y_t, \eta_t) = Y_t.
\end{align*}
Note that $\{Z_t\}$ is {\sc iid}. Hence, it is naturally monotone and the
equilibrium pricing rule is not a function of $Z_t$.  Since in addition $r(z,
\epsilon)$ and $y(z,\eta)$ are increasing in $z$, the assumptions of
Proposition~\ref{pr:mono_z} hold. However, because $Z_{1t}$ and $Z_{2t}$ (i.e.,
$R_t$ and $Y_t$) are correlated, the assumptions of
Proposition~\ref{pr:neg_corr} are violated.

Some simple algebra shows that $\EE R_t = 1$,
\begin{equation*}
	\EE Y_t = \frac{y_0}{2} + \frac{\phi y_1}{2} + \frac{(1-\phi) y_2}{2} 
	\quad \text{and} \quad 
	\EE R_t Y_t = \frac{0.98 y_0}{2} + \frac{1.02\phi y_1}{2} + \frac{1.02(1-\phi) y_2}{2}.
\end{equation*}
Hence 
	$\Cov_{t-1} (R_t,Y_t) 
	= \EE R_t Y_t - \EE R_t \EE Y_t 
	= -0.01 y_0 + 0.01 \phi y_1 + 0.01(1-\phi) y_2$.
Choose $\delta$ such that $\delta > \log \EE (1/R_t) \approx 0.0004$. Then
$\beta \coloneq \me^{-\delta} \EE(1/R_t) < 1$ and the following result holds under the
current setup.

\begin{lemma}\label{lm:bind_ex}
    Either (i) $P_t = 0$ for all $t$ or (ii) $I_t = 0$ in finite time with
    probability one. If the
    per-unit storage cost $k>0$, then (ii) holds.
\end{lemma}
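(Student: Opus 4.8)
The plan is to reduce to a $z$-free problem using the {\sc iid} structure, exhibit a simple mechanism that drives inventory to zero, and dispose of the remaining case by iterating the arbitrage equation. Since $\{Z_t\}$ is {\sc iid}, the equilibrium objects $f^*$, $\bar p$, $x^*$, $i^*$ do not depend on $z$ (as noted in the text), and $\EE_z \hat M = \EE(1/R_t) = \me^{\delta}\beta$ for all $z$, with $\beta<1$. Put $\bar P \coloneq \sup f^* = f^*(b) = p(b) \in (0,\infty)$, $\bar I \coloneq x^* - p^{-1}(0)$, $G_n \coloneq \me^{-n\delta}\prod_{s=1}^{n} M_s$, $y_{\min}\coloneq\min\{y_0,y_1,y_2\}$, and $y_{\max}$ analogously. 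By \eqref{eq:ezp}, $\bar p>0$, so $\bar x\coloneq p^{-1}(\bar p)<p^{-1}(0)\le x^*$, and $x^*<\infty$ when $k>0$ (otherwise $f^*>0$ everywhere and letting $x\to\infty$ in the fixed-point equation for $f^*$ forces $f^*(\infty)=-k/(1-\beta)<0$, a contradiction). By \fref{pr:storage}, $I_t>0\iff X_t\in(\bar x,x^*)$; moreover $i^*\le\bar I$, so $b\le X_t\le \me^{-\delta}\bar I+y_{\max}=:\bar X<\infty$ for all $t\ge1$. If $f^*\equiv0$ then $P_t=f^*(X_t)=0$ for every $t$, which is case~(i); so assume $f^*\not\equiv0$, i.e.\ $(\bar x,x^*)\neq\emptyset$.

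Consider the state-update map $\Psi(x)\coloneq\me^{-\delta} i^*(x)+y_{\min}$. By the explicit form \eqref{eq:opt_stor} of $i^*$ together with \fref{pr:storage}, $i^*$ is nondecreasing and $1$-Lipschitz; since $\delta>0$ is forced here (as $\EE(1/R_t)>1$), $\Psi$ is nondecreasing and a contraction of modulus $\me^{-\delta}<1$, and as $\Psi\equiv y_{\min}$ on $[b,\bar x]$ its unique fixed point is $y_{\min}$. When $k>0$ this fixed point lies strictly below $\bar x$: if $y_{\min}\ge\bar x$, then $X_t\ge Y_t\ge y_{\min}\ge\bar x$ for all $t\ge1$ forces $I_t>0$ for every $t$ and every $\omega$, so the arbitrage equation $P_t=\me^{-\delta}\EE_t M_{t+1}P_{t+1}-k$ holds identically; iterating it and using $\EE_0 G_n P_n\le \bar P\,\EE_0 G_n\to0$ and $\sum_{j\ge0}\EE_0 G_j\ge \EE_0 G_0=1$ (finiteness and decay following from $\beta<1$; see Lemma~\ref{lm:ctr}) gives $P_0=\lim_n(\EE_0 G_n P_n-k\sum_{j<n}\EE_0 G_j)\le -k<0$, contradicting $P_0=f^*(X_0)\ge0$. (For $k=0$ the same iteration gives $P_t=0$ for all $t$, i.e.\ case~(i).) So assume $y_{\min}<\bar x$; then, since $\Psi^{n}(\bar X)\downarrow y_{\min}$, we have $\Psi^{N}(\bar X)<\bar x$ for some $N\in\NN$.

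In the given parametrization $\{Y_s\}$ is {\sc iid} (a coordinate of $\{Z_s\}$), so the events $\{Y_s=y_{\min}\}$ are independent with common probability $p_{\min}>0$. If $Y_{t+1}=\dots=Y_{t+N}=y_{\min}$, then $X_{t+j}=\Psi^{j}(X_t)$ for $j\le N$, so by monotonicity of $\Psi$ and $X_t\le\bar X$ (for $t\ge1$) we get $X_{t+N}\le\Psi^{N}(\bar X)<\bar x$, hence $I_{t+N}=0$. Therefore $\Pr(I_{t+N}=0\mid\mathcal F_t)\ge p_{\min}^{N}$ for all $t\ge1$ (with $\{\mathcal F_t\}$ the natural filtration), whence $\Pr(\tau>1+nN)\le(1-p_{\min}^{N})^{n}\to0$ for $\tau\coloneq\inf\{t:I_t=0\}$; thus $\Pr(\tau<\infty)=1$, which is (ii). This establishes the dichotomy in general and (ii) when $k>0$.

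The main obstacle is that the arbitrage equation binds only on the \emph{random} set $\{I_t>0\}$, so it cannot be iterated naively along a merely positive-probability event $\{I_t>0\ \forall t\}$: the conditional expectation $\EE_t$ then ranges over continuations that leave the storage region, where $P_{t+1}$ need not obey the improved bound. The plan sidesteps this by invoking the Euler iteration only when $\{I_t>0\ \forall t\}$ holds \emph{deterministically}, and otherwise extracting a uniform lower bound $p_{\min}^{N}>0$ for reaching $\{I=0\}$ within $N$ steps from the contractivity of $\Psi$ and the boundedness of $\{X_t\}$. The remaining pieces — finiteness of $x^*$ and $\bar I$, the $1$-Lipschitz property of $i^*$, strict positivity of $\delta$, and independence of the events $\{Y_s=y_{\min}\}$ — are routine given Theorem~\ref{t:opt}, \fref{pr:storage}, and the model specification preceding the lemma.
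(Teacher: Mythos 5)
Your argument is correct in substance but takes a genuinely different route from the paper's. The paper argues by contradiction: assuming (ii) fails, it asserts the iterated bound~\eqref{eq:bind_equiv} along the positive-probability event $\{I_t>0 \text{ for all } t\}$ and then uses boundedness of prices together with $\beta<1$ to force $P_0<P_0$. You instead confine the Euler iteration to the one regime where it is unambiguously valid---$y_{\min}\geq \bar x$, where inventories are positive deterministically, so $P_t=\me^{-\delta}\EE_t M_{t+1}P_{t+1}-k$ holds for every $\omega$ and the telescoping $P_0=\EE_0[G_nP_n]-k\sum_{j<n}\EE_0 G_j$ is exact---and you dispose of the complementary regime $y_{\min}<\bar x$ by a hitting-time argument: $N$ consecutive minimal harvests drive availability below the storage threshold from any reachable state (via the $\me^{-\delta}$-contraction $\Psi$), the {\sc iid} structure gives a uniform probability $p_{\min}^{N}>0$ per block, and hence $\tau<\infty$ almost surely. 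What this buys is a cleaner justification of the key step: as you note, the arbitrage equality binds only on the random set $\{I_t>0\}$, and $\EE_t$ integrates over continuations that exit the storage region, where $P_{t+1}$ equals $p(X_{t+1})$ and is not dominated by the Euler expression; your route never needs to iterate through such continuations, whereas the paper's derivation of~\eqref{eq:bind_equiv} is terse on exactly this point. Your argument is also constructive about the mechanism (which shock sequences empty the stock), at the cost of leaning on the specific {\sc iid}, finite-support structure of the example.

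Two small repairs are needed. First, your bound $X_t\leq\me^{-\delta}\bar I+y_{\max}=:\bar X$ presupposes $x^*<\infty$, which you establish only for $k>0$; when $k=0$ one can have $x^*=\infty$, so $\bar I=\infty$ and $\Psi^{N}(\bar X)$ is meaningless, leaving a hole in the dichotomy for the sub-case $k=0$, $y_{\min}<\bar x$. The fix is immediate: $i^*(x)\leq x-p^{-1}(0)$ gives $X_t\leq\me^{-\delta}X_{t-1}+y_{\max}$, so any trajectory from a fixed $X_0$ satisfies $X_t\leq X_0+y_{\max}/(1-\me^{-\delta})<\infty$, and this bound can replace $\bar X$ in the choice of $N$. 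Second, $I_t>0$ is equivalent to $X_t>\bar x$ rather than to $X_t\in(\bar x,x^*)$ (since $i^*=x^*-p^{-1}(0)>0$ for $x\geq x^*$), and at the boundary $X_t=\bar x$ one has $I_t=0$; neither point damages your conclusions, because the hitting-time step only uses the implication $X_t\leq\bar x\Rightarrow I_t=0$, and in the branch $y_{\min}\geq\bar x$ one may assume $I_0>0$ (otherwise (ii) holds at $t=0$), which propagates $X_t>\bar x$ strictly thereafter.
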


\begin{proof}
    Suppose (ii) does not hold, then $I_t > 0$ for all $t$ with positive
    probability, and the equilibrium price path satisfies
	\begin{equation}\label{eq:bind_equiv}
		P_0 
		\leq \me^{-\delta t} \EE \left( \prod_{i=0}^t \frac{1}{R_i} \right) P_t 
		- \left(\sum_{i=0}^{t-1} \me^{-\delta i} \right) k
		\quad \text{for all }\, t.
	\end{equation}
    Note that $\{P_t\}$ is bounded with probability one since
    $f^* \in \cC$ implies that, for some $L_0< \infty$, we have
    $P_t = f^*(X_t) \leq f^*(Y_t) \leq f^*(\underline y) 
    \leq p_0(\underline y) + L_0 =: L_1 < \infty$
    with probability one, where $\underline y \coloneq \min\{y_0, y_1, y_2\}$. 
    If in addition (i) does not hold, we may assume $P_0>0$ 
    without loss of generality. In this case,~\eqref{eq:bind_equiv} 
    implies that, when $t$ is sufficiently large, we have 
    $0 < P_0 \leq \beta^t L_1 < P_0 $ with positive probability, 
    which is a contradiction. Hence either (i) or (ii) holds. 
    If, on the other hand, $k>0$ and (ii) does not hold, then for 
    sufficiently large $t$,~\eqref{eq:bind_equiv} implies that $P_0 < P_0$ with positive
    probability for all $P_0\geq 0$, which is also a contradiction.
    Hence (ii) holds and the second claim is also verified.
\end{proof}

Consider a linear inverse demand function $p$ as in Proposition~\ref{pr:conv}.
If $P_t = 0$ for all $t$, then $\Cov (R_t,P_t) = \Cov_{t-1}(R_t,P_t) = 0$ for
all $t$. Otherwise, $I_{t-1} = 0$ for some finite $t$, in which case $X_t =
Y_t$, $P_t = f^*(Y_t)$, and thus
%%
%\begin{align*}
%	\EE_{t-1} P_t &= \EE f^*(Y_t) = \frac{1}{2} f^*(y_0) 
%	    + \frac{\phi}{2} f^*(y_1) + \frac{1-\phi}{2} f^*(y_2)  \\
%	\EE_{t-1} R_t P_t &= \EE R_t f^*(Y_t) 
%	= \frac{0.98}{2} f^*(y_0) + \frac{1.02\phi}{2} f^*(y_1) 
%	    + \frac{1.02(1-\phi)}{2} f^*(y_2).
%\end{align*}
%%
\begin{multline*}
	\Cov_{t-1} (R_t, P_t) 
    = \EE_{t-1} R_t P_t - \EE R_t \, \EE_{t-1} P_t  
	= \EE R_t f^*(Y_t) - \EE R_t \, \EE f^*(Y_t)   \\
	= -0.01 f^*(y_0) + 0.01 \phi f^*(y_1) + 0.01(1-\phi) f^*(y_2). 
\end{multline*}
If $y_1, y_2 < y_0$, then
    $\Cov_{t-1} (R_t, Y_t) < 0$ and $\Cov_{t-1} (R_t, P_t) > 0$
    based on the monotonicity of $f^*$. If on the other
    hand $y_0, y_1$ and $y_2$ satisfy\footnote{Since $\me^{-\delta} \EE (1/R_t)
        < 1$, we have $\bar p = \min\{\me^{-\delta} \EE f^*(\hat Y) / \hat R -
        k, p(b)\} < p(b)$. Thus $p^{-1}(\bar p) > b$ and this choice of $y_0,
    y_1, y_2$ is feasible.}
	$y_1 < p^{-1} (\bar p) < y_0 < y_2$,
then since $f^*$ is convex by Proposition~\ref{pr:conv}, and $f^*(x) > p(x)$
whenever $x > p^{-1}(\bar p)$ by Theorem~\ref{t:opt}, 
\begin{equation*}
	\frac{f^*(y_0)-f^*(y_2)}{f^*(y_1) - f^*(y_2)} 
	< \frac{y_2 - y_0}{y_2 - y_1}.
\end{equation*}  
Hence $\phi$ can be chosen such that 
$y_2 - y_0 > \phi (y_2 - y_1)$ and $f^*(y_0) - f^*(y_2) < \phi [f^*(y_1) -
f^*(y_2)]$.
In particular, the above inequalities respectively imply that 
\begin{equation*}
	\Cov_{t-1} (R_t, Y_t) > 0
	\quad \text{and} \quad
    \Cov_{t-1} (R_t, P_t) > 0. 
\end{equation*}

\section{An Identification Equivalence Result}\label{sec:equiv}

Consider an economy $E$ with linear inverse demand function $p(x) = a + d x$
where $a > 0$ and $d < 0$. Let $\{Y_t\}$ be a stationary Markov process with
transition probability $\Psi$. Let $b$ be the lower bound of the total available
supply in this economy.

Let $\tilde E$ be another economy where the output process satisfies 
$\tilde Y_t = \mu + \sigma Y_t$ with $\sigma > 0$
and the transition probability of $\{\tilde Y_t\}$
satisfies\footnote{Condition~\eqref{eq:Psi_ti} obviously holds if, for example,
	$\{Y_t\}$ is {\sc iid} and follows a truncated normal distribution with mean
	$\mu_0$, variance $\sigma_0^2$, and truncation thresholds $y_l<y_u$.  Because in
	this case, $\{\tilde Y_t\}$ is {\sc iid} and follows a truncated normal
	distribution as well, with mean $\mu + \sigma \mu_0$, variance $\sigma^2
	\sigma_0^2$, and truncation thresholds $\mu + \sigma y_l<\mu + \sigma y_u$.
	Note that~\eqref{eq:Psi_ti} does not hold if $\{Y_t\}$ and $\{\tilde Y_t\}$ do
	not follow the same \textit{type} of distribution. For example, it does not hold
	if $\{Y_t\}$ is {\sc iid} lognormally distributed, since $\{\tilde Y_t\}$ is not
	lognormally distributed as a linear transform of $\{Y_t\}$.}
\begin{equation}\label{eq:Psi_ti}
	\tilde \Psi (y, \hat Y) = \Psi \left(
	\frac{y- \mu}{\sigma}, \frac{\hat Y-\mu}{\sigma}
	\right).
\end{equation}
Moreover, let the lower bound of the total available supply of economy $\tilde E$ be
$\tilde b = \mu + \sigma b$
and the inverse demand function be
\begin{equation}\label{eq:p_ti}
	\tilde p(x) = \left(
	a - \frac{d \mu}{\sigma}
	\right) + \frac{d}{\sigma} x.
\end{equation}
The remaining assumptions are the same across economies $E$ and
$\tilde E$.

\begin{proposition}\label{pr:iequiv}
	$\tilde E$ and $E$ generate the same commodity price process.
\end{proposition}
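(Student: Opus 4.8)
The plan is to build an explicit affine change of variables carrying the equilibrium objects of $E$ onto those of $\tilde E$, and then invoke uniqueness of the equilibrium pricing rule. Write $\phi(x) \coloneq \mu + \sigma x$; since $\sigma>0$, $\phi$ is a strictly increasing bijection of the state space of $E$ onto that of $\tilde E$, it maps $b$ to $\tilde b$, and by~\eqref{eq:p_ti} it satisfies $\tilde p(\phi(x)) = p(x)$ for all $x$, hence $\tilde p^{-1}(q) = \phi\bigl(p^{-1}(q)\bigr)$ and $\tilde p_0 \circ \phi = p_0$. Let $f^*$ be the equilibrium pricing rule of $E$, which exists and is unique in $\cC$ by Theorem~\ref{t:opt_gen}, and define the candidate $g(\tilde x, z) \coloneq f^*\bigl(\phi^{-1}(\tilde x), z\bigr)$. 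Because $\phi^{-1}$ is continuous, strictly increasing, and $\tilde p_0\circ\phi = p_0$, the function $g$ inherits continuity, monotonicity in its first argument, the bound $g \geq \tilde p_0$, and boundedness of $g - \tilde p_0$; that is, $g$ lies in the candidate class $\cC$ associated with $\tilde E$. Moreover, since the discount process $\{M_t\}$, the depreciation rate $\delta$, and the per-unit cost $k$ are unchanged and $\tilde p(\tilde Y_t) = \tilde p(\phi(Y_t)) = p(Y_t)$, both Assumption~\ref{a:opt} and the integrability condition~\eqref{eq:EY_bd} hold for $\tilde E$, so Theorem~\ref{t:opt_gen} also applies to $\tilde E$.

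Next I would show that $g$ is the fixed point of the equilibrium price operator $\tilde T$ of $\tilde E$. Let $\tilde I(\xi, \tilde x, z)$ denote the storage function appearing in $\tilde T$, the analogue of~\eqref{eq:storage} for $\tilde E$. The crucial identity is that the expectations in the two operators coincide after the change of variables: $\tilde Y_t = \phi(Y_t)$ and, by~\eqref{eq:Psi_ti}, the law of $\{\tilde Y_t\}$ is the pushforward of the law of $\{Y_t\}$ under $\phi$, so an expectation over $\hat{\tilde Y}$ equals the corresponding expectation over $\phi(\hat Y)$. Writing $x \coloneq \phi^{-1}(\tilde x)$, on the storage region $\tilde I(\xi, \tilde x, z) = \tilde x - \tilde p^{-1}(\xi) = \sigma\bigl(x - p^{-1}(\xi)\bigr) = \sigma\, I(\xi, x, z)$, and one checks that the disposal threshold transforms as $\tilde x^*_g(z) = \phi\bigl(x^*_{f^*}(z)\bigr)$, so $\tilde I(\xi, \tilde x, z) = \sigma\, I(\xi, x, z)$ on the disposal branch too. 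Hence $e^{-\delta}\tilde I(\xi, \tilde x, z) + \hat{\tilde Y} = \phi\bigl(e^{-\delta} I(\xi, x, z) + \hat Y\bigr)$, and therefore $g\bigl(e^{-\delta}\tilde I(\xi, \tilde x, z) + \hat{\tilde Y}, \hat Z\bigr) = f^*\bigl(e^{-\delta} I(\xi, x, z) + \hat Y, \hat Z\bigr)$. Substituting this, together with $\tilde p(\tilde x) = p(x)$ and $\tilde p(\tilde b) = p(b)$, into the defining equation~\eqref{eq:sol_euler} for $\tilde T g(\tilde x, z)$ reduces it term by term to the defining equation for $Tf^*(x,z)$, whose unique solution is $f^*(x,z)$ since $f^*$ is the fixed point of $T$. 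Thus $\tilde T g(\tilde x,z) = f^*(x,z) = g(\tilde x, z)$, and by uniqueness of the fixed point of $\tilde T$ in the $\tilde E$ candidate class, $\tilde f^*(\tilde x, z) = f^*\bigl(\phi^{-1}(\tilde x), z\bigr)$.

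Finally I would transfer the identity to the state dynamics. From the definition of the equilibrium storage rule in~\eqref{eq:opt_stor} (equivalently, Proposition~\ref{pr:storage}) and the relation $\tilde p^{-1} = \phi\circ p^{-1}$, one gets $\tilde x^*(z) = \phi(x^*(z))$ and $\tilde i^*(\tilde x, z) = \sigma\, i^*\bigl(\phi^{-1}(\tilde x), z\bigr)$. Feed the same realizations of the exogenous state and output innovations into both economies and start them from matched initial conditions $\tilde X_0 = \phi(X_0)$; then by induction $\tilde X_{t+1} = e^{-\delta}\tilde i^*(\tilde X_t, Z_t) + \tilde Y_{t+1} = \mu + \sigma\bigl(e^{-\delta} i^*(X_t, Z_t) + Y_{t+1}\bigr) = \phi(X_{t+1})$, so $\tilde X_t = \phi(X_t)$ for all $t$, and hence $\tilde P_t = \tilde f^*(\tilde X_t, Z_t) = f^*(X_t, Z_t) = P_t$ almost surely; in particular the stationary price distributions agree. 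The only genuinely delicate step is the bookkeeping around the free-disposal threshold: one must verify that $\tilde x^*$ scales as $\phi(x^*)$ so that the piecewise storage functions of the two economies agree under $\phi$ on both branches. Everything else is a routine substitution.
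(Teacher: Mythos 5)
Your proposal is correct and follows essentially the same route as the paper: an affine change of variables $\tilde x=\mu+\sigma x$ under which $\tilde p\circ\phi=p$, the storage scales by $\sigma$, the free-disposal threshold maps to $\mu+\sigma x_f(y)$, and the transformed pricing rule is verified to solve the equilibrium equations of $\tilde E$, with uniqueness closing the argument. The paper states the verification directly on the fixed-point system rather than via the operator $\tilde T$, but this is the same argument.
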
 

\begin{proof}
	To simplify notation, let $f$ and $i$ be the equilibrium pricing function
	and the equilibrium inventory function of the baseline economy $E$.  Without
	loss of generality, we may assume $Z_t = Y_t$.\footnote{In general, $Z_t$ is
		a multivariate Markov process and $Y_t$ corresponds to one dimension of
		$Z_t$.}  Then for all $(x,y) \in \SS$, $\{f(x,y), i(x,y)\}$ is the unique
	solution to
	\begin{align}\label{eq:sol_equiv1}
		f(x,y) &= \min \left\{
		\max \left\{
		\me^{-\delta} \EE_y \hat M f[\me^{-\delta} i(x,y) + \hat Y, \hat Y] - k,\, p(x)
		\right\}, p(b)
		\right\} \\
		\label{eq:sol_equiv2}
		i(x,y) &= 
		\begin{cases}
			x - p^{-1}[f(x,y)], & x < x_f(y)\\
			x_f(y) - p^{-1}(0), & x\geq x_f(y)
		\end{cases}
	\end{align} 
	where 
	\begin{equation*}
		x_f(y) \coloneq \inf \left\{x \geq p^{-1}(0): f(x,y) = 0 \right\}.
	\end{equation*}
	Consider economy $\tilde E$, where all magnitudes are denoted with tildes. Let
	\begin{align} \nonumber
		\tilde x = \mu + \sigma x,
		&\quad
		\tilde y = \mu + \sigma y,  \\ 
		\label{eq:def}
		\tilde m(\tilde y, \epsilon) = m(y, \epsilon), 
		\quad 
		\tilde f(\tilde x, \tilde y) &= f(x,y),
		\quad
		\tilde \imath (\tilde x, \tilde y) = \sigma \,i(x,y). 
	\end{align}
	To prove the statement of the proposition, it suffices to show that $\{
	\tilde f (\tilde x, \tilde y), \tilde \imath (\tilde x, \tilde y) \}$ is the
	unique solution to
	\begin{align}\label{eq:s_eq1}
		\tilde f(\tilde x, \tilde y) 
		&= \min \left\{
		\max \left\{
		\me^{-\delta} \EE_{\tilde y} \hat{\tilde{M}} \tilde f[\me^{-\delta} \tilde \imath(\tilde x, \tilde y) + \hat{\tilde Y}, 
		\hat{\tilde Y}], \tilde p(\tilde x)
		\right\} - k, \tilde p(\tilde b)  
		\right\} \\
		\label{eq:s_eq2}
		\tilde \imath(\tilde x, \tilde y) 
		&= 
		\begin{cases}
			\tilde x - \tilde p^{-1}[\tilde f(\tilde x,\tilde y)], 
			& \tilde x > x_{\tilde f}(\tilde y) \\
			x_{\tilde f}(\tilde y) - \tilde p^{-1}(0), 
			& \tilde x\leq x_{\tilde f}(\tilde y)
		\end{cases}
	\end{align} 
	where 
	\begin{equation*}
		x_{\tilde f}(\tilde y) \coloneq \inf \left\{
		\tilde x \geq \tilde p^{-1}(0): \tilde f(\tilde x,\tilde y) = 0 
		\right\}.
	\end{equation*}
	This is true by referring to~\eqref{eq:sol_equiv1}--\eqref{eq:sol_equiv2}.
	In particular, by~\eqref{eq:p_ti} and~\eqref{eq:def},
	\begin{equation*}
		\hat{\tilde{M}} = \tilde m(\hat{\tilde{Y}}, \hat \epsilon) 
		= m(\hat{Y}, \hat \epsilon) = \hat M, 
		\quad
		\tilde p(\tilde x) = p(x)
		\quad \text{and} \quad 
		\tilde p(\tilde b) = p(b).
	\end{equation*}
	Furthermore, $\tilde \Psi(\tilde y, \hat{\tilde Y}) = \Psi (y, \hat Y)$ by
	the definition in~\eqref{eq:Psi_ti} and 
	\begin{align*}
		\tilde f \left[
		\me^{-\delta} \tilde \imath (\tilde x, \tilde y) + \hat{\tilde Y}
		, \hat{\tilde Y}
		\right] 
		&= \tilde f\left[
		\me^{-\delta} \sigma i(x,y) + \mu + \sigma \hat Y, \hat{\tilde{Y}}
		\right] \\ 
		&= \tilde f \left[
		\mu + \sigma \left(\me^{-\delta} i(x,y) + \hat Y\right), \hat{\tilde{Y}}
		\right] 
		= f \left[
		\me^{-\delta} i(x,y) + \hat Y, \hat Y
		\right].
	\end{align*}
	The above analysis implies that~\eqref{eq:s_eq1} holds. To see that~\eqref{eq:s_eq2} holds, note that
	\begin{align*}
		x_{\tilde f} (\tilde y) &= \inf \left\{
		\mu + \sigma x \geq \mu - \frac{a\sigma}{d}:
		f(x,y) = 0
		\right\} \\
		&= \mu + \sigma \inf \left\{
		x \geq p^{-1}(0) : f(x,y) = 0
		\right\}
		= \mu + \sigma x_f (y),
	\end{align*}
	where we have used the definition of $p$ and $\tilde p$. This yields
	$\tilde x < x_{\tilde f} (\tilde y)$ iff $x < x_f(y)$.
	In combination with~\eqref{eq:sol_equiv2}, we obtain
	\begin{equation*}
		\tilde \imath(\tilde x, \tilde y) = \sigma i(x,y) = 
		\begin{cases}
			\sigma \left(x - p^{-1}[f(x,y)]\right),
			& \tilde x < x_{\tilde f}(\tilde y),  \\
			\sigma \left(x_f(y) - p^{-1}(0)\right),
			& \tilde x \geq x_{\tilde f}(\tilde y).
		\end{cases}
	\end{equation*}
	When $\tilde x < x_{\tilde f} (\tilde y)$, using~\eqref{eq:def} and the definition of $p$ and $\tilde p$, we obtain
	\begin{align*}
		\sigma \left(x - p^{-1}[f(x,y)]\right) 
		&= \sigma x - \sigma p^{-1}[\tilde f(\tilde x, \tilde y)]  \\
		&= \tilde x - \mu - \sigma \left( 
		\frac{\tilde f(\tilde x, \tilde y) - a}{d} 
		\right)
		= \tilde x - \tilde p^{-1} [\tilde f (\tilde x, \tilde y)].
	\end{align*}
	When $\tilde x \geq x_{\tilde f} (\tilde y)$, using the definition of $p$ and $\tilde p$ again yields
	\begin{align*}
		\sigma \left(x_f(y) - p^{-1}(0)\right)
		= x_{\tilde f} (\tilde y) - \mu + \frac{a\sigma}{d} 
		= x_{\tilde f} (\tilde y) - \tilde p^{-1}(0).
	\end{align*}
	The above analysis implies that~\eqref{eq:s_eq2} holds. Therefore, economies
	$E$ and $\tilde E$ generate the same commodity price process.
\end{proof}

\section{Algorithms}\label{s:alg}

The storage model is solved by a modified version of the endogenous
grid method of \citet{carroll2006the}. The candidate space $\cC$ and $\bar{p}_f(z)$ are as defined in Appendix~\ref{sec:proof-sect-refs}. We derive the
following property in order to handle free-disposal and state-dependent
discounting in the numerical computation.

\begin{lemma}\label{lm:egm}
	For each $f$ in the candidate space $\cC$, if
	$x > p^{-1} [\bar p_f(z)]$, then
	\begin{equation*}
		Tf(x,z) = \max\left\{
		\min \left\{
		\me^{-\delta} \EE_z \hat M f\left( 
		\me^{-\delta} \left(x - p^{-1}[Tf(x,z)] \right) + \hat Y, \hat Z
		\right) - k, p(b)
		\right\}, 0
		\right\}.
	\end{equation*}
	If in addition $\bar p_f^0(z) \leq p(b)$, then 
	\begin{equation*}
		Tf(x,z) = \max \left\{
		\me^{-\delta} \EE_z \hat M f\left( 
		\me^{-\delta} \left(x - p^{-1}[Tf(x,z)] \right) + \hat Y, \hat Z
		\right) - k, 0
		\right\}.
	\end{equation*}
\end{lemma}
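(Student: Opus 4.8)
The plan is to characterize $Tf(x,z)$ through its defining fixed-point equation $\xi=\psi(\xi,x,z)$ from~\eqref{eq:sol_euler} and to combine this with the threshold description in Lemma~\ref{lm:threshold}. Since $x>p^{-1}[\bar p_f(z)]$, Lemma~\ref{lm:threshold} places us in exactly one of two regimes: the interior regime $p^{-1}[\bar p_f(z)]<x<x_f^*(z)$, where $Tf(x,z)>p_0(x)$, or the free-disposal regime $x\ge x_f^*(z)$, where $Tf(x,z)=0$. I would treat these two regimes in turn, and then show that the extra hypothesis $\bar p_f^0(z)\le p(b)$ removes the inner $\min$.

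In the interior regime, $x<x_f^*(z)$, so the first branch of~\eqref{eq:storage} gives $I(Tf(x,z),x,z)=x-p^{-1}[Tf(x,z)]$, whence the argument of $f$ inside $\psi$ is precisely $\me^{-\delta}\bigl(x-p^{-1}[Tf(x,z)]\bigr)+\hat Y$. Writing $A\coloneq\me^{-\delta}\EE_z\hat M f\bigl(\me^{-\delta}(x-p^{-1}[Tf(x,z)])+\hat Y,\hat Z\bigr)-k$, I would argue that $A>p(x)$: if instead $A\le p(x)$, then $\psi(Tf(x,z),x,z)=\min\{\max\{A,p(x)\},p(b)\}\le p(x)$, contradicting $Tf(x,z)>p_0(x)\ge p(x)$. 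Hence $Tf(x,z)=\min\{A,p(b)\}$, and since also $Tf(x,z)>p_0(x)\ge0$, appending the outer $\max\{\,\cdot\,,0\}$ changes nothing, which is the first claimed identity.

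In the free-disposal regime, $Tf(x,z)=0$, so $p^{-1}[Tf(x,z)]=p^{-1}(0)$ and the quantity inside the $\max\{\,\cdot\,,0\}$ in the claimed identity evaluates $f$ at $\me^{-\delta}(x-p^{-1}(0))+\hat Y$. Because $x\ge x_f^*(z)$ implies $x-p^{-1}(0)\ge x_f^*(z)-p^{-1}(0)$, monotonicity of $f$ in its first argument together with the defining property of $x_f^*(z)$ gives $\me^{-\delta}\EE_z\hat M f(\me^{-\delta}(x-p^{-1}(0))+\hat Y,\hat Z)-k\le0$; hence $\min\{\,\cdot\,,p(b)\}\le0$ and the outer $\max$ collapses the right-hand side to $0=Tf(x,z)$. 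This is the only genuinely delicate point: the statement deliberately uses the un-truncated $x-p^{-1}[Tf(x,z)]$ in place of the truncated $I(\,\cdot\,)$, and it is precisely the outer $\max\{\,\cdot\,,0\}$ that absorbs the mismatch when free disposal binds.

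For the second identity I would additionally show that $\bar p_f^0(z)\le p(b)$ makes the inner $\min$ with $p(b)$ redundant. In both regimes the inventory fed into $f$ is non-negative---in the interior case $x-p^{-1}[Tf(x,z)]\ge0$ because $Tf(x,z)\ge p_0(x)\ge p(x)$ forces $p^{-1}[Tf(x,z)]\le x$, and in the free-disposal case $x\ge x_f^*(z)\ge p^{-1}(0)$---so $f$ is evaluated at a point no smaller than $\hat Y$, and monotonicity of $f$ yields $\me^{-\delta}\EE_z\hat M f(\cdots)-k\le\me^{-\delta}\EE_z\hat M f(\hat Y,\hat Z)-k=\bar p_f^0(z)\le p(b)$. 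Hence the $\min\{\,\cdot\,,p(b)\}$ can be dropped, and the second displayed identity follows from the argument for the first.
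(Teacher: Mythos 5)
Your proof is correct and follows essentially the same route as the paper's (which simply cites Lemma~\ref{lm:threshold}(ii)--(iii) for the first identity and the monotonicity of $f$ together with the definition of $\bar p_f^0$ for the second); you have merely written out the case split between the interior regime and the free-disposal regime that the paper leaves implicit. Your observation that the outer $\max\{\,\cdot\,,0\}$ is what absorbs the discrepancy between $x-p^{-1}[Tf(x,z)]$ and the truncated inventory $I$ when free disposal binds is exactly the point that makes the lemma work.
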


\begin{proof}
	The first statement is immediate by Lemma~\ref{lm:threshold} (ii)--(iii) and
	the fact that $Tf(x,z)$ is decreasing in $x$. The second statement follows 
	from the definition of $\bar p_f^0 (z)$ and the monotonicity of $f$ 
	in its first argument as a candidate in $\cC$.
\end{proof}

\subsection{The Endogenous Grid Algorithm}

We define a finite Markov Chain $\{Z_t\}$.\footnote{In the model with only the
	speculative channel, we discretize the interest rate process following
	\citet{tauchen1986finite} and use a Markov chain with $N=101$ states. In the model with the demand channel, we discretize
	the VAR model representing the joint dynamics of interest rate and economic
	activity following \citet{Schm14a}.} The states are indexed by $j$ and $m$,
and the transition matrix has elements $\Phi_{j,m}$. Moreover, we use
$\mathcal{D}(x,z) = p^{-1} [f(x, z)]$ to denote a candidate equilibrium demand
function. The endogenous grid algorithm for computing the equilibrium pricing
rule is described in Algorithm~\ref{alg:egm}.

\begin{algorithm}
	\caption{\,The endogenous grid algorithm}\label{alg:egm}
	\small 
	\begin{enumerate}[label=Step \arabic*., font={\bfseries}, leftmargin=*, ref=\arabic*, itemsep=1mm]
		\item Initialization step. Choose a convergence criterion $\varpi > 0$, a grid
		on storage $\left\{I_{s}\right\}$ starting at 0, a grid on discount factor
		shocks for numerical integration $\{\epsilon_l\}$ with associated weights
		$\omega_l$, a grid on production shocks for numerical integration
		$\{\eta_{n}\}$ with associated weights $w_{n}$, and an initial policy rule
		(guessed): $\{X_{s,j}^{1}\}$ and $\{P_{s,j}^{1}\}$. Start iteration at
		$i=1$.
		\item\label{item:2} Update the demand function via linear interpolation and extrapolation:
		\begin{equation}
			\label{eq:3}
			p^{-1}\left(P^{i}_{s,j}\right)=\mathcal{D}^{i}\left(X^{i}_{s,j},Z_{j}\right).
		\end{equation}
		\item\label{item:3} Obtain prices and availability consistent with the grid of stocks and
		Markov Chain: 
		\begin{equation}
			\label{eq:2}
			P^{i+1}_{s,j}=\max\left\{ 
			\min\left\{\tilde P_{s,j}^{i+1}, \, p(b) \right\},
			\, 0
			\right\} 
			\quad \text{and} \quad 
			X^{i+1}_{s,j}=I_{s}+p^{-1}\left(P^{i+1}_{s,j}\right),
		\end{equation}
		where
		\begin{equation*}
			\tilde P_{s,j}^{i+1} =
			\me^{-\delta}\sum_{l,m,n}\omega_{l}\Phi_{j,m}w_{n}m\left(Z_m,\epsilon_l\right)p\left(\mathcal{D}^{i}\left(y\left(Z_m,\eta_n\right)+\me^{-\delta}I_{s},Z_{m}\right)\right) - k.
		\end{equation*}
		\item\label{item:4} Terminal step. If
		$\max_{s,j}|P_{s,j}^{i+1}-P_{s,j}^{i}|\ge \varpi $ then increment $i$ to
		$i+1$ and go to step~\ref{item:2}. Otherwise, approximate the equilibrium pricing rule by
		$f^*(x, z) = p [\mathcal{D}^{i}(x, z)]$.
	\end{enumerate}
\end{algorithm}

In particular, we choose to approximate the demand function $\mathcal{D}(x,z)$
in Step~\ref{item:2} instead of the price function $f(x,z)$. This is helpful for
improving both precision and stability of the algorithm when the demand function
diverges at the lower bound of the endogenous state space. A typical example is
the exponential demand $p(x) = x^{-1/\lambda} (\lambda>0)$, which is commonly
adopted by applied research \citep[see, e.g.,][]{deaton1992on, Goue22a}. If the
inverse demand function is linear as in Section~\ref{sec:quant}, however, then
it is innocuous to approximate the price function directly.

Moreover, the validity and convergence of the updating process in
Step~\ref{item:3} are justified by Theorem~\ref{t:opt_gen},
Lemma~\ref{lm:threshold}, and Lemma~\ref{lm:egm} above.

\subsection{Solution Precision}\label{sec:sp}

To evaluate the precision of the numerical solution, we refer to a suitably
adjusted version of the bounded rationality measure originally designed by
\citet{judd1992projection}, which we name as the Euler equation error and
measures how much solutions violate the optimization conditions. In the current
context, it is defined at state $(x,z)$ as
\begin{equation*}
	EE_f(x,z) = 1 - \frac{\mathcal D_1(x,z)}{\mathcal D_2 (x,z)},
\end{equation*}
where $f$ is the numerical solution of the equilibrium price,
\begin{equation*}
	\mathcal D_1(x,z) = p^{-1} \left[
	\min \left\{
	\max \left\{
	\me^{-\delta} \EE_z \hat M f(\hat X, \hat Z) - k,
	p(x)
	\right\}, p(b)
	\right\}
	\right] - b
\end{equation*}
and $\mathcal D_2(x,z) = p^{-1} [f(x,z)] - b$.
In particular, both $\mathcal D_1(x,z)$ and $\mathcal D_2(x,z)$ are expressed in
terms of the \textit{relative} demand for commodity, since $b$ is the greatest
lower bound (hence corresponds to the \textit{zero} level) of the total
available supply. Therefore, $EE_f(x,z)$ measures the error at state $(x,z)$, in
terms of the quantity consumed, incurred by using the numerical solution instead
of the true equilibrium pricing rule.

To evaluate the precision of the endogenous grid algorithm in the 
context of Section~\ref{sss:speculative-channel}, we simulate a time series $\{(X_t, R_t)\}_{t=1}^T$ of length $T=$20,000
based on the state evolution path
$X_{t+1} = \me^{-\delta} i(X_t, R_t) + Y_t$ and $R_{t+1} \sim \Phi (R_t,
\cdot)$,
where $(X_0,R_0)$ is given, and
$$i(X_t,R_t) = \min\{X_t, x_f^*(R_t)\} - p^{-1}[f(X_t,R_t)]$$
is the equilibrium storage function computed by the
endogenous grid algorithm. We discard the first 1,000 draws, and then compute
the Euler equation error at the truncated time series. When applying the
endogenous grid algorithm, we use an exponential grid for storage in the range
$[0,2]$ with median value $0.5$, function iteration is implemented via linear
interpolation and linear extrapolation, and we terminate the iteration process
at precision $\varpi = 10^{-4}$. The rest of the setting is same to
Section~\ref{sec:quant}.

\begin{table}[!htb]
	\begin{threeparttable}
		\caption{Precision under different grid sizes and different parameters}\label{tab:prec}
		\footnotesize
		\begin{tabular*}{\textwidth}{@{}l@{\extracolsep{\fill}}*{9}{d{-2}}@{}}
			\toprule
			\multicolumn{10}{@{}l@{}}{\textbf{A.} Different grid sizes}\\
			& \multicolumn{3}{c}{$K=$ 100} & \multicolumn{3}{c}{$K=$ 200} & \multicolumn{3}{c@{}}{$K=$ 1,000}\\
			\cmidrule(r){2-4} \cmidrule(lr){5-7} \cmidrule(l){8-10}
			Precision & \mcone{$N=7$} & \mcone{$N=51$} & \mcone{$N=101$} & \mcone{$N=7$} & \mcone{$N=51$} & \mcone{$N=101$}& \mcone{$N=7$} & \mcone{$N=51$} & \multicolumn{1}{c@{}}{$N=101$}\\
			\midrule
			max&-3.61&-3.64&-3.64&-3.96&-4.01&-4.02&-4.69&-5.21&-5.16\\
			95\% &-4.69&-4.66&-4.67&-5.44&-5.39&-5.39&-6.72&-6.76&-6.77\\
			\midrule
			\multicolumn{10}{@{}l@{}}{\textbf{B.} Different parameters}\\
			& \multicolumn{3}{c}{$\lambda=-0.03$} & \multicolumn{3}{c}{$\lambda=-0.06$} & \multicolumn{3}{c@{}}{$\lambda=-0.15$}\\
			\cmidrule(r){2-4} \cmidrule(lr){5-7} \cmidrule(l){8-10}
			Precision\;\; & \mcone{$\delta=0.01$} & \mcone{$\delta=0.02$} & \mcone{$\delta=0.05$} & \mcone{$\delta=0.01$} & \mcone{$\delta=0.02$} & \mcone{$\delta=0.05$} & \mcone{$\delta=0.01$} & \mcone{$\delta=0.02$} & \multicolumn{1}{c@{}}{$\delta=0.05$}\\
			\midrule
			max&-3.63&-3.65&-3.63&-3.64&-3.64&-3.59&-3.68&-3.68&-3.67\\
			95\% &-5.09&-4.9&-4.65&-4.9&-4.66&-4.45&-4.63&-4.52&-4.66\\
			\bottomrule
		\end{tabular*}
		\begin{tablenotes}
			\footnotesize Notes: In Panel A, we fix $\lambda=-0.06$ and $\delta=0.02$, simulate a
			time series of length $T=$ 20,000, discard the first 1,000 draws, and then
			compute the level of precision as $\log_{10} |EE_f|$. When applying the
			endogenous grid algorithm, we use an exponential grid for storage in the
			range $[0,2]$ with median value $0.5$, function iteration is implemented
			via linear interpolation and linear extrapolation, and we terminate the
			iteration process at precision $\varpi = 10^{-4}$. The rest of the setting
			is same to Section~\ref{sss:speculative-channel}. In Panel B, we fix the grid size to
			$K=100$ and $N=51$, and vary the parameters.
		\end{tablenotes}
	\end{threeparttable}
\end{table}

Summary statistics (maximum as well as $95$-th percentile) are reported in
\fref{tab:prec}, where $K$ is the number of grid points for storage, $N$ is the
number of state points for interest rates, and precision at $(x,z)$ is evaluated
as $\log_{10} |EE_f(x,z)|$. The results demonstrate that the endogenous grid
algorithm attains a high level of precision, with an Euler equation error
uniformly less than $0.025\%$.

\subsection{The Generalized Impulse Response Function}\label{ss:nl_irf}

To properly capture the nonlinear asymmetric dynamics of the
competitive storage model and effectively study the dynamic causal effect
of interest rates on commodity prices, we refer to the generalized impulse
response function proposed by \citet{Koop96}, which defines IRFs as
state-and-history-dependent random variables and is applicable to both linear
and nonlinear multivariate models. We are interested in calculating the 
IRFs when $(X_{t-1}, Z_{t-1})$ are held at different 
percentiles of the stationary distribution.  

Algorithm~\ref{alg:nl_irf} clarifies the computation process of the generalized
IRFs based on the setting of Section~\ref{sec:quant}. However, the algorithm can
be easily extended to handle more general settings as formulated in
Section~\ref{sec:or}, where more advanced interest rate and production setups
are allowed. To proceed, we define
\begin{equation*}
	F(x, z, Y) \coloneq \me^{-\delta} \left(
	\min\{x, x^*(z)\} - p^{-1}[f^*(x, z)]
	\right) + Y.
\end{equation*}

\begin{algorithm}[!htb]
	\caption{\,The generalized impulse response function}\label{alg:nl_irf}
	\small  
	\begin{enumerate}[label=Step \arabic*., font={\bfseries}, leftmargin=*, ref=\arabic*, itemsep=1mm]
		\item Initialization step. Choose 
		initial values for $X_{t-1}$ and $Z_{t-1}$, and
		a finite horizon $H$ and a size of Monte Carlo samples $N$.
		Furthermore, set the initial samples as
		\begin{equation*}
			\tilde X_{t-1}^n = X_{t-1}^n \equiv X_{t-1}\quad\text{and}\quad
			Z_{t-1}^n = \tilde Z_{t-1}^n \equiv Z_{t-1}.
		\end{equation*}
		\item Randomly sample $(H+1)\times N$ values of production shocks $\left\{Y_{t+h}^{S,n} \right\}_{(h,n)=(0,1)}^{(H,N)}$.
		\item (Baseline Economy) Sample $(H+1)\times N$ values of the exogenous 
		states and calculate the net production
		\begin{gather*}
			\left\{ Z_{t+h}^n \right\}_{(h,n)=(0,1)}^{(H,N)} \quad \text{where } \;
			Z_{t+h}^n \sim \Pi (Z_{t+h-1}^n, \cdot\, ) ,\\
			\left\{ Y_{t+h}^n \right\}_{(h,n)=(0,1)}^{(H,N)} \quad \text{where }
			\; Y_{t+h}^n=y \left( Z_{t+h}^n,Y_{t+h}^{S,n} \right).
		\end{gather*}
		\item (Impulse Shock Economy) Compute the period-$t$ exogenous states 
		$\{\tilde Z_t^n\}_{n=1}^N$ after the shock. Sample $H\times N$ 
		values of the exogenous states and calculate the net production
		\begin{gather*}
			\left\{ \tilde Z_{t+h}^n \right\}_{(h,n)=(1,1)}^{(H,N)} \quad \text{where } \; \tilde Z_{t+h}^n \sim \Pi (\tilde Z_{t+h-1}^n, \cdot\, ),\\
			\left\{ \tilde{Y}_{t+h}^n \right\}_{(h,n)=(0,1)}^{(H,N)} \quad \text{where }
			\; \tilde{Y}_{t+h}^n=y \left( \tilde{Z}_{t+h}^n,Y_{t+h}^{S,n} \right).              
		\end{gather*}
		\item For $h= 0, \dots, H$ and $n=1, \dots, N$, compute the sequence of availability  
		\begin{equation*}
			X_{t+h}^n = F(X_{t+h-1}^n, Z_{t+h-1}^n, Y_{t+h}^n)
			\quad \text{and} \quad 
			\tilde X_{t+h}^n
			= F(\tilde X_{t+h-1}^n, \tilde Z_{t+h-1}^n, \tilde{Y}_{t+h}^n).
		\end{equation*}
		\item For $h=0,\dots, H$, compute the period-$(t+h)$ impulse response
		$$
		IRF(t+h) = \frac{1}{N} \sum_{n=1}^N f^*(\tilde X_{t+h}^n, \tilde Z_{t+h}^n )
		- \frac{1}{N} \sum_{n=1}^N f^*( X_{t+h}^n, Z_{t+h}^n).
		$$
	\end{enumerate}
\end{algorithm}

%\subsection{The Stationary Distribution}\label{ss:stat_dist}

The stationary distribution of the state process is computed based on ergodicity.
Once $f^*$, $i^*$, and $x^*$ are calculated, we simulate a time series of 
$\{(X_t, Z_t)\}_{t=1}^T$ according to 
\begin{equation*}
	X_{t+1} = \me^{-\delta} \left(\min\{X_t, x^*(Z_t)\} - p^{-1}[f^*(X_t,Z_t)]
	\right) + y \left( Z_{t+1},Y_{t+1}^S \right)
	\; \text{and} \;
	Z_{t+1} \sim \Pi(Z_t, \,\cdot) 
\end{equation*}
for $T=$ 200,000 periods, discard the first 50,000 samples, and use the
remainder to approximate the stationary distribution.

\section{MIT shock in a Constant Interest Rate Model}\label{sec:mit-shocks}

Consider the classical competitive storage model of \citet{deaton1992on}. 
As in their paper, we assume that gross interest rates are constant at 
$r_\ell>1$, and output is {\sc iid} and satisfies $Y_t = y(\eta_t)$. Denote 
this as economy 1.

Consider an alternative economy, in which an MIT shock increases interest 
rates from $r_\ell$ to $r_h$ at the beginning of period $t$, and interest rates 
return to $r_\ell$ in all subsequent periods. Denote this as economy 2. 

Let $f^*_j$ be the equilibrium pricing rule, $i^*_j$ be the equilibrium 
storage, $x^*_j$ be the equilibrium free-disposal threshold, and $T_j$ be the 
equilibrium price operator when interest rates are constant at $r_j$ for $j\in 
\{\ell,h\}$. 

\begin{lemma}\label{lm:mono_constR}
	$T_\ell f \geq T_h f$ for all $f \in \cC$. Moreover, $f_\ell^* \geq f_h^*$ 
	and $x^*_\ell \geq x^*_h$.
\end{lemma}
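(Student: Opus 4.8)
The plan is to mirror the contradiction argument used for Proposition~\ref{pr:mono_seq}, specialized to constant interest rates, and then propagate the resulting pointwise bound to the fixed points. The one substantive step is the pointwise inequality $T_\ell f \geq T_h f$ for every $f \in \cC$; the claims about $f^*_j$ and $x^*_j$ follow by a routine induction-plus-limit argument and a set-inclusion argument.

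First I would prove $T_\ell f(x) \geq T_h f(x)$ for all $f \in \cC$ and $x \in \XX$. Fix $f$ and $x$ and suppose toward a contradiction that $\xi_\ell \coloneq T_\ell f(x) < T_h f(x) \eqcolon \xi_h$. Writing each operator in the form of~\eqref{eq:simp_func}, so that $T_j f(x) = g\bigl(\me^{-\delta}\tfrac{1}{r_j}\EE f(h(T_j f(x),x)),\, x\bigr)$ with $h(\cdot,x)$ increasing, $g(\cdot,x)$ nondecreasing, $f \geq 0$, and $\zeta \mapsto f(h(\zeta,x))$ decreasing (since $f$ is decreasing in its first argument), I would chain $\xi_\ell = g\bigl(\me^{-\delta}\tfrac{1}{r_\ell}\EE f(h(\xi_\ell,x)),x\bigr) \geq g\bigl(\me^{-\delta}\tfrac{1}{r_h}\EE f(h(\xi_\ell,x)),x\bigr) \geq g\bigl(\me^{-\delta}\tfrac{1}{r_h}\EE f(h(\xi_h,x)),x\bigr) = \xi_h$, where the first inequality uses $\tfrac{1}{r_\ell} \geq \tfrac{1}{r_h}$ with $f \geq 0$ and monotonicity of $g$, and the second uses $\xi_\ell < \xi_h$ with monotonicity of $\zeta \mapsto \EE f(h(\zeta,x))$ and of $g$. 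This contradicts $\xi_\ell < \xi_h$, so $T_\ell f \geq T_h f$.

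Next I would pass to the fixed points. By Proposition~\ref{pr:selfmap} each $T_j$ maps $\cC$ into $\cC$, and by Lemma~\ref{lm:order_pres} each is order preserving, so an induction gives $T_\ell^k f \geq T_h^k f$ for all $k$: assuming $T_\ell^k f \geq T_h^k f$, one gets $T_\ell^{k+1} f = T_\ell(T_\ell^k f) \geq T_\ell(T_h^k f) \geq T_h(T_h^k f) = T_h^{k+1} f$, using order preservation for the first inequality and the pointwise bound just established (applied at $T_h^k f \in \cC$) for the second. Letting $k \to \infty$ and using the convergence $T_j^k f \to f_j^*$ from Theorem~\ref{t:opt_gen} yields $f_\ell^* \geq f_h^*$ (this also follows directly from Proposition~\ref{pr:mono_seq} applied to the constant processes $R^1_t \equiv r_h$, $R^2_t \equiv r_\ell$). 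Finally, since $f_\ell^* \geq f_h^* \geq 0$, we have $\{x : f_\ell^*(x) = 0\} \subseteq \{x : f_h^*(x) = 0\}$, whence $x_\ell^* = \inf\{x : f_\ell^*(x) = 0\} \geq \inf\{x : f_h^*(x) = 0\} = x_h^*$. I do not anticipate a genuine obstacle: the only care needed is bookkeeping the directions of monotonicity in the contradiction step — in particular that raising the candidate price $\zeta$ shrinks $\EE f(h(\zeta,x))$ because $f$ is decreasing — and this is a straightforward specialization of machinery already in place for Propositions~\ref{pr:mono_z} and~\ref{pr:mono_seq}.
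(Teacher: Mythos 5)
Your overall strategy is the same as the paper's: a pointwise contradiction argument for $T_\ell f \geq T_h f$, then induction plus order preservation and the limit $T_j^k f \to f_j^*$ for the fixed points. The fixed-point and threshold conclusions are handled correctly (your set-inclusion derivation of $x_\ell^* \geq x_h^*$ from $f_\ell^* \geq f_h^*$ is a valid alternative to the paper's route, which instead orders the candidate thresholds $x^*_{f\ell} \geq x^*_{fh}$ directly from the definition and the ordering of the discount factors).

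There is, however, one step where you are too quick. Writing both operators as $T_j f(x) = g\bigl(\me^{-\delta}\tfrac{1}{r_j}\EE f(h(T_j f(x),x)),x\bigr)$ with a \emph{single} function $h$ hides the fact that $h(\xi,x) = \hat Y + \me^{-\delta} I(\xi,x)$ and $I$ from~\eqref{eq:storage} depends on the free-disposal threshold $x^*_{fj}$, which is \emph{different} for the two operators (it is larger for the lower rate $r_\ell$). When $x \geq x^*_{fh}$ the two storage functions are $\min\{x,x^*_{f\ell}\} - p^{-1}(\xi)$ versus $x^*_{fh} - p^{-1}(\xi)$, and your second inequality in the chain --- which needs $I_\ell(\xi_\ell,x) \leq I_h(\xi_h,x)$ --- does not follow from $\xi_\ell < \xi_h$ alone, because the $\min\{x,x^*_{fj}\}$ terms pull in the opposite direction. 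The paper closes this by first proving $x^*_{f\ell} \geq x^*_{fh}$ and then restricting the contradiction argument to $x < x^*_{fh}$, where both storage functions reduce to $x - p^{-1}(\xi)$ and your chain is exactly right; for $x \geq x^*_{fh}$ the inequality is trivial since $T_h f(x) = 0 \leq T_\ell f(x)$ by Lemma~\ref{lm:threshold}. This is a one-line repair, but as written your argument does not cover the free-disposal region.
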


\begin{proof}
	Since interest rates are constant and the output process is {\sc iid}, the 
	equilibrium objects are functions of only the endogenous state. Fix $f \in 
	\cC$. Denote $x^*_{fj}$ as the free-disposal threshold (related to 
	candidate $f$) when interest rates are constant at $r_j$ for $j\in 
	\{\ell,h\}$. Since $r_\ell < r_h$, we have
	\begin{equation*}
		\frac{\me^{-\delta}}{r_\ell} \EE f \left(
		\me^{-\delta} [x^*_{f\ell} - p^{-1}(0)] + \hat Y
		\right)
		\geq \frac{\me^{-\delta}}{r_h} \EE f \left(
		\me^{-\delta} [x^*_{f\ell} - p^{-1}(0)] + \hat Y
		\right).
	\end{equation*}
	By the continuity and monotonicity of $f$ and the definition of the 
	free-disposal threshold, we have $x^*_{f \ell} \geq x^*_{f h}$. In 
	particular, since $f$ is chosen arbitrarily and $f^* \in \cC$, we have 
	$x^*_\ell \geq x^*_h$.
	
	Next, we show that $T_\ell f \geq T_h f$. Since $T_h f(x) = 0$ whenever $x 
	\geq x^*_{fh}$ by Lemma~A.5 and $T_\ell f \geq 0$, it suffices to verify 
	that $T_\ell f(x) \geq T_h f(x)$ for all $x < x^*_{fh}$. Suppose on the 
	contrary that $\xi_1\coloneq T_\ell f(x) < T_h f(x) =:\xi_2$ for some $x < 
	x^*_{fh}$. By the definition of the equilibrium price operator, we have
	\begin{align*}
		\xi_1 &= \min \left\{
		\max \left\{
		\frac{\me^{-\delta}}{r_\ell} \EE f \left(
		\me^{-\delta} [x - p^{-1}(\xi_1)] + \hat Y
		\right) - k, p(x)
		\right\}, p(b)
		\right\}  \\
		&\geq \min \left\{
		\max \left\{
		\frac{\me^{-\delta}}{r_\ell} \EE f \left(
		\me^{-\delta} [x - p^{-1}(\xi_2)] + \hat Y
		\right) - k, p(x)
		\right\}, p(b)
		\right\}  \\
		&\geq \min \left\{
		\max \left\{
		\frac{\me^{-\delta}}{r_h} \EE f \left(
		\me^{-\delta} [x - p^{-1}(\xi_2)] + \hat Y
		\right) - k, p(x)
		\right\}, p(b)
		\right\} = \xi_2,
	\end{align*}
	which is a contradiction. Hence we have shown that $T_\ell f \geq T_h f$ 
	for all $f \in \cC$.
	
	Suppose $T^{n-1}_\ell f \geq T^{n-1}_h f$. Since the equilibrium price 
	operator is order-preserving by Lemma~A.2, we have $T^n_\ell f = T_\ell 
	(T_\ell^{n-1} f) \geq T_\ell (T_h^{n-1} f) \geq T_h^{n}f$. By induction, we 
	have shown that $T_\ell^n f \geq T_h^n f$ for all $n$. Because $T_\ell^n f 
	\to f_\ell^*$ and $T_h^n f \to f^*_h$ by Theorem~A.1, letting $n \to 
	\infty$ gives $f^*_\ell \geq f^*_h$. 
\end{proof}

The next result indicates that, in the classical competitive storage model, an MIT interest rate shock only has a contemporaneous negative effect on commodity prices, which will die out immediately starting from the next period. In each future period, commodity prices are at least as large as their level when shocks are absent.

\begin{proposition}[MIT Shock]\label{pr:mit-shock-constant}
	If $X_{t-1}^1 = X_{t-1}^2$ with probability one, then $P_t^1 \geq P_t^2$ 
	and $P_{t+h}^1 \leq P_{t+h}^2$ with probability one for all $h \geq 1$.
\end{proposition}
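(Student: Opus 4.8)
The plan is to exploit that the interest-rate spike in economy~2 is a one-off, fully reverting (MIT) deviation: economy~2 then agrees with economy~1 in every period $\tau\neq t$, and in period $t$ its equilibrium price is a \emph{single} application of the constant-rate-$r_h$ operator $T_h$ to the stationary rule $f^*_\ell$. I would first pin down the period-$t$ state. Since the shock is unanticipated and strikes only at the start of period $t$, both economies carry stocks over from period $t-1$ via $i^*_\ell$; with $X^1_{t-1}=X^2_{t-1}$ and the common output $Y_t=y(\eta_t)$ this gives $X^1_t=X^2_t=:X_t$ almost surely. From period $t+1$ onward economy~2 is back in the stationary constant-$r_\ell$ equilibrium, governed by $f^*_\ell$, so the continuation pricing rule perceived at $t$ is exactly $f^*_\ell$; hence the period-$t$ equilibrium conditions for economy~2 are precisely the defining equation of $T_h f^*_\ell$ evaluated at $X_t$, and by uniqueness (Proposition~\ref{pr:wel_def}) $P^2_t=(T_h f^*_\ell)(X_t)$, while $P^1_t=f^*_\ell(X_t)=(T_\ell f^*_\ell)(X_t)$. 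Lemma~\ref{lm:mono_constR} with $f=f^*_\ell$ gives $(T_\ell f^*_\ell)(X_t)\ge(T_h f^*_\ell)(X_t)$, i.e.\ $P^1_t\ge P^2_t$, which is the first claim.

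Next I would compare period-$t$ storage. For $g\in\cC$ write $i_g(x)=\min\{x,x^*_g\}-p^{-1}[g(x)]$ with $x^*_g=\inf\{x:g(x)=0\}$; for economy~2 at $t$ this threshold coincides with the operator threshold internal to $T_h$ by Lemma~\ref{lm:threshold}(iii), so that $I^2_t=i_{T_h f^*_\ell}(X_t)$ and $I^1_t=i^*_\ell(X_t)=i_{f^*_\ell}(X_t)$. A short argument — splitting on $x<x^*_{g_2}$, $x^*_{g_2}\le x<x^*_{g_1}$, and $x\ge x^*_{g_1}$, and using that $g_1\ge g_2\ge 0$ forces $x^*_{g_1}\ge x^*_{g_2}$ together with the fact that $p^{-1}$ is decreasing — shows that $g_1\ge g_2$ in $\cC$ implies $i_{g_1}\ge i_{g_2}$ pointwise. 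Applying this with $g_1=f^*_\ell\ge T_h f^*_\ell=g_2$ yields $I^1_t\ge I^2_t$, and therefore $X^1_{t+1}=\me^{-\delta}I^1_t+Y_{t+1}\ge\me^{-\delta}I^2_t+Y_{t+1}=X^2_{t+1}$.

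Finally I would propagate this ordering forward. For every $\tau\ge t+1$ both economies obey the stationary recursion $X_{\tau+1}=\me^{-\delta}i^*_\ell(X_\tau)+Y_{\tau+1}$ driven by the same output sequence, and $i^*_\ell$ is increasing by Proposition~\ref{pr:storage}; a one-line induction starting from $X^1_{t+1}\ge X^2_{t+1}$ then gives $X^1_{t+h}\ge X^2_{t+h}$ for all $h\ge 1$. Since $f^*_\ell$ is decreasing, $P^1_{t+h}=f^*_\ell(X^1_{t+h})\le f^*_\ell(X^2_{t+h})=P^2_{t+h}$ for all $h\ge 1$, which is the second claim.

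I expect the main obstacle to be the identification $P^2_t=(T_h f^*_\ell)(X_t)$ in the first step: making rigorous that a one-period, fully reverting rate spike collapses the period-$t$ equilibrium problem to a single application of $T_h$ to the stationary rule $f^*_\ell$, and that economy~2 is genuinely in the $f^*_\ell$ equilibrium from $t+1$ on. The only other fiddly point is the free-disposal bookkeeping in the storage-monotonicity lemma of the second step. Everything else is mechanical given Lemma~\ref{lm:mono_constR}, Lemma~\ref{lm:threshold}, and Proposition~\ref{pr:storage}.
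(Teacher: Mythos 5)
Your proposal is correct and follows essentially the same route as the paper's proof: establish $X_t^1=X_t^2$, identify economy~2's period-$t$ pricing rule as $T_h f_\ell^*$ and compare it with $f_\ell^*=T_\ell f_\ell^*$ via Lemma~\ref{lm:mono_constR}, deduce the inventory ordering $I_t^1\ge I_t^2$ from the ordering of the pricing rules, and then induct forward on $X_{t+h}^1\ge X_{t+h}^2$ using the monotonicity of $i_\ell^*$ and $f_\ell^*$. The only cosmetic differences are that you carry the free-disposal threshold of $T_h f_\ell^*$ (which by Lemma~\ref{lm:threshold}(iii) is the operator-internal threshold) while the paper writes $x_\ell^*$ in both storage formulas, and your storage-monotonicity step needs no case split since both terms $\min\{x,x^*_g\}$ and $-p^{-1}[g(x)]$ are separately monotone in $g$.
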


\begin{proof}
	Since shocks are unexpected and both economies share the same interest rate 
	$r_\ell$ in period $t-1$ and the same output $Y_t = y(\eta_t)$ in each 
	period, we have
	\begin{equation*}
		X_t^1 = \me^{-\delta} i_\ell^*(X_{t-1}^1) + Y_t
		= \me^{-\delta} i_\ell^*(X_{t-1}^2) + Y_t = X_t^2
	\end{equation*}
	with probability one. Since in period $t$ interest rate in economy 2 is 
	higher, agents will account for the temporarily different incentive, hence 
	the price function in economy 2 is $f^*_{h\ell} = T_h f_\ell^*$ and the 
	inventory function is 
	\begin{equation*}
		i_{h\ell}^*(x) = \min\{x_\ell^*,x\} - p^{-1}[f_{h\ell}^*(x)] 
	\end{equation*}
	Based on Lemma~\ref{lm:mono_constR}, we have
	\begin{equation*}
		f^*_{h\ell} = T_h f_\ell^* \leq T_\ell f_\ell^* = f_\ell^*
	\end{equation*}
	and thus
	\begin{equation*}
		i^*_{h\ell}(x) \leq \min\{x_\ell^*,x\} - p^{-1}[f_{\ell}^*(x)] 
		= i^*_\ell(x). 
	\end{equation*}
	As we have shown that $X_t^1 = X_t^2$ with probability one, the above 
	inequalities imply that
	\begin{equation*}
		P_t^1 = f_\ell^*(X_t^1) = f_\ell^*(X_t^2) \geq f_{h\ell}^*(X_t^2) = 
		P_t^2 
	\end{equation*}
	and 
	\begin{equation*}
		X_{t+1}^1 = \me^{-\delta} i_\ell^*(X_t^1) + Y_{t+1} 
		\geq \me^{-\delta} i_{h\ell}^*(X_t^2) + Y_{t+1} = X_{t+1}^2
	\end{equation*}
	with probability one. Since starting from period $t+1$ interest rates 
	return to $r_\ell$, the equilibrium objects become $f_\ell^*$ and 
	$i_\ell^*$ in both economies. By the monotonicity of $f_\ell^*$ and 
	$i_{\ell}^*$, we have
	\begin{equation*}
		P_{t+1}^1 = f_\ell^*(X_{t+1}^1) \leq f_\ell^*(X_{t+1}^2) = P_{t+1}^2
	\end{equation*}
	and 
	\begin{equation*}
		X_{t+2}^1 = \me^{-\delta} i_\ell^*(X_{t+1}^1) + Y_{t+2} 
		\geq \me^{-\delta} i_\ell^*(X_{t+1}^2) + Y_{t+2} = X_{t+2}^2 
	\end{equation*}
	with probability one. By induction, we can then show that $P_{t+h}^1 \leq 
	P_{t+h}^2$ with probability one for all $h \geq 1$.
\end{proof}

\section{A Necessity Result for Discounting}\label{s:necess}

In this section, we show that Assumption~\ref{a:opt} is necessary
in a range of standard settings: no equilibrium solution exists if 
Assumption~\ref{a:opt} fails. Throughout, we focus on the case
where\textit{}
\begin{equation*}
	b=0, \quad \XX = (b, \infty)
	\quad \text{and} \quad
	M_t = m(Z_t).
\end{equation*}
Let $\bB_{\SS}$ be the Borel subsets of $\SS$. Let $\mathcal K(\SS, \XX)$ be 
the set of all stochastic kernels $\Psi(x, z, \diff x')$ from $\SS$ to $\XX$ 
such that 
\begin{equation*}
	Qf(x, z)
	\coloneq \sum_{z'} \int f(x', z') \Phi(z, z') \Psi(x, z, \diff x'),
	\qquad (x, z) \in \SS
\end{equation*}
has a stationary distribution $\pi$ on the set of Borel probability measures on
$\SS$ and is irreducible and weakly compact as an operator on $L_1 \coloneq
L_1(\SS, \bB_{\SS}, \pi)$.

Given $\Psi \in \mathcal K(\SS, \XX)$, we consider the functional equation
\begin{equation}\label{eq:modfe}
	f(x, z) = \max
	\left\{ 
	\me^{-\delta} \sum_{z'} m(z') \int f(x', z') \Phi(z, z') 
	\Psi(x, z, \diff x'), 
	p(x)
	\right\},
\end{equation}
where $m$ is a positive function on $\ZZ$ and $p$ is a decreasing map from $\XX$
to itself with $\int p \diff \pi < \infty$ and $p(x) \uparrow \infty$ as $x
\downarrow 0$. Letting $K$ be the positive linear operator on $L_1$ defined by 
\begin{equation*}
	Kf(x,z) \coloneq
	\me^{-\delta} \sum_{z'} m(z') \int f(x', z') \Phi(z, z') 
	\Psi(x, z, \diff x'),
	\qquad (x, z) \in \SS,
\end{equation*}
we can also write~\eqref{eq:modfe} as $f = Kf \vee p$. Same as above,
let $s(K)$ be the spectral radius of $K$ as a linear operator on $L_1$.

\begin{lemma}\label{l:specradk}
	If $\Psi \in \mathcal K(\SS, \XX)$, then $- \ln s(K) = \delta + \kappa(M)$.
\end{lemma}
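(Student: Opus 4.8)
The plan is to identify $s(K)$ with $\me^{-\delta}\,s(L)$ for the matrix $L$ of~\eqref{eq:ell}, and then appeal to Lemma~\ref{lm:yy}. Write $K = \me^{-\delta}\,QM$, where $M$ denotes multiplication by $m$ regarded as a function of $z$ alone. First I would show by induction that $K^n\mathbf 1(x,z) = \me^{-n\delta}\,q_n(z)$ depends on $z$ only, where $q_n(z) \coloneq \EE_z\prod_{t=1}^n M_t$. The base case is $K\mathbf 1(x,z) = \me^{-\delta}\sum_{z'}m(z')\Phi(z,z')$, which is a function of $z$ since $\Psi(x,z,\cdot)$ is a probability measure; for the inductive step, if $g$ depends on $z$ only then $Kg(x,z) = \me^{-\delta}\sum_{z'}\Phi(z,z')m(z')g(z') = \me^{-\delta}(Lg)(z)$, so $K^n\mathbf 1 = \me^{-n\delta}L^n\mathbf 1$, and $L^n\mathbf 1(z) = q_n(z)$ by~\eqref{eq:L_map}. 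Testing the stationarity identity $\int Qf\,\diff\pi = \int f\,\diff\pi$ against functions of $z$ alone shows that the $z$-marginal $\nu$ of $\pi$ satisfies $\nu\Phi = \nu$, hence is the stationary law of $\{Z_t\}$; therefore $\|K^n\mathbf 1\|_{L_1(\pi)} = \me^{-n\delta}\sum_z q_n(z)\nu(z) = \me^{-n\delta}q_n$, with $q_n$ the quantity in~\eqref{eq:yy}. Since $q_n^{1/n}\to s(L) = \me^{-\kappa(M)}$ by (the proof of) Lemma~\ref{lm:yy}, the claim reduces to showing $s(K) = \me^{-\delta}\lim_n q_n^{1/n}$.

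The lower bound is immediate: $\|\mathbf 1\|_{L_1(\pi)} = 1$, so $\me^{-n\delta}q_n = \|K^n\mathbf 1\|_{L_1(\pi)} \le \|K^n\|$, and Gelfand's formula gives $s(K) \ge \me^{-\delta}\lim_n q_n^{1/n} = \me^{-(\delta+\kappa(M))}$; in particular $q_n \ge (\min_z m(z))^n$ forces $s(K) > 0$.

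For the reverse inequality I would use Perron--Frobenius theory for positive operators. The operator $K$ is positive on $L_1$; since $Q$ is weakly compact and $L_1$ has the Dunford--Pettis property, $K^2 = \me^{-2\delta}(QM)^2$ is compact; and $s(K) > 0$. Hence (e.g.\ \citet{krasnosel2012approximate}) $s(K)$ is an eigenvalue of the adjoint $K^*$ on $L_\infty = (L_1)^*$ with a nonzero positive eigenfunctional $\phi$. Because $\pi$ is finite, $\mathbf 1$ is a quasi-interior point of $L_1(\pi)_+$, so $\langle\phi,\mathbf 1\rangle = \int\phi\,\diff\pi \in (0,\infty)$. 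Applying $K^*\phi = s(K)\phi$ to $\mathbf 1$ and using $\phi \le \|\phi\|_\infty$ $\pi$-a.e.,
\[
s(K)^n \int\phi\,\diff\pi = \langle\phi, K^n\mathbf 1\rangle = \me^{-n\delta}\int \phi(x,z)\,q_n(z)\,\pi(\diff x,\diff z) \le \|\phi\|_\infty\,\me^{-n\delta}\,q_n,
\]
so $s(K)^n \le C\,\me^{-n\delta}q_n$ with $C$ independent of $n$; taking $n$-th roots and letting $n\to\infty$ gives $s(K) \le \me^{-\delta}\lim_n q_n^{1/n} = \me^{-(\delta+\kappa(M))}$. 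Combining the two bounds, $s(K) = \me^{-(\delta+\kappa(M))}$, i.e.\ $-\ln s(K) = \delta + \kappa(M)$.

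The main obstacle is the reverse inequality: one must pass from weak compactness of $Q$ to compactness of a power of $K$ (via the Dunford--Pettis property of $L_1$), verify $s(K) > 0$, and then apply the infinite-dimensional Krein--Rutman theorem to $K^*$ so as to extract the clean estimate $s(K)^n \le C\,\|K^n\mathbf 1\|_{L_1(\pi)}$. Everything else --- the identity $K^n\mathbf 1 = \me^{-n\delta}q_n(\cdot)$, the identification of $\nu$ with the stationary law of $\{Z_t\}$, and the lower bound on $s(K)$ --- is routine given Lemma~\ref{lm:yy} and~\eqref{eq:L_map}.
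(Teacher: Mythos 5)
Your proof is correct, and its skeleton matches the paper's: both compute $K^n \1 = \me^{-n\delta} q_n(\cdot)$ by induction and reduce the lemma to the identity $s(K) = \lim_n \|K^n \1\|_{L_1(\pi)}^{1/n}$. The difference lies in how that identity is justified. The paper simply invokes compactness of $K^2$ (weak compactness of $Q$ plus Theorem~9.9 of \citet{schaefer1974banach}) together with Theorem~B2 of \citet{borovivcka2020necessary}, which delivers the local spectral radius formula in one stroke. You instead prove the two inequalities by hand: the lower bound via Gelfand's formula and $\|\1\|_{L_1(\pi)}=1$, and the upper bound by extracting a positive eigenfunctional $\phi$ of $K^*$ at $s(K)$ (Krein--Rutman for the power-compact positive operator, with $s(K)>0$ guaranteed by $q_n \geq (\min_z m(z))^n$) and testing it against $\1$, whose quasi-interiority gives $\langle \phi, \1\rangle > 0$. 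Your route is more self-contained and, as a bonus, does not use irreducibility of $Q$ for this lemma (only for the strict positivity needed later in Proposition~\ref{pr:modfesol}); the paper's route is shorter but leans on an external result. You also make explicit a step the paper leaves implicit, namely that the $z$-marginal of $\pi$ is the stationary law of $\{Z_t\}$, so that $\EE\prod_{t=1}^n M_t$ computed under $\pi$ coincides with the $q_n$ of~\eqref{eq:yy}; that verification is a genuine (if small) gap-fill. Both arguments are sound.
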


\begin{proof}
	An induction argument shows that
	\begin{equation*}
		K^n \1 (x, z) = \me^{-\delta n} \EE_z \prod_{t=1}^n M_t.
		%\qquad \text{where } M_t \coloneq m(Z_t).
	\end{equation*}
	Letting $(X_0, Z_0)$ be a draw from $\pi$, we obtain
	\begin{equation*}
		\| K^n \1 \| 
		= \EE K^n \1(X_0, Z_0)
		= \me^{-\delta n}
		\EE
		\left[ \EE_{Z_0}  \prod_{t=1}^n M_t  \right]
		= \me^{-\delta n} \EE  \prod_{t=1}^n M_t .
	\end{equation*}
	Hence, by weak compactness of $K$ \citep[which implies compactness of $K^2$ by
	Theorem~9.9 of ][]{schaefer1974banach} and Theorem~B2 of
	\citet{borovivcka2020necessary}, we have
	\begin{equation*}
		s(K) 
		= \lim_{n \to \infty} \| K^n \1 \|^{1/n}
		= \lim_{n \to \infty} 
		\left\{ 
		\me^{-\delta n} \EE  \prod_{t=1}^n M_t
		\right\}^{1/n}
		= \me^{-\delta} \lim_{n \to \infty} q_n^{1/n}.
	\end{equation*}
	It follows that $- \ln s(K) = \delta + \kappa(M)$, as was to be shown. 
\end{proof}

The following result demonstrates the necessity of Assumption~\ref{a:opt} in 
the above standard setting.

\begin{proposition}\label{pr:modfesol}
	If there exists an $f \in L_1$ and $\Psi \in \mathcal K(\SS, \XX)$ such
	that~\eqref{eq:modfe} holds, then $\delta + \kappa(M) \geq 0$.  If, in addition, 
	\begin{equation*}
		E \coloneq \setntn{(x, z) \in \SS}{Kf(x, z) < p(x)} 
	\end{equation*}
	obeys $\pi(E) > 0$, then $\delta + \kappa(M) > 0$.
\end{proposition}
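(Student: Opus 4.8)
The plan is to reduce the functional equation to the single inequality $f\ge Kf$ and then test it against a nonnegative left eigenfunctional of $K$. First I would note that, since $f=Kf\vee p$, we have $f\ge Kf$ and $f\ge p$ (both $\pi$-a.e.); as $p$ maps $\XX$ into $\XX$ we have $p>0$, so $f>0$ $\pi$-a.e., and being an element of $L_1$ this makes $f$ a quasi-interior point of the positive cone of $L_1(\pi)$. Recall from the proof of Lemma~\ref{l:specradk} that $K$ is positive, $K^2$ is compact (weak compactness of $K$ together with Theorem~9.9 of \citet{schaefer1974banach}), and that $K$ is irreducible by the definition of $\mathcal K(\SS,\XX)$. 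By the Perron--Frobenius theory for irreducible positive operators with a compact power (the same input already used in Lemma~\ref{l:specradk} via Theorem~B2 of \citet{borovivcka2020necessary}), $s(K)>0$ and $s(K)$ is an eigenvalue of the adjoint $K^*$ on $L_\infty=(L_1)^*$ with a $\pi$-a.e.\ strictly positive eigenfunctional $\mu$, i.e.\ $K^*\mu=s(K)\mu$ and $\mu>0$ $\pi$-a.e.

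For the first assertion I would pair $f\ge Kf$ with $\mu$. Since $f\in L_1$ and $\mu\in L_\infty$, the pairing $\langle f,\mu\rangle_\pi=\int f\mu\,\diff\pi$ is finite, and it is strictly positive because $f>0$ and $\mu>0$ $\pi$-a.e. Using positivity of $\mu$ and the adjoint identity,
\begin{equation*}
  \langle f,\mu\rangle_\pi \;\ge\; \langle Kf,\mu\rangle_\pi \;=\; \langle f,K^*\mu\rangle_\pi \;=\; s(K)\,\langle f,\mu\rangle_\pi,
\end{equation*}
so dividing by $\langle f,\mu\rangle_\pi>0$ gives $s(K)\le 1$, whence $\delta+\kappa(M)=-\ln s(K)\ge 0$ by Lemma~\ref{l:specradk}. (Alternatively, this assertion alone follows from the Gelfand-type identity $s(K)=\lim_n\|K^n f\|^{1/n}$, valid for the quasi-interior $f$ exactly as for $\1$ in Lemma~\ref{l:specradk}, combined with $0\le K^n f\le f$ and hence $\|K^n f\|\le\|f\|$.)

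For the second assertion, on the set $E=\{(x,z):Kf(x,z)<p(x)\}$ the equation $f=Kf\vee p$ forces $f=p$, hence $f-Kf=p-Kf>0$ on $E$, while $f-Kf\ge 0$ everywhere. Since $\pi(E)>0$ and $\mu>0$ $\pi$-a.e., we get $\langle f-Kf,\mu\rangle_\pi\ge\int_E(p-Kf)\,\mu\,\diff\pi>0$, so the displayed inequality is strict, $\langle f,\mu\rangle_\pi>s(K)\langle f,\mu\rangle_\pi$, giving $s(K)<1$ and therefore $\delta+\kappa(M)=-\ln s(K)>0$ by Lemma~\ref{l:specradk}.

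The step I expect to be the main obstacle is the invocation of the strictly positive eigenfunctional $\mu$: one must verify that the properties bundled into the definition of $\mathcal K(\SS,\XX)$ --- positivity, irreducibility, and weak compactness (hence compactness of $K^2$) of $K$ on $L_1(\pi)$ --- are exactly what the Krein--Rutman/de Pagter machinery needs when applied to $K^*$, and, crucially for the strict inequality, that irreducibility upgrades $\mu$ from merely nonnegative to $\pi$-a.e.\ strictly positive. Everything else is routine: the reduction $f\ge Kf$ from the functional equation, the finiteness and positivity of $\langle f,\mu\rangle_\pi$ (where $f\in L_1$ and $p>0$ enter), and the translations $s(K)\le 1\Leftrightarrow\delta+\kappa(M)\ge 0$ and $s(K)<1\Leftrightarrow\delta+\kappa(M)>0$ supplied by Lemma~\ref{l:specradk}.
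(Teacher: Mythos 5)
Your argument is correct and follows essentially the same route as the paper's proof: both reduce $f=Kf\vee p$ to $f\ge Kf$, invoke Krein--Rutman together with irreducibility and weak compactness of $K$ to obtain a $\pi$-a.e.\ strictly positive left eigenfunctional at $s(K)$, pair it with $f$ to get $s(K)\le 1$ (strictly, when $\pi(E)>0$), and translate via Lemma~\ref{l:specradk}. The only additions are cosmetic (the quasi-interior remark and the alternative Gelfand-limit argument for the first assertion), so no further comparison is needed.
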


\begin{proof}
	Let $f$ and $h$ have the stated properties.  
	Regarding the first claim, we note that, since $Q$ is weakly compact and
	irreducible on $L_1$, and since $m$ is positive and bounded, the operator
	$K$ is likewise weakly compact and irreducible. 
	By the Krein--Rutman theorem, combined with irreducibility
	and weak compactness of $K$ \citep[see, in particular, Lemma~4.2.11
	of ][]{meyer2012banach}, there exists an $e \in L_\infty \coloneq
	L_\infty(\SS, \bB_\SS, \pi)$ such that $e > 0$ $\pi$-almost everywhere and
	$K^* e = s(K) e$, where $K^*$ is the adjoint of $K$.
	
	By~\eqref{eq:modfe} we have $f = Kf \vee p$, so $f \geq Kf$.  Hence
	\begin{equation}\label{eq:ef}
		\inner{e, f} \geq \inner{e, Kf} = \inner{K^* e, f} = s(K) \inner{e, f}.
	\end{equation}
	Since $f \geq p > 0$ and $e$ is positive $\pi$-a.e., we have $\inner{e, f} >
	0$.  Hence $s(K) \leq 1$. Applying Lemma~\ref{l:specradk} now yields $\delta + \kappa(M) \geq 0$.
	
	Regarding the second claim, suppose that $\pi(E) > 0$, where $E$ is as
	defined in Proposition~\ref{pr:modfesol}. It then follows from $f = Kf \vee
	p$ that $f > Kf$ on a set of positive $\pi$-measure.  But then, since $e$ is
	positive $\pi$-a.e., we have $\inner{e, f} > \inner{e, Kf} = s(K) \inner{e,
		f}$, where the equality is from~\eqref{eq:ef}. As before we have $\inner{e,
		f} > 0$, so $s(K) < 1$. Using Lemma~\ref{l:specradk} again we obtain $\delta
	+ \kappa(M) > 0$.
\end{proof}

%%% Local Variables:
%%% TeX-master: "cp-theory"
%%% End:

%\clearpage{}
\ifpdf\pdfbookmark{References}{sec:references}\fi
\bibliographystyle{aer}
\bibliography{./cp}

@article{stachurski2021dynamic,
  title={Dynamic programming with state-dependent discounting},
  author={Stachurski, John and Zhang, Junnan},
  journal={Journal of Economic Theory},
  volume={192},
  pages={105190},
  year={2021},
  publisher={Elsevier}
}

@article{toda2021perov,
  title={Perov's contraction principle and dynamic programming with stochastic discounting},
  author={Toda, Alexis Akira},
  journal={Operations Research Letters},
  volume={49},
  number={5},
  pages={815--819},
  year={2021},
  publisher={Elsevier}
}

@article{goldstein2022commodity,
	title={Commodity financialization and information transmission},
	author={Goldstein, Itay and Yang, Liyan},
	journal={The Journal of Finance},
	volume={77},
	number={5},
	pages={2613--2667},
	year={2022},
	publisher={Wiley Online Library}
}

@article{baker2021financialization,
	title={The financialization of storable commodities},
	author={Baker, Steven D},
	journal={Management Science},
	volume={67},
	number={1},
	pages={471--499},
	year={2021},
	publisher={INFORMS}
}

@article{hirano2024bubble,
  title={Bubble economics},
  author={Hirano, Tomohiro and Toda, Alexis Akira},
  journal={Journal of Mathematical Economics},
  volume={111},
  pages={102944},
  year={2024},
  publisher={Elsevier}
}

@InCollection{barlevy2012rethinking,
  author    = {Barlevy, Gadi},
  booktitle = {New Perspectives on Asset Price Bubbles},
  publisher = {Oxford University Press},
  title     = {Rethinking Theoretical Models of Bubbles: Reflections Inspired by the Financial Crisis and Allen and Gorton's Paper ``Churning Bubbles''},
  year      = {2012},
  chapter   = {3},
  editor    = {Douglas D. Evanoff and George G. Kaufman and A. G. Malliaris},
  isbn      = {9780199844333},
  month     = mar,
  pages     = {41--62},
  doi       = {10.1093/acprof:osobl/9780199844333.003.0003},
}

@article{guerron2023bubbles,
  title={Bubbles, crashes, and economic growth: Theory and evidence},
  author={Guerron-Quintana, Pablo A and Hirano, Tomohiro and Jinnai, Ryo},
  journal={American Economic Journal: Macroeconomics},
  volume={15},
  number={2},
  pages={333--371},
  year={2023},
  publisher={American Economic Association 2014 Broadway, Suite 305, Nashville, TN 37203-2425}
}

@article{plantin2023asset,
  title={Asset bubbles and inflation as competing monetary phenomena},
  author={Plantin, Guillaume},
  journal={Journal of Economic Theory},
  volume={212},
  pages={105711},
  year={2023},
  publisher={Elsevier}
}

@book{meyer2012banach,
  title={Banach Lattices},
  author={Meyer-Nieberg, Peter},
  year={2012},
  publisher={Springer}
}

@article{borovivcka2020necessary,
  title={Necessary and sufficient conditions for existence and uniqueness of recursive utilities},
  author={Borovi{\v{c}}ka, Jaroslav and Stachurski, John},
  journal={The Journal of Finance},
  volume={75},
  number={3},
  pages={1457--1493},
  year={2020},
  publisher={Wiley Online Library}
}

@book{schaefer1974banach,
  title={Banach Lattices and Positive Operators},
  author={Schaefer, Helmut H},
  year={1974},
  publisher={Springer}
}

@techreport{frankel2018,
    title={Rising US Real Interest Rates Imply Falling Commodity Prices},
    author={Frankel, Jeffrey A.},
    year={2018},
    type={blog post},
    institution={Harvard Kennedy School, Belfer Center for Science and International Affairs}
}

@article{cody1991role,
	title={The role of commodity prices in formulating monetary policy},
	author={Cody, Brian J and Mills, Leonard O},
	journal={The Review of Economics and Statistics},
	pages={358--365},
	year={1991},
	volume={73},
	number={2},
	publisher={JSTOR}
}

@InCollection{christiano1999monetary,
  author    = {Lawrence J. Christiano and Martin Eichenbaum and Charles L. Evans},
  booktitle = {Handbook of Macroeconomics},
  publisher = {Elsevier},
  title     = {Monetary policy shocks: What have we learned and to what end?},
  year      = {1999},
  chapter   = {2},
  editor    = {John B. Taylor and Michael Woodford},
  pages     = {65--148},
  volume    = {1, Part A},
  doi       = {10.1016/s1574-0048(99)01005-8},
}

@article{bernanke2005measuring,
	title={Measuring the effects of monetary policy: a factor-augmented vector autoregressive (FAVAR) approach},
	author={Bernanke, Ben S and Boivin, Jean and Eliasz, Piotr},
	journal={The Quarterly Journal of Economics},
	volume={120},
	number={1},
	pages={387--422},
	year={2005},
	publisher={MIT Press}
}

@article{peersman2022international,
	title={International food commodity prices and missing (dis)inflation in the euro area},
	author={Peersman, Gert},
	journal={Review of Economics and Statistics},
	volume={104},
	number={1},
	pages={85--100},
	year={2022},
	publisher={MIT Press One Rogers Street, Cambridge, MA 02142-1209, USA journals-info~…}
}

@article{gospodinov2013commodity,
	title={Commodity prices, convenience yields, and inflation},
	author={Gospodinov, Nikolay and Ng, Serena},
	journal={Review of Economics and Statistics},
	volume={95},
	number={1},
	pages={206--219},
	year={2013},
	publisher={The MIT Press}
}

@article{eberhardt2021commodity,
	title={Commodity prices and banking crises},
	author={Eberhardt, Markus and Presbitero, Andrea F},
	journal={Journal of International Economics},
	volume={131},
	pages={103474},
	year={2021},
	publisher={Elsevier}
}

@article{byrne2013primary,
	title={Primary commodity prices: Co-movements, common factors and fundamentals},
	author={Byrne, Joseph P and Fazio, Giorgio and Fiess, Norbert},
	journal={Journal of Development Economics},
	volume={101},
	pages={16--26},
	year={2013},
	publisher={Elsevier}
}

@article{judd1992projection,
	title={Projection methods for solving aggregate growth models},
	author={Judd, Kenneth L},
	journal={Journal of Economic Theory},
	volume={58},
	number={2},
	pages={410--452},
	year={1992},
	publisher={Elsevier}
}

@article{frankel2014effects,
  title =	 {Effects of speculation and interest rates in a ``carry trade''
                  model of commodity prices},
  author =	 {Frankel, Jeffrey A.},
  journal =	 {Journal of International Money and Finance},
  volume =	 42,
  pages =	 {88--112},
  year =	 2014,
  publisher =	 {Elsevier}
}

@article{harvey2017long,
  title =	 {Long-run commodity prices, economic growth, and interest
                  rates: 17th century to the present day},
  author =	 {Harvey, David I. and Kellard, Neil M. and Madsen, Jakob B. and
                  Wohar, Mark E.},
  journal =	 {World Development},
  volume =	 89,
  pages =	 {57--70},
  year =	 2017,
  publisher =	 {Elsevier}
}

@article{kilian2009not,
  title =	 {Not all oil price shocks are alike: Disentangling demand and
                  supply shocks in the crude oil market},
  author =	 {Kilian, Lutz},
  journal =	 {American Economic Review},
  volume =	 99,
  number =	 3,
  pages =	 {1053--69},
  year =	 2009
}

@article{alquist2020commodity,
  title =	 {Commodity-price comovement and global economic activity},
  author =	 {Alquist, Ron and Bhattarai, Saroj and Coibion, Olivier},
  journal =	 {Journal of Monetary Economics},
  volume =	 112,
  pages =	 {41--56},
  year =	 2020,
  publisher =	 {Elsevier}
}

@article{scrimgeour2015commodity,
  title =	 {Commodity price responses to monetary policy surprises},
  author =	 {Scrimgeour, Dean},
  journal =	 {American Journal of Agricultural Economics},
  volume =	 97,
  number =	 1,
  pages =	 {88--102},
  year =	 2015,
  publisher =	 {Wiley Online Library}
}

@article{chambers1996theory,
  title =	 {A theory of commodity price fluctuations},
  author =	 {Chambers, Marcus J. and Bailey, Roy E.},
  journal =	 {Journal of Political Economy},
  volume =	 104,
  number =	 5,
  pages =	 {924--957},
  year =	 1996,
  publisher =	 {The University of Chicago Press}
}

@article{frankel1986expectations,
  title =	 {Expectations and commodity price dynamics: The overshooting
                  model},
  author =	 {Frankel, Jeffrey A.},
  journal =	 {American Journal of Agricultural Economics},
  volume =	 68,
  number =	 2,
  pages =	 {344--348},
  year =	 1986,
  publisher =	 {Oxford University Press}
}

@article{newbery1982optimal,
  title =	 {Optimal commodity stock-piling rules},
  author =	 {Newbery, David M. G. and Stiglitz, Joseph E.},
  journal =	 {Oxford Economic Papers},
  pages =	 {403--427},
  volume =       {34},
  year =	 1982,
  number =       {3},
}

@article{schorfheide2018identifying,
  title =	 {Identifying long-run risks: A Bayesian mixed-frequency
                  approach},
  author =	 {Schorfheide, Frank and Song, Dongho and Yaron, Amir},
  journal =	 {Econometrica},
  volume =	 86,
  number =	 2,
  pages =	 {617--654},
  year =	 2018,
  publisher =	 {Wiley Online Library}
}

@article{wright1982economic,
  title =	 {The economic role of commodity storage},
  author =	 {Wright, Brian D. and Williams, Jeffrey C.},
  journal =	 {The Economic Journal},
  volume =	 92,
  number =	 367,
  pages =	 {596--614},
  year =	 1982,
  publisher =	 {JSTOR}
}

@article{samuelson1971stochastic,
  title =	 {Stochastic speculative price},
  author =	 {Samuelson, Paul A.},
  journal =	 {Proceedings of the National Academy of Sciences},
  volume =	 68,
  number =	 2,
  pages =	 {335--337},
  year =	 1971,
  publisher =	 {National Acad Sciences}
}

@article{scheinkman1983simple,
  title =	 {A simple competitive model with production and storage},
  author =	 {Scheinkman, Jos{\'e} A. and Schechtman, Jack},
  journal =	 {The Review of Economic Studies},
  volume =	 50,
  number =	 3,
  pages =	 {427--441},
  year =	 1983,
  publisher =	 {Wiley-Blackwell}
}

@article{fortuin1971correlation,
  title =	 {Correlation inequalities on some partially ordered sets},
  author =	 {Fortuin, C. M. and Kasteleyn, P. W. and Ginibre, J.},
  journal =	 {Communications in Mathematical Physics},
  volume =	 22,
  number =	 2,
  pages =	 {89--103},
  year =	 1971
}

@article{arseneau2013commodity,
  title =	 {Commodity price movements in a general equilibrium model of
                  storage},
  author =	 {Arseneau, David M. and Leduc, Sylvain},
  journal =	 {IMF Economic Review},
  volume =	 61,
  number =	 1,
  pages =	 {199--224},
  year =	 2013
}

@article{cafiero2011the,
  title =	 {The empirical relevance of the competitive storage model},
  author =	 {Cafiero, Carlo and Bobenrieth, Eugenio S.A. and Bobenrieth, Juan R.A. and Wright, Brian D.},
  journal =	 {Journal of Econometrics},
  volume =	 162,
  number =	 1,
  pages =	 {44--54},
  year =	 2011
}

@article{gruber2018interest,
  title =	 {Interest rates and the volatility and correlation of commodity
                  prices},
  author =	 {Gruber, Joseph W. and Vigfusson, Robert J.},
  journal =	 {Macroeconomic Dynamics},
  volume =	 22,
  number =	 3,
  pages =	 {600--619},
  year =	 2018
}

@article{deaton1992on,
  title =	 {On the behavior of commodity prices},
  author =	 {Deaton, Angus and Laroque, Guy},
  journal =	 {The Review of Economic Studies},
  year =	 1992,
  volume =	 59,
  number = 1,
  pages =	 {1--23}
}

@article{deaton1996competitive,
  title =	 {Competitive storage and commodity price dynamics},
  author =	 {Deaton, Angus and Laroque, Guy},
  journal =	 {Journal of Political Economy},
  volume =	 104,
  number =	 5,
  pages =	 {896--923},
  year =	 1996
}

@InCollection{frankel2008,
  author =	 {Jeffrey A. Frankel},
  booktitle =	 {Asset Prices and Monetary Policy},
  publisher =	 {National Bureau of Economic Research},
  title =	 {The effect of monetary policy on real commodity prices},
  year =	 2008,
  chapter =	 7,
  editor =	 {John Y. Campbell},
  pages =	 {291--333},
  abstract =	 {Commodity prices are back. This paper looks at connections
                  between monetary policy, and agricultural and mineral
                  commodities. We begin with the monetary influences on
                  commodity prices, first for a large country such as the United
                  States, then smaller countries. The claim is that low real
                  interest rates lead to high real commodity prices. The theory
                  is an analogy with Dornbusch overshooting. The relationship
                  between real interest rates and real commodity prices is also
                  supported empirically. One channel through which this effect
                  is accomplished is a negative effect of interest rates on the
                  desire to carry commodity inventories. The paper concludes
                  with a consideration of implications for monetary policy.},
  url =		 {http://ideas.repec.org/h/nbr/nberch/5374.html},
}

@article{frankel2008monetary,
  title =	 {Monetary policy and commodity prices},
  author =	 {Frankel, Jeffrey A.},
  journal =	 {VoxEU},
  year =	 2008
}

@article{frankel2008explanation,
  title =	 {An explanation for soaring commodity prices},
  author =	 {Frankel, Jeffrey A.},
  journal =	 {VoxEU},
  year =	 2008
}

@article{carroll2006the,
  title =	 {The method of endogenous gridpoints for solving dynamic
                  stochastic optimization problems},
  author =	 {Carroll, Christopher D.},
  journal =	 {Economics Letters},
  volume =	 91,
  number =	 3,
  pages =	 {312--320},
  year =	 2006
}

@Article{ma2020income,
  author =	 {Ma, Qingyin and Stachurski, John and Toda, Alexis Akira},
  title =	 {The income fluctuation problem and the evolution of wealth},
  journal =	 {Journal of Economic Theory},
  year =	 2020,
  volume =	 187,
  pages =	 105003,
  publisher =	 {Elsevier},
}

@Book{krasnosel2012approximate,
  title =	 {Approximate Solution of Operator Equations},
  publisher =	 {Springer Netherlands},
  year =	 2012,
  author =	 {Krasnosel'skii, M. A. and Vainikko, G. M. and Zabreyko,
                  R.P. and Ruticki, Y. B. and Stet'senko, V. V.},
}

@article{blackwell1965discounted,
  title =	 {Discounted dynamic programming},
  author =	 {Blackwell, David},
  journal =	 {Annals of Mathematical Statistics},
  volume =	 36,
  number =	 1,
  pages =	 {226--235},
  year =	 1965,
  publisher =	 {JSTOR}
}

@Article{feinberg2014fatou,
  author =	 {Feinberg, Eugene A. and Kasyanov, Pavlo O. and Zadoianchuk,
                  Nina V.},
  title =	 {{F}atou's lemma for weakly converging probabilities},
  journal =	 {Theory of Probability \& Its Applications},
  year =	 2014,
  volume =	 58,
  number =	 4,
  pages =	 {683--689},
  publisher =	 {SIAM},
}

@book{stachurski2009economic,
  title =	 {Economic Dynamics: Theory and Computation},
  author =	 {Stachurski, John},
  year =	 2009,
  publisher =	 {MIT Press}
}

@Article{tauchen1986finite,
  author =	 {Tauchen, George},
  title =	 {Finite state {M}arkov-chain approximations to univariate and
                  vector autoregressions},
  journal =	 {Economics Letters},
  year =	 1986,
  volume =	 20,
  number =	 2,
  pages =	 {177--181},
  publisher =	 {Elsevier},
}

@Article{Anzu13,
  author =	 {Alessio Anzuini and Marco J. Lombardi and Patrizio Pagano},
  title =	 {The impact of monetary policy shocks on commodity prices},
  year =	 2013,
  volume =	 9,
  number =	 3,
  pages =	 {125--150},
  url =		 {http://ideas.repec.org/a/ijc/ijcjou/y2013q3a4.html},
  abstract =	 {Global monetary conditions are often cited as a driver of
                  commodity prices. This paper investigates the empirical
                  relationship between U.S. monetary policy and commodity prices
                  by means of a standard VAR system, commonly used in analyzing
                  the effects of monetary policy shocks. The results suggest
                  that expansionary U.S. monetary policy shocks drive up the
                  broad commodity price index and all of its components. While
                  these effects are significant, they do not, however, appear to
                  be overwhelmingly large.},
  journal =	 {International Journal of Central Banking},
}

@Article{Baue23,
  author    = {Michael D. Bauer and Eric T. Swanson},
  journal   = {{NBER} Macroeconomics Annual},
  title     = {A Reassessment of Monetary Policy Surprises and High-Frequency Identification},
  year      = {2023},
  pages     = {87--155},
  volume    = {37},
  doi       = {10.1086/723574},
  publisher = {University of Chicago Press},
}

@Article{Kili22,
  author    = {Lutz Kilian and Xiaoqing Zhou},
  journal   = {Journal of International Money and Finance},
  title     = {Oil prices, exchange rates and interest rates},
  year      = {2022},
  pages     = {102679},
  volume    = {126},
  doi       = {10.1016/j.jimonfin.2022.102679},
  publisher = {Elsevier {BV}},
}

@InCollection{Rame16,
  author    = {Valerie A. Ramey},
  booktitle = {Handbook of Macroeconomics},
  publisher = {Elsevier},
  title     = {Macroeconomic shocks and their propagation},
  year      = {2016},
  chapter   = {2},
  editor    = {John B. Taylor and Harald Uhlig},
  pages     = {71--162},
  volume    = {2},
  doi       = {10.1016/bs.hesmac.2016.03.003},
}

@TechReport{Goue22a,
  author      = {Christophe Gouel and Nicolas Legrand},
  institution = {CEPII},
  title       = {The role of storage in commodity markets: Indirect inference based on grains data},
  year        = {2022},
  number      = {2022-04},
  type        = {Working Paper},
  abstract    = {Understanding commodity prices dynamics is of crucial importance for assessing the persistence of cost-push costs or for countries dependent on commodity exports. Unfortunately, despite decades of research, the workhorse theoretical model in the field, the rational expectations storage model, is yet to be empirically validated. This paper provides the first full empirical test of the storage model. We first build a new storage model featuring a supply response, long-run demand and cost trends, and four structural shocks. We then develop a flexible empirical approach which relies on the indirect inference method and exploits the joint dynamics of prices and quantities unlike previous estimations which only use price information. The information contained in quantities is essential to relax restrictive identifying assumptions and empirically assess the overall consistency of the model's new features. Finally, we carry out a structural estimation on the aggregate index of the world most important staple food products: maize, rice, soybeans, and wheat. The results show that our extended storage model is consistent with most of the moments in the data, including the high price autocorrelation of which up to 42\% can be explained by the transfer of inventories over time. They also show that, although for these commodities supply shocks are the main drivers of market dynamics, over the past 60 years all price spikes have been associated with large positive demand shocks.},
  keywords    = {Commodity Price Dynamics ; Indirect Inference ; Monte Carlo Analysis ; Storage ; gouel},
  url         = {http://www.cepii.fr/CEPII/fr/publications/wp/abstract.asp?NoDoc=13445},
}

@Article{Rosa14,
  author     = {Rosa, Carlo},
  journal    = {Energy Economics},
  title      = {The high-frequency response of energy prices to {U.S}. monetary policy: {Understanding} the empirical evidence},
  year       = {2014},
  issn       = {0140-9883},
  pages      = {295--303},
  volume     = {45},
  abstract   = {This paper examines the impact of conventional and unconventional monetary policy on energy prices using an event study with intraday data. Three measures for monetary policy surprises are used: 1) the surprise change to the current federal funds target rate, 2) the surprise component to the future path of policy, and 3) the unanticipated announcement of future large-scale asset purchases (LSAP). Estimation results show that monetary policy surprises have economically important and highly significant effects on the level and volatility of energy futures prices and their trading volumes. I find that, on average, a hypothetical unanticipated 100-basis-point hike in the federal funds target rate is associated with roughly a 3\% decrease in West Texas Intermediate crude oil prices. I also document that, in a narrow window around the FOMC meeting, the Federal Reserve's LSAP1 and LSAP2 programs have a cumulative financial market impact on crude oil equivalent to an unanticipated cut in the federal funds target rate of 156 basis points.},
  doi        = {10.1016/j.eneco.2014.06.011},
  keywords   = {Monetary policy, Energy prices, Federal funds and oil futures},
  language   = {en},
  shorttitle = {The high-frequency response of energy prices to {U}.{S}. monetary policy},
  url        = {https://www.sciencedirect.com/science/article/pii/S0140988314001443},
  urldate    = {2023-04-26},
}

@Article{Fran85,
  author    = {Frankel, Jeffrey A. and Hardouvelis, Gikas A.},
  journal   = {Journal of Money, Credit, and Banking},
  title     = {Commodity prices, money surprises and {Fed} credibility},
  year      = {1985},
  issn      = {00222879},
  number    = {4},
  pages     = {425--438},
  volume    = {17},
  doi       = {10.2307/1992439},
  keywords  = {Prices 2270, Domestic Monetary Theory, Empirical Studies Illustrating Theory 3112},
}

@Article{Bobe17,
  author    = {Bobenrieth, Eugenio S.A. and Bobenrieth, Juan R.A. and Ernesto A. Guerra and Brian D. Wright and Di Zeng},
  journal   = {American Journal of Agricultural Economics},
  title     = {Putting the empirical commodity storage model back on track: crucial implications of a ``negligible'' trend},
  year      = {2021},
  number    = {3},
  pages     = {1034--1057},
  volume    = {103},
  doi       = {10.1111/ajae.12133},
  publisher = {Wiley},
}

@Article{Cafi15a,
  author   = {Cafiero, Carlo and Bobenrieth, Eugenio S.A. and Bobenrieth, Juan R.A. and Wright, Brian D.},
  journal  = {American Journal of Agricultural Economics},
  title    = {Maximum likelihood estimation of the standard commodity storage model: Evidence from sugar prices},
  year     = {2015},
  number   = {1},
  pages    = {122--136},
  volume   = {97},
  abstract = {We present a Maximum Likelihood estimator for the standard commodity storage model with stockouts, based on prices only. While it imposes no additional assumptions on the model, the Maximum Likelihood estimator has small sample properties superior to those of the Pseudo Maximum Likelihood approach. We provide a proof that is crucial for applying our estimator to the model with normal harvests and possibly unbounded prices, thereby eliminating an inconsistency in the empirical storage model literature. Applying our Maximum Likelihood estimator to a series of annual sugar prices from 1921 to 2009 provides new evidence for the empirical relevance of the standard storage model. Our results imply a cutoff price at which discretionary stocks go to zero, which is higher than the price obtained by applying the Pseudo Maximum Likelihood estimator to the same data. The implied frequency of stockouts is lower, and price correlations, skewness, and kurtosis implied by the model closely match those seen in the annual sugar price data. We find the price of sugar to be highly responsive to small changes in consumption. When inventories are not available to buffer the effects of negative supply shocks on consumption, prices must increase sharply to induce the consumption changes needed to clear the market. Our results show why production shocks are not necessarily aligned with price spikes, the same production shock can give rise to very different price responses, depending on whether or not there are sufficient stocks to buffer its impact.},
  doi      = {10.1093/ajae/aau068},
}

@Article{Basa16,
  author    = {Suleyman Basak and Anna Pavlova},
  title     = {A model of financialization of commodities},
  year      = {2016},
  volume    = {71},
  number    = {4},
  pages     = {1511--1556},
  doi       = {10.1111/jofi.12408},
  url       = {http://dx.doi.org/10.1111/jofi.12408},
  journal   = {The Journal of Finance},
  publisher = {Wiley-Blackwell},
}

@Article{Fama87,
  author    = {Fama, Eugene F. and French, Kenneth R.},
  title     = {Commodity futures prices: Some evidence on forecast power, premiums, and the theory of storage},
  year      = {1987},
  volume    = {60},
  number    = {1},
  pages     = {55--73},
  issn      = {00219398},
  url       = {http://www.jstor.org/stable/2352947},
  abstract  = {We examine two models of commodity futures prices. The theory of storage explains the difference between contemporaneous futures and spot prices (the basis) in terms of interest changes, ware-housing costs, and convenience yields. We find evidence of variation in the basis in response to both interest rates and seasonals in convenience yields. The second model splits a futures price into an expected premium and a forecast of the maturity spot price. We find evidence of forecast power for 10 of 21 commodities and time-varying expected premiums for five commodities.},
  journal   = {The Journal of Business},
  publisher = {The University of Chicago Press},
}

@Article{Koop96,
  author    = {Koop, Gary and Pesaran, M. Hashem and Potter, Simon M.},
  title     = {Impulse response analysis in nonlinear multivariate models},
  year      = {1996},
  volume    = {74},
  number    = {1},
  pages     = {119--147},
  issn      = {0304-4076},
  doi       = {10.1016/0304-4076(95)01753-4},
  url       = {http://www.sciencedirect.com/science/article/B6VC0-3XDS2R2-6/2/f8b1713c81765453e1a5e9a52593b52e},
  abstract  = {This paper presents a unified approach to impulse response analysis which can be used for both linear and nonlinear multivariate models. After discussing the advantages and disadvantages of traditional impulse response functions for nonlinear models, we introduce the concept of a generalized impulse response function which, we argue, is applicable to both linear and nonlinear models. We develop measures of shock persistence and asymmetric effects of shocks derived from the generalized impulse response function. We illustrate the use of these measures for a nonlinear bivariate model of US output and the unemployment rate.},
  journal   = {Journal of Econometrics},
  keywords  = {Persistence, Impulse response functions, Threshold autoregressive models, Nonlinear vector autoregressions},
}

@article{Rout00,
  author = {Routledge, Bryan R. and Seppi, Duane J. and Spatt, Chester S.},
  journal = {The Journal of Finance},
  title = {Equilibrium Forward Curves for Commodities},
  year = {2000},
  issn = {0022-1082},
  number = {3},
  pages = {1297--1338},
  volume = {55},
  abstract = {We develop an equilibrium model of the term structure of forward prices for storable commodities. As a consequence of a nonnegativity constraint on inventory, the spot commodity has an embedded timing option that is absent in forward contracts. This option's value changes over time due to both endogenous inventory and exogenous transitory shocks to supply and demand. Our model makes predictions about volatilities of forward prices at different horizons and shows how conditional violations of the "Samuelson effect" occur. We extend the model to incorporate a permanent second factor and calibrate the model to crude oil futures data.},
  copyright = {Copyright © 2000 American Finance Association},
  file = {Rout00.pdf:Rout00.pdf:PDF},
  groups = {Futures, With rational expectations, Convenience yield},
  jstor_articletype = {primary_article},
  jstor_formatteddate = {Jun., 2000},
  publisher = {Blackwell Publishing for the American Finance Association},
  url = {http://www.jstor.org/stable/222453},
}

@article{Casa05a,
  author = {Casassus, Jaime and Collin-Dufresne, Pierre},
  journal = {The Journal of Finance},
  title = {Stochastic Convenience Yield Implied from Commodity Futures and Interest Rates},
  year = {2005},
  issn = {1540-6261},
  number = {5},
  pages = {2283--2331},
  volume = {60},
  abstract = {We characterize a three-factor model of commodity spot prices, convenience yields, and interest rates, which nests many existing specifications. The model allows convenience yields to depend on spot prices and interest rates. It also allows for time-varying risk premia. Both may induce mean reversion in spot prices, albeit with very different economic implications. Empirical results show strong evidence for spot-price level dependence in convenience yields for crude oil and copper, which implies mean reversion in prices under the risk-neutral measure. Silver, gold, and copper exhibit time variation in risk premia that implies mean reversion of prices under the physical measure.},
  copyright = {© 2005 the American Finance Association},
  doi = {10.1111/j.1540-6261.2005.00799.x},
  file = {:Casa05a.pdf:PDF},
  groups = {Commodity prices, Convenience yield},
  language = {en},
  url = {https://onlinelibrary.wiley.com/doi/abs/10.1111/j.1540-6261.2005.00799.x},
  urldate = {2024-01-09},
}

@techreport{Boda21,
  author = {Vincent Bodart and François Courtoy and Erica Perego},
  institution = {CEPII},
  title = {{World Interest Rates and Macroeconomic Adjustments in Developing Commodity Producing Countries}},
  year = {2021},
  number = {2021-01},
  type = {Working Paper},
  abstract = {With commodities becoming international financial securities, commodity prices are affected by the international financial cycle. With this evidence in mind, this paper reconsiders the macroeconomic adjustment of developing commodity-exporting countries to changes in world interest rates. We proceed by building a model of a small open economy that produces a non-tradable good and a storable tradable commodity. The difference with standard models of small open economies lies in the endogenous response of commodity prices which -due to commodity storage- adjust to variations in international interest rates. We find that the endogenous response of commodity prices amplifies the reaction of commodity exporting countries to international monetary shocks. This suggests that commodity exporting countries are more vulnerable to unfavourable international monetary disturbances than other small open economies. In particular, because of the existence of the commodity price channel, even those small open commodity-exporting economies that are disconnected from international financial markets can be affected by the international financial cycle.},
  file = {:Boda21.pdf:PDF},
  groups = {and commodities},
  keywords = {Storable Commodity; International Financial Shock; Developing Economies},
  url = {https://ideas.repec.org/p/cii/cepidt/2021-01.html},
}

@article{Tume16,
  author = {Tumen, Semih and Unalmis, Deren and Unalmis, Ibrahim and Unsal, D. Filiz},
  journal = {Energy Economics},
  title = {Taxing fossil fuels under speculative storage},
  year = {2016},
  issn = {0140-9883},
  pages = {64--75},
  volume = {53},
  abstract = {Long-term environmental consequences of taxing fossil fuel usage have been extensively studied in the literature. However, these taxes may also impose several short-run macroeconomic policy challenges, the nature of which remains underexplored. This paper investigates the mechanisms through which environmental taxes on fossil fuel usage can affect the main macroeconomic variables in the short-run. We concentrate on a particular mechanism: speculative storage. Formulating and using a dynamic stochastic general equilibrium (DSGE) model, calibrated for the United States, with an explicit storage facility and nominal rigidities, we show that in designing environmental tax policies it is crucial to account for the fact that fossil fuel prices are subject to speculation. The existence of forward-looking speculators in the model improves the effectiveness of tax policies in reducing fossil fuel usage. Improved policy effectiveness, however, is costly: it drives inflation and interest rates up, while impeding output. Based on this tradeoff, we seek an answer to the question how monetary policy should interact with environmental tax policies in our DSGE model of fossil fuel storage. We show that, in an environment with no speculative storers, monetary policy should respond to output along with CPI inflation in order to minimize the welfare losses brought by taxes. However, when the storage facility is activated, responding to output in the monetary policy rule becomes less desirable.},
  doi = {10.1016/j.eneco.2014.12.017},
  file = {:Tume16.pdf:PDF},
  groups = {Energy, Storage},
  keywords = {Fossil fuel, Environmental taxes, Speculative storage, DSGE},
  language = {en},
  series = {Energy {Markets}},
  url = {https://www.sciencedirect.com/science/article/pii/S014098831400334X},
  urldate = {2023-01-05},
}

@Article{Gibs90,
  author   = {Gibson, Rajna and Schwartz, Eduardo S.},
  journal  = {The Journal of Finance},
  title    = {Stochastic {Convenience} {Yield} and the {Pricing} of {Oil} {Contingent} {Claims}},
  year     = {1990},
  issn     = {1540-6261},
  number   = {3},
  pages    = {959--976},
  volume   = {45},
  abstract = {This paper develops and empirically tests a two-factor model for pricing financial and real assets contingent on the price of oil. The factors are the spot price of oil and the instantaneous convenience yield. The parameters of the model are estimated using weekly oil futures contract prices from January 1984 to November 1988, and the model's performance is assessed out of sample by valuing futures contracts over the period November 1988 to May 1989. Finally, the model is applied to determine the present values of one barrel of oil deliverable in one to ten years time.},
  doi      = {10.1111/j.1540-6261.1990.tb05114.x},
}

@Article{Schw97,
  author   = {Schwartz, Eduardo S.},
  journal  = {The Journal of Finance},
  title    = {The Stochastic Behavior of Commodity Prices: Implications for Valuation and Hedging},
  year     = {1997},
  number   = {3},
  pages    = {923--973},
  volume   = {52},
  abstract = {In this article we compare three models of the stochastic behavior of commodity prices that take into account mean reversion, in terms of their ability to price existing futures contracts, and their implication with respect to the valuation of other financial and real assets. The first model is a simple one-factor model in which the logarithm of the spot price of the commodity is assumed to follow a mean reverting process. The second model takes into account a second stochastic factor, the convenience yield of the commodity, which is assumed to follow a mean reverting process. Finally, the third model also includes stochastic interest rates. The Kalman filter methodology is used to estimate the parameters of the three models for two commercial commodities, copper and oil, and one precious metal, gold. The analysis reveals strong mean reversion in the commercial commodity prices. Using the estimated parameters, we analyze the implications of the models for the term structure of futures prices and volatilities beyond the observed contracts, and for hedging contracts for future delivery. Finally, we analyze the implications of the models for capital budgeting decisions.},
  doi      = {10.1111/j.1540-6261.1997.tb02721.x},
}

@TechReport{Schm14a,
  author = {Stephanie Schmitt-Grohé and Mart\'in Uribe},
  title  = {Finite-State Approximation of {VAR} Processes: A Simulation Approach},
  year   = {2014},
  type   = {note},
  url    = {https://www.columbia.edu/~mu2166/tpm/tpm.html},
  note   = {Available from \url{https://www.columbia.edu/~mu2166/tpm/tpm.html}},
}

@Article{Cox81,
  author    = {Cox, John C. and Ingersoll, Jonathan E. and Ross, Stephen A.},
  journal   = {Journal of Financial Economics},
  title     = {The relation between forward prices and futures prices},
  year      = {1981},
  issn      = {0304-405X},
  number    = {4},
  pages     = {321--346},
  volume    = {9},
  doi       = {10.1016/0304-405X(81)90002-7},
  publisher = {Elsevier BV},
}

@Article{Gard76,
  author              = {Gardner, Bruce L.},
  journal             = {American Journal of Agricultural Economics},
  title               = {Futures Prices in Supply Analysis},
  year                = {1976},
  issn                = {0002-9092},
  number              = {1},
  pages               = {81--84},
  volume              = {58},
  copyright           = {Copyright © 1976 Agricultural & Applied Economics Association},
  doi                 = {10.2307/1238581},
  url                 = {http://www.jstor.org/stable/1238581},
}

@Article{Goue15,
  author   = {Gouel, Christophe and Legrand, Nicolas},
  journal  = {Journal of Applied Econometrics},
  title    = {Estimating the Competitive Storage Model with Trending Commodity Prices},
  year     = {2017},
  month    = jun,
  number   = {4},
  pages    = {744--763},
  volume   = {32},
  doi      = {10.1002/jae.2553},
}

\end{document}